\newlength{\defbaselineskip}
\newcommand{\setlinespacing}[1]%
                                                                          {\setlength{\baselineskip}{#1 \defbaselineskip}}
\theoremstyle{plain}
\newtheorem{example}{Example}
\newtheorem{prop}{Proposition}
\theoremstyle{definition}
\theoremstyle{remark}
\numberwithin{equation}{section}
\begin{document}


\counterwithout{equation}{section}


\title{Affine differential geometry and smoothness maximization as tools for identifying geometric movement primitives}

\author[1]{Felix Polyakov \footnote{Department of Mathematics, Bar Ilan University, Ramat-Gan 5290002, Israel\\ $\; \; $ write.to.felixp@gmail.com}}

\maketitle

\begin{abstract}

Neuroscientific studies of drawing-like movements usually analyze neural representation of either geometric (eg. direction, shape) or temporal (eg. speed) features of trajectories rather than trajectory's representation as a whole. This work is about empirically supported mathematical ideas behind splitting and merging geometric and temporal features which characterize biological movements. Movement primitives supposedly facilitate the efficiency of movements' representation in the brain and comply with different criteria for biological movements, among them kinematic smoothness and geometric constraint. Criterion for trajectories' maximal smoothness of arbitrary order $n$ is employed, $n = 3$ is the case of the minimum-jerk model. I derive a class of differential equations obeyed by movement paths for which $n$-th order maximally smooth trajectories have constant rate of accumulating geometric measurement along the drawn path. Constant rate of accumulating equi-affine arc corresponds to compliance with the two-thirds power-law model. Geometric measurement is invariant under a class of geometric transformations and may be chosen to be an arc in certain geometry. Equations' solutions presumably serve as candidates for geometric movement primitives. The derived class of differential equations consists of two parts. The first part is identical for all geometric parameterizations of the path. The second part enforces consistency with desired (geometric) parametrization of curves on solutions of the first part. Equations in different geometries in plane and in space and their known solutions are presented. Connection between geometric invariance, motion smoothness, compositionality and performance of the compromised motor control system is discussed. The derived class of differential equations is a novel tool for discovering candidates for geometric movement primitives.
\end{abstract}


{\small\sf{\bf Keywords:} geometric primitives --- invariance --- compact representation --- smoothness --- affine differential geometry  --- parabola --- logarithmic spiral --- parabolic screw --- elliptic screw}




\section*{Introduction}

\par Various neuroscientific studies have  analyzed geometric features of primates' and humans' drawing-like movements and their representation in the brain. In particular, single neurons and neural populations in motor cortex were found to be tuned to movement direction \cite{Georgopoulos_etc_1982, Schwartz_1993,  Schwartz:1994, Schwartz_Moran_1999, Moran_Schwartz_reaching_1999, Moran_Schwartz_spiral_1999}. Studies of different types of goal-directed movements, eg. movements to targets, sequential hand movements or movements following prescribed paths indicated that the serial order of submovement\footnote{Meaning serial order of implementing certain geometric entity.}, ``aspects of movement'' (meaning aspects of movement shape and target location) and movement fragments are represented in cortical activity \cite{HochermanS.WiseSP:1991, Averbeck_etc_Georgopoulos1:2003, Averbeck_etc_Georgopoulos2:2003, Hatsopoulos.Xu.Amit:2007, Shanecih.etc.Brown.Williams:2012, Hatsopoulos.Amit:2012}. %
\par Pioneering works by Pollick and Shapiro \cite{Pollick_Shapiro_1997} and Handzel and Flash \cite{Handzel_Flash_1999} reported equivalence of the 2/3 power-law model\footnote{The model establishes relationship between movement's speed and curvature. More details are provided further in text.} to moving with constant equi-affine velocity and proposed relevance of non-Euclidian geometry, and in particular equi-affine, to the mechanisms of biological movements. Later studies revealed presence of features characterizing equi-affine invariants in empirical data recorded during production and perception of biological motion \cite{Polyakov:2001, Polyakov_etc_2001, Polyakov:2006, Levit-Binnun.Schechtman.Flash:2006, Flash_Handzel:2007, Dayan.etal.Flash:2007, Polyakov_et_al_B.Cyb:2009, Polyakov_et_al_PLoS_C_B:2009, Maoz:2009, Pollick:2009, Casile:2010, Maoz:2014, Meirovitch.Harris.Dayan.Arieli.Flash:2015}. Relevance of a larger, affine, invariance was introduced to the field of motor control by Bennequin at at. \cite{Bennequin:2009, Fuchs:2010, Pham.Bennequin:2012}.
%

%
\par In addition to the continuous representation of movements, the idea of movement compositionality, i.e. representation of complex movements based on a limited ``alphabet'' of primitive submovements, is analyzed in numerous motor control studies. Studies of monkey and human trajectories suggested that movement primitive can be defined on the operational level as a movement entity that cannot be intentionally stopped before its completion once it has been initiated \cite{Polyakov:2006, Sosnik.Shemesh.Abeles:2007, Polyakov_et_al_PLoS_C_B:2009, sosnik.chaim.flash:2015}. Existence of motor primitives was demonstrated at the level of forces produced by muscles operating on the limbs \cite{Bizzi_Mussa_Ivaldi_Giszter_1991, Giszter.Mussa-Ivaldi.Bizzi:1993, Nichols:1994, Kargo.Giszter:2000, Mussa_Ivaldi_Bizzi_2000,  Mussa_Ivaldi_Solla_2000, Giszter.etal:2007, Giszter.etal:2013}, at the level of muscle synergies \cite{Tresch.Saltiel.Bizzi:1999, d.Avella.Saltiel.Bizzi:2003, Hart.Giszter:2004, Ivanenko.Poppele.Lacquaniti:2004, d.Avella.Portone.Fernandez.Bizzi:2006}, at the level of motion kinematics \cite{Morasso.MussaIvaldi:1982, Flash_Henis_1991, Burdet.Milner:1998, Krebs_et_al_1999, Sanger:2000, Rohrer.Hogan:2003, Fishbach.etc.Houk:2005, Flash.Hochner:2005, Rohrer.Hogan:2006}, at the level of units of computation
in the sensorimotor system \cite{van.Zuelen.Gielen.derGon.Denier:1988, Thoroughman_Shadmehr_2000}, and as a vector cross product between a limb-segment position and a velocity or acceleration \cite{Tanaka.Sejnowski:2014}.
It was proposed that movement primitive is an action of a neuromuscular system controlling automatic synergy whose elements produce stereotypical and repeatable results \cite{Woch.Plamondon:2010, Woch.Plamondon.OReilly:2011}. Decomposition of complex movements into primitives was also implemented for octopus arm movements \cite{Zelman.etc.Flash:2013} and wrist movements in human sign language \cite{Endres:2013}. Recent works analyzed neural representation of movements involving corrective submovements in double-step paradigm \cite{Dickey.Amit.Hatsopoulos:2013, Dipietro.Poizner.Krebs:2014} and provided additional indications that seemingly continuous movements might be represented in the brain at certain hierarchical level in discrete manner.
\par Parabolic shapes were suggested as geometric building blocks of complex drawing-like movements based on mathematical modeling and analysis of kinematic and neurophysiological data of behaving monkeys \cite{Polyakov_etc_2001, Polyakov:2006, Polyakov_et_al_B.Cyb:2009, Polyakov_et_al_PLoS_C_B:2009}. It was shown that analyzed spontaneous movement paths can be represented in a compact way by concatenating parabolic-like shapes. Affine transformations applied to parabolic segments result in parabolic segments. Moreover, any parabolic segment can be mapped to another arbitrary parabolic segment by unique affine transformation such that segments' initial (and final) points are matched as described further in text\footnote{In the part related to affine parametrization of curves}. So, provided a direction of motion, a sequence of concatenated parabolic-like shapes can be obtained by applying a unique sequence of affine transformations to a single parabolic template with prescribed starting point and thus \emph{simplifying the representation of complex movements in the brain}. Such representation could mean that geometric movement primitive is a set of transformations endowed with a primitive geometric shape upon which the transformations are applied. Simulated pattern composed of parabolic segments obtained by applying a sequence of affine transformations to a single parabolic segment resembled actual movement path performed by monkey \cite{Polyakov:2006, Polyakov_et_al_B.Cyb:2009}.
\par It was observed by Flash and Hogan that planar hand trajectories are smooth while non-smoothness was quantified by the cost functional called jerk leading to the model named ``minimum-jerk'' \cite{Hogan:1984, Flash_Hogan_1985}. Viviani and Flash compared predictions of the minimum-jerk and the 2/3 power-law models for some geometric shapes. Work by Richardson and Flash analyzed relationship between the 2/3 power-law model and smoothness of arbitrary degrees using approximation to a number of shapes \cite{Richardson_Flash_2003}. Later Polyakov et. al. derived differential equation whose solutions based on parametrization with the equi-affine arc serve candidates for providing identical predictions to the minimum-jerk and the 2/3 power-law models \cite{Polyakov:2001, Polyakov_etc_2001, Polyakov:2006, Polyakov_et_al_B.Cyb:2009}.
\par \par Earlier attempts to propose models for duration of a movement or of its part were based on optimization principles, e.g. \cite{Flash_Hogan_1985, Uno.Suzuki.Kawato:1989, Harris_Wolpert_1998, Todorov_Jordan_1998, Tanaka.Krakauer.Quan:2006}. Bennequin et. al. proposed that movement timing and invariance arise from mixture of Euclidian, equi-affine and affine geometries \cite{Bennequin:2009, Fuchs:2010}; in other words movement duration is proportional to mixture of arcs in different geometries. This theory accounted well for the kinematic and temporal features of a number of repeatable drawing and locomotion movements and suggested for the first time that single movement trajectory can be simultaneously represented in different geometries. It was proposed that the equi-affine geometry was the most dominant, affine geometry second most important during drawing, and Euclidian second most important during locomotion.  
\par More recently scale invariance in the neural representation of handwriting movements was observed \cite{Kadmon.Flash:2014} (superposition of equi-affine transformations and uniform scaling\footnote{Uniform scaling needed to enrich planar equi-affine group of transformations to affine is represented by either of the two matrices: 1) uniform scaling without reflection: $\mbox{const} \cdot \left(
\begin{array}{cc}
        1 & 0 \\
        0 & 1
     \end{array} \right)$; 2) uniform scaling with reflection over $y$ axis: $\mbox{const} \cdot \left(      \begin{array}{cc}
        1 & 0 \\
        0 & -1
     \end{array} \right)$, $\mbox{const} \neq 0$.} constitutes the group of affine transformations). Level of activation in different motor areas (M1, PMd, pre SMA) was found to be related to the level of motion smoothness acquired during learning to coarticulate point-to-point segments into complex smooth trajectories \cite{sosnik.flash.sterkin.hauptmann.karni:2014}.
\par Here I expand existing mathematical tools aimed at finding primitive geometric shapes which are related to invariance-smoothness criteria beyond approximation of shapes and beyond the minimum-jerk criterion simultaneously. Candidates for primitive shapes have exact functional description for smoothness of arbitrary degrees and can be identified with constant rate of accumulating a feasible geometric measurement (e.g. equi-affine arc) along movement's path and its invariance. So trajectories along those primitive shapes would possess a smoothness feature that was observed in biological motion. Classes of such primitive shapes presumably composing more complex trajectory paths are invariant under classes of geometric transformations thus being able to provide compact representation of complex movement paths in the brain. Given ideas of employing multiple geometric arcs for representing movements \cite{Bennequin:2009, Fuchs:2010}, the derived method of identifying geometric primitives is demonstrated here for invariance in different geometries and the reader can further apply this ready to use machinery on his own for geometric measurements not mentioned in this work.
\subsection*{Prerequisites for the mathematical problem from the motor control studies}
\par Trajectory's smoothness criterion was initially defined as minimization of the integrated squared rate of change of acceleration called also movement jerk \cite{Hogan:1984, Flash_Hogan_1985}, namely:
 \begin{equation}\label{minimum_jerk_classical}
    \displaystyle\int\limits_0^T \left\{
        \left[\frac{d^3 x}{dt^3}\right]^2 + \left[\frac{d^3 y}{dt^3}\right]^2\right\}\, dt\, .
 \end{equation}
The information about movement's trajectory can be split into two parts: (1) geometric specification called also \emph{movement path} and (2) temporal specification defined by a function relating each moment of time to the location on the movement path. The temporal specification is fully determined by the speed of motion along the path. In the original works on the \emph{minimum-jerk} model \cite{Hogan:1984, Flash_Hogan_1985} maximally smooth trajectory is constrained by a starting point, a via-point through which the path has to pass and the end point. So the criterion of minimizing the cost functional in \eqref{minimum_jerk_classical} is endowed with point-wise kinematic constraint on optimal trajectory that passes through one or more via-points. Therefore in the original formulation of the model the entire continuous path of the trajectory has to be revealed simultaneously with identifying movement speed\footnote{The $x(t)$ and $y(t)$ components of the trajectories constrained by via-points and minimizing the cost functional \eqref{minimum_jerk_classical} are composed of pieces of 5th order polynomials with respect to time,  the 3rd order derivatives of $x(t)$, $y(t)$ are continuous \cite{Flash_Hogan_1985}. Minimum-jerk trajectories with a single via-point can be  well approximated with parabolic segments \cite{Polyakov:2006, Polyakov_et_al_B.Cyb:2009} and satisfy isochrony principle stating that different movement portions have nearly the same duration independently of their extent \cite{viviani.terzuolo:1982, Bennequin:2009}. Movement durations from the start to the via-point and from the via-point to the end-point are very similar \cite{Polyakov:2006, Polyakov_et_al_B.Cyb:2009}.}. The minimum-jerk model is widely used and mentioned in different motor control studies.
\par According to the \emph{constrained minimum-jerk} model proposed by Todorov and Jordan \cite{Todorov_Jordan_1998} hand movements tend to maximize the smoothness of drawing or, in other words, minimize the jerk cost
\begin{equation}\label{minimum_jerk_constrained_Todorov}
    \displaystyle\int\limits_0^T \left\{
        \left[\frac{d ^3 x(\sigma_{eu}(t))}{dt^3}\right]^2 + \left[\frac{d ^3 y(\sigma_{eu}(t))}{dt^3}\right]^2 + \left[\frac{d ^3 z(\sigma_{eu}(t))}{dt^3}\right]^2\right\}\, dt
 \end{equation}
 for the prescribed trajectory path $\{x(\sigma_{eu}), y(\sigma_{eu}), z(\sigma_{eu})\}$. That is movement path is already provided as an input to the optimization procedure and only the speed profile has to be found to solve the optimization problem. Executed 3-dimensional curve in the cost functional \eqref{minimum_jerk_constrained_Todorov} is parameterized with Euclidian arc-length
\begin{equation}\label{Euclidian_speed}
    \sigma_{eu}(t) = \int_{0}^{t}\sqrt{\dot{x}(\tau)^2 + \dot{y}(\tau)^2 + \dot{z}(\tau)^2 }d\tau
\end{equation}
with dot denoting differentiation with respect to time $t$.
\begin{example}\nonumber
Trajectory $\mathbf{r}(t) = \mathbf{r}(\sigma_{eu}(t)) = [x(\sigma_{eu}(t)),\, y(\sigma_{eu}(t)),\, z(\sigma_{eu}(t))]$ is fully determined by geometric (not involving time) parametrization\footnote{Geometric parametrization with an arc invariant in certain geometry is called natural parametrization; length $\sigma_{eu}$ provides natural parametrization in Euclidian geometry.} $\mathbf{r}(\sigma_{eu})$ and temporal parametrization of the geometric parameter $\sigma_{eu}(t)$ (or equivalently non-negative speed $\dot{\sigma}_{eu}$). Here geometric parameter $\sigma_{eu}$ is length which is continuously mapped onto a curve described by 3-dimensional differentiable vector function.  \;\;\; $\Box$
\end{example}
%
\par The \emph{2/3 power-law} proposed by Lacquaniti, Terzuolo and Viviani \cite{Lacquaniti_Terzuolo_Viviani_1983} establishes a local kinematic constraint on movements. It describes a relationship between geometric properties of movement path and speed of motion along that path, namely
$$K = \mbox{Speed} \cdot \mbox{Curvature}^{1/3} = \mbox{Angular speed} \cdot \mbox{Curvature}^{-2/3} \, , $$
where $K$ is piece-wise constant, speed and curvature are Euclidian.
Empirical observations of the 2/3 power-law model were interpreted as evidence for movement segmentation \cite{Lacquaniti_Terzuolo_Viviani_1983}. The 2/3 power law was also demonstrated in visual perception \cite{Viviani_Stucchi_1992, Levit-Binnun.Schechtman.Flash:2006, Dayan.etal.Flash:2007, Casile:2010, Meirovitch.Harris.Dayan.Arieli.Flash:2015} and locomotion \cite{Vieilledent.Kerlirzin.Dalbera.Berthoz:2001, Ivanenko.Grasso.Macellari.Lacquaniti:2002} studies. Segmentation of hand movements based on powers of trajectory curvature has recently been analyzed in \cite{Endres:2013}.
\par Euclidian speed $\dot{\sigma}_{eu}$ minimizing the cost functional with an arbitrary order of smoothness $n$
 \begin{equation}\label{cost_function_Richardson}
    \displaystyle\int\limits_0^T \left\{
        \left[\frac{d ^n x(t)}{dt^n}\right]^2 + \left[\frac{d ^n y(t)}{dt^n}\right]^2\right\}\, dt
 \end{equation}
was compared to the experimental data for planar point-to-point movements\footnote{An assumption behind some kinematic models for point-to-point movements in plane is that the movement path is a straight line, speed and acceleration of motion at the start and end points of the trajectory are zero.} \cite{Richardson_Flash_2003}. Approximated predictions of movement  speed which minimizes the cost functional with arbitrary order $n$ \eqref{cost_function_Richardson} along a number of periodic paths were derived and compared to the predictions of the 2/3 power-law and experimental data \cite{Richardson_Flash_2003}.
\vspace{0.3cm}
\par From now on differentiation with respect to parameter $\sigma$ \footnote{Usually denotes here geometric measurement along a path.} is denoted with primes and numbers in brackets while differentiation with respect to $t$ \footnote{Denotes here time or an arbitrary parameter.} up to order 3 is denoted with dots:
\begin{eqnarray*}
    f'(\sigma) & \equiv & \frac{df}{d\sigma}, \, f''(\sigma) \equiv \frac{d^2f}{d\sigma^2},\, f'''(\sigma) \equiv \frac{d^3f}{d\sigma^3},\, f^{(k)}(\sigma) \equiv \frac{d^kf}{d\sigma^k} \\
    \dot{f}(\sigma(t)) & \equiv & \frac{df}{dt},\, \ddot{f}(\sigma(t)) \equiv \frac{d^2f}{dt^2},\, \dddot{f}(\sigma(t)) \equiv \frac{d^3f}{dt^3}\;.
\end{eqnarray*}
The notation will be reminded further in text.
\vspace{0.3cm}
\par The 2/3 power-law model
is equivalent to the statement that the equi-affine velocity\footnote{All necessary formulae from affine differential geometry are provided further in text. A general background for the notions of equi-affine geometry which are used in this work can be found elsewhere, eg. \cite{Shirokovy_1959, Guggenheimer:1977}. The book \cite{Shirokovy_1959} provides the most comprehensive treatise on affine differential geometry that I am aware of. English translation
 of relevant parts of \cite{Shirokovy_1959} can be obtained from the author of the
 manuscript (FP) for non-commercial use in research and teaching. Some parts of \cite{Shirokovy_1959} are translated into English in Appendix A of \cite{Polyakov:2006}. } \eqref{eq:ea_speed} of drawing movements is piece-wise constant \cite{Pollick_Shapiro_1997, Handzel_Flash_1999, Polyakov:2001, Polyakov_etc_2001, Polyakov:2006, Flash_Handzel:2007, Polyakov_et_al_B.Cyb:2009}: $\dot{\sigma}_{ea} = \mathrm{const}$.
Based on the results of empirical studies related to interpretation of the 2/3 power-law in terms of differential geometry and their extension, equi-affine and affine arcs have become relevant parametrization in analysis of biological movements and their neural representation \cite{Pollick_Shapiro_1997, Handzel_Flash_1999, Polyakov:2001, Polyakov_etc_2001, Polyakov:2006, Dayan.etal.Flash:2007, Flash_Handzel:2007, Polyakov_et_al_B.Cyb:2009, Polyakov_et_al_PLoS_C_B:2009, Bennequin:2009, Casile:2010, Fuchs:2010}.
\par Equivalence of the 2/3 power-law to constancy of the planar equi-affine velocity was extrapolated by Pollick et al. \cite{Pollick:2009} to the 1/6 power-law equivalent to the conservation of the spatial equi-affine velocity \eqref{eq:ea3_speed}. Empirical validity of the 1/6 power-law for both action and perception was verified in the works by Maoz, Pollick and others \cite{Maoz:2007, Pollick:2009, Maoz:2009, Maoz:2014}.
\par In case of the constrained minimum-jerk model ($n = 3$ in \eqref{cost_function_Richardson}), the problem of finding the paths whose maximally smooth trajectories satisfy the 2/3 power-law model was studied using parametrization of a path with equi-affine arc \cite{Polyakov:2001, Polyakov_etc_2001, Polyakov:2006, Polyakov_et_al_B.Cyb:2009}. Correspondingly the problem was formulated as finding paths whose maximally smooth trajectories have constant equi-affine velocity and reduced to the necessary condition formulated as differential equation:
\begin{equation}\label{equation_equiaffine_example_lower_order}
    {\mathbf{r}'''}^2 - 2 \mathbf{r}'' \cdot \mathbf{r}^{(4)} + 2 \mathbf{r}' \cdot \mathbf{r}^{(5)} = \mathrm{const}
\end{equation}
or, after differentiating both sides
\begin{equation}\label{equation_equiaffine_example}
    \mathbf{r}' \cdot \mathbf{r}^{(6)} = 0\, .
\end{equation}
For curves in plane equations \eqref{equation_equiaffine_example_lower_order} and \eqref{equation_equiaffine_example} respectively have the following form:
\begin{equation}\label{equation_equiaffine_example_in_plane_lower_order}
        x'''^2 + y'''^2 - 2 \left( x'' x^{(4)} + y'' y^{(4)}\right) + 2 \left(x' x^{(5)} + y' y^{(5)} \right) = \mbox{const}\, ,
\end{equation}
\begin{equation}\label{equation_equiaffine_example_in_plane}
        x' x^{(6)} + y' y^{(6)}= 0\, .
\end{equation}
Equation \eqref{equation_equiaffine_example} states that the scalar product between the first and the 6-th order derivatives of the position vector with respect to path's arc is zero. Studies \cite{Polyakov_etc_2001, Polyakov:2006, Polyakov_et_al_B.Cyb:2009} used equation \eqref{equation_equiaffine_example} for  equi-affine arc in plane. Since derivations there are applicable for an arbitrary feasible (strictly monotonous) parametrization $\sigma(t)$, that is when
\begin{equation}\label{equation_strictly_monotoneous}
    \frac{d\sigma(t)}{dt} \neq 0,  \; 0\leq t \leq T  \; ,
\end{equation}
 equation \eqref{equation_equiaffine_example} is suitable for an arbitrary feasible geometric parametrization of a curve and was also used for curves parameterized with the spatial equi-affine arc \cite{Polyakov:2006, Polyakov_et_al_B.Cyb:2009}.
\par Consider the situation of looking for curves along which maximally smooth trajectories would accumulate geometric measurement $\sigma$ with constant rate. So a curve is not given but has to be identified using equation \eqref{equation_equiaffine_example}. Arbitrary vector functions $\mathbf{r}(\cdot)$ that are solutions of \eqref{equation_equiaffine_example} do not necessarily represent curves' parameterization with desirable measurement $\sigma$ as demonstrated further in example \ref{example:parametrization}. Therefore, in the case of looking for curves parameterized with planar and spatial equi-affine arcs equation \eqref{equation_equiaffine_example} was endowed with additional condition:
\begin{eqnarray}
    \label{eq:condition_planar_ea_vel_introduction} \;\;\;\;\;\mathrm{in\; plane:}\;\;\;\;\;\;x' y'' - x'' y' = 1\, ,\;\;\;\; \\
    \nonumber \\
    \label{eq:condition_spatial_ea_vel_introduction} \mathrm{in\; space:}\;\; \left |
     \begin{array}{ccc}
        x' & x'' & x''' \\
        y' & y'' & y''' \\
        z' & z'' & z''' \\
     \end{array}
  \right | = 1
\end{eqnarray}
guaranteeing that the solution parametrizes a curve with the planar or spatial equi-affine arc respectively.
So necessary condition for the vector functions to describe planar paths whose maximally smooth trajectories satisfy the 2/3 power-law model was obtained in the form of the system of two differential equations: \eqref{equation_equiaffine_example} and \eqref{eq:condition_planar_ea_vel_introduction}.
\par Maximally smooth trajectories along parabolic segments have constant equi-affine velocity and zero jerk cost \cite{Polyakov_etc_2001, Polyakov:2006, Polyakov_et_al_B.Cyb:2009}. Note that parabolas constitute the only equi-affine solution of equation \eqref{equation_equiaffine_example} whose coordinates are described by polynomials of $\sigma$ \cite{Polyakov:2001, Polyakov:2006, Polyakov_et_al_B.Cyb:2009}. In general, only curves whose coordinates are 5-th order polynomials of $\sigma$ can have zero jerk cost.
\vspace{0.3cm}
\par Any curve can be geometrically parameterized in infinitely many different ways. Widely known parametrization is based on the Euclidian arc-length. In general, geometric parametrization of a curve can be implemented via a continuous mapping of a scalar parameter onto the curve.
\begin{example}\label{example:length_weighted_with_degree_Euclidian_curvature} A curve without inflection points can be parameterized with the integral of Euclidian speed weighted with Euclidian curvature raised to certain degree: $\tilde{\sigma}_{\beta}(t) = \displaystyle\int_{0}^{t}\dot{\sigma}_{eu}(\tau) \cdot {\left[c_{eu}(\sigma_{eu}(\tau))\right]}^\beta d\tau$. This geometric parametrization is legitimate for any $\beta$. It is measurement of the Euclidian arc when $\beta = 0$ and measurement of the equi-affine arc when $\beta = 1/3$.  \;\;\; $\Box$
\end{example}
\vspace{0.2cm}
\par Geometric measurement $\tilde{\sigma}_{\beta}$ from example \ref{example:length_weighted_with_degree_Euclidian_curvature} is invariant under Euclidian transformations for any $\beta$. However when $\beta \neq 0$ it is not equal to Euclidian length.
\vspace{0.3cm}

 Most of the   results presented in this work were initially published as arXiv manuscript \cite{Polyakov_equation_arXiv:2014} in September, 2014. Some explanations and formulae appearing in the first version of arXiv manuscript were improved, corrected or removed and novel material was added in arXiv versions 2 and 3. The current version of the manuscript is result of significant revision. It contains novel solutions, some formulations and explanations have been corrected and in many parts of the text improved. The manuscript now contains discussion about performance of the compromised motor control system in the framework of the  theory under consideration.

\section*{Methods and Results}\label{section:Methods_and_results}
\subsection*{Systems of differential equations, general case}
\par Consider $n$ times differentiable vector function in $L$-dimensional space $\mathbf{r}_L(\sigma) = (x_1(\sigma),\, x_2(\sigma),\, \ldots,\, x_L(\sigma))$ and $n$ times differentiable function $\sigma(t)$ defined in the interval $t \in [0,\; T]$, and strictly monotonous\footnote{So that the inverse function $t = \tau(\sigma)$ can be defined.}, $\sigma(0) = 0$. Function $\sigma(t)$ may represent geometric arc defined based on the vector function $\mathbf{r}_L(\sigma(t))$ and its derivatives with respect to time but is not limited to this. For example, when $\sigma(t)$ represents length of a curve drawn up to time $t$, $\sigma(t) = \int_0^t \sqrt{\|\dot{\mathbf{r}}(\tau)\|^2} d\tau$. The mean squared derivative cost functional \cite{Richardson_Flash_2003} associated with $n$-th order derivative of the vector function and its geometric parametrization $\sigma(t)$ is defined as follows:
\begin{eqnarray}
  \nonumber  J_{\sigma}(\mathbf{r}_L,\, n) & = & \displaystyle\int\limits_0^T \left\{
\left[\frac{d^n x_1(\sigma(t))}{dt^n}\right]^2 + \left[\frac{d^n x_2(\sigma(t))}{dt^n}\right]^2 + \ldots + \left[\frac{d^n x_L(\sigma(t))}{dt^n}\right]^2 \right\}\, dt \\
\label{general_cost_function1} & = & \displaystyle\int\limits_0^T \left\|\frac{d^n}{dt^n} \mathbf{r}_L(\sigma(t)) \right\|^2\, dt \, .
\end{eqnarray}
I call the order of derivative $n$  in the cost functional \eqref{general_cost_function1} degree of smoothness or order of smoothness.
\subsubsection*{Movement paths providing optimal trajectories with constant rate of accumulating geometric measurement}
\par Given geometric parametrization $\sigma$ of the vector function $\mathbf{r}$ (that describes a curve along which the trajectory is ``drawn'') in $L$-dimensional space, I denote with $\tilde{\sigma}^{*}_{\mathbf{r}_L, n}(t)$ temporal parametrization of ``drawing'' the path which provides minimal cost $J_{\sigma^{*}}(\mathbf{r}_L,\, n)$ (in other words maximal smoothness of degree $n$) under constraints at the boundary:
\begin{equation}\label{solution_path_minimization_problem2}
    \left\{
        \begin{array}{l}
            \dot{\tilde{\sigma}}^{*}_{\mathbf{r}_L, n}(t) =
                \displaystyle\arg\min_{\substack{\dot{\sigma}(t)}} J_{\sigma}(\mathbf{r}_L, \, n),\; t\in[0,\, T]\,, \\ \\
            \dot{\sigma}(0) = \dot{\sigma}(T) = \Sigma/T,\; \displaystyle\left.\frac{d^k\sigma}{dt^k}\right|_{t = 0} = \left.\frac{d^k\sigma}{dt^k}\right|_{t = T}  = 0,\; k = 2,\, \ldots,\, n-1\, ,
        \end{array}
    \right.
\end{equation}
dot denotes differentiation with respect to time $t$, $\Sigma = \int_{0}^{T} \dot{\sigma}(\tau) d\tau$. The constraints at the boundary mean that the optimal rule of ``drawing'' along given curve has to be picked out of functions whose first derivative at the boundaries is equal to $\sigma(T) / T$ and whose higher order derivatives up to order $n - 1$ are zero at the boundaries. The solution of the optimization problem \textit{without} constraints at the boundaries is also considered and is denoted as $\sigma^{*}_{\mathbf{r}_L, n}(t)$:
\begin{equation}\label{solution_path_minimization_problem1}
     \dot{\sigma}^{*}_{\mathbf{r}_L, n}(t) =
     \arg\min_{\substack{\dot{\sigma}(t)}} J_{\sigma}(\mathbf{r}_L, \, n),\; t\in[0,\, T]\, .
\end{equation}
Solutions of the optimization problem without boundary conditions are allowed to have arbitrary speed and its derivatives at the boundaries. For a given dimension of the space $L$ and the degree of smoothness $n$ I aim to find curves for which the solution of the optimization problem with constraints \eqref{solution_path_minimization_problem2} provides constant speed of ``drawing'' the curve:
\begin{equation}\label{eq_goal_set2}
    \mathcal{\tilde{A}}_{n,\, L} = \left\{\mathbf{r}_L: \dot{\tilde{\sigma}}^{*}_{\mathbf{r}_L, \, n}(t) = \mathrm{const} = \frac{\Sigma}{T},\; t \in [0,\, T]\right\}\,.
\end{equation}
The problem of identifying the curves under the same optimality criterion but without the boundary conditions from \eqref{solution_path_minimization_problem2} is also considered:
\begin{equation}\label{eq_goal_set1}
  \mathcal{A}_{n,\, L} = \left\{\mathbf{r}_L: \dot{\sigma}^{*}_{\mathbf{r}_L, \, n}(t) = \mathrm{const} = \frac{\Sigma}{T},\; t \in [0,\, T]\right\}\, .
\end{equation}
\par Solutions of the optimization problem \eqref{solution_path_minimization_problem1} for the curves from the set $\mathcal{A}_{n,\, L}$ (no boundary conditions are established) satisfy the boundary conditions established in \eqref{solution_path_minimization_problem2} anyhow and both optimization problems minimize the same cost functional. Therefore all solutions (curves) belonging to the set $\mathcal{A}_{n,\, L}$ belong to the set $\mathcal{\tilde{A}}_{n,\, L}$:
 \begin{equation}\label{eq:set.belongs.to.set}
    \mathcal{A}_{n,\, L} \subset \mathcal{\tilde{A}}_{n,\, L}\;.
 \end{equation}
So the necessary conditions derived for the curves from the class $\mathcal{\tilde{A}}_{n,\, L}$ are obeyed actually by the curves from both classes: $\mathcal{\tilde{A}}_{n,\, L}$ and $\mathcal{A}_{n,\, L}$.
\par Introduce a system of two differential equations:
\begin{equation}\label{eq:general_form_lower_order}
    \left\{
    \begin{array}{l}
        \left\|\displaystyle\frac{d^n\mathbf{r}}{d\sigma^n}\right\|^2 + 2 \displaystyle\sum_{i = 1}^{n-1} (-1)^i \left( \frac{d^{n-i}\mathbf{r}}{d\sigma^{n-i}} \cdot \frac{d^{n+i} \mathbf{r}}{d\sigma^{n+i}} \right) = \mathrm{const} \\
        \dot{\sigma}(t)|_{t(\sigma) = \sigma} \equiv v(t)|_{t(\sigma) = \sigma} = 1\; ,
    \end{array}
    \right.
\end{equation}
dot between two vectors denotes their scalar product.
Differentiation of both sides of the upper equation in \eqref{eq:general_form_lower_order} implies a system in which the upper equation is represented just by the scalar product of the first and $2n$-th order derivatives of the position vector with respect to the geometric parameter $\sigma$:
\begin{equation}\label{eq:general_form_higher_order}
    \left\{
    \begin{array}{l}
        \displaystyle\frac{d \mathbf{r}}{d\sigma} \cdot \frac{d^{2n} \mathbf{r}}{d\sigma^{2n}} = 0 \\ \\
        \dot{\sigma}(t)|_{t(\sigma) = \sigma} \equiv v(t)|_{t(\sigma) = \sigma} = 1 \; .
    \end{array}
    \right.
\end{equation}
\par I consider the curves from the classes $\mathcal{\tilde{A}}_{n,\, L}$ and $\mathcal{A}_{n,\, L}$ defined in \eqref{eq_goal_set2} and \eqref{eq_goal_set1} respectively as candidates for geometric movement primitives. The systems \eqref{eq:general_form_lower_order}, \eqref{eq:general_form_higher_order} can be used as a tool for identifying such curves as follows from the main mathematical result of this work:
\begin{prop}\label{main_proposition} The curves along which optimal trajectories having degree of smoothness $n$ accumulate geometric measurement $\sigma(t)$ with constant rate, that is the curves from the sets  $\mathcal{\tilde{A}}_{n,\, L}$ and $\mathcal{A}_{n,\, L}$, satisfy the upper equations in the systems \eqref{eq:general_form_lower_order} and \eqref{eq:general_form_higher_order}.
\end{prop}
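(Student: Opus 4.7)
The plan is to treat $\mathbf{r}_L$ as fixed and regard $J_\sigma(\mathbf{r}_L,n)$ as a functional of $\sigma(\cdot)$ alone. Since any curve in $\tilde{\mathcal{A}}_{n,L}$ or $\mathcal{A}_{n,L}$ has the linear $\sigma(t)=(\Sigma/T)t$ as a minimizer, that choice is in particular a critical point of $J_\sigma$ in the appropriate admissible class, so the Euler--Lagrange condition must hold there. I would extract that condition, evaluate it at $\dot\sigma=\mathrm{const}$, show it collapses to the upper equation of \eqref{eq:general_form_higher_order}, and then obtain the upper equation of \eqref{eq:general_form_lower_order} by a single integration.

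To carry this out I would rescale $\sigma$ so the candidate minimizer is $\sigma(t)=t$ and set $\sigma_\epsilon(t)=t+\epsilon\eta(t)$. Interchanging $d/d\epsilon$ with $d/dt$ at $\epsilon=0$ yields
\begin{equation*}
\delta J \;=\; 2\int_0^T \frac{d^n \mathbf{r}(\sigma(t))}{dt^n}\cdot\frac{d^n}{dt^n}\bigl[\mathbf{r}'(\sigma(t))\,\eta(t)\bigr]\,dt.
\end{equation*}
At $\sigma(t)=t$ the left factor is $\mathbf{r}^{(n)}(t)$, and the Leibniz rule expands the right factor as $\sum_{k=0}^n\binom{n}{k}\mathbf{r}^{(k+1)}(t)\eta^{(n-k)}(t)$. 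Integrating by parts $n-k$ times in each term to strip every derivative off $\eta$ and then applying the fundamental lemma of the calculus of variations gives
\begin{equation*}
\sum_{k=0}^n(-1)^{n-k}\binom{n}{k}\frac{d^{n-k}}{d\sigma^{n-k}}\bigl[\mathbf{r}^{(n)}\cdot\mathbf{r}^{(k+1)}\bigr] \;=\; 0.
\end{equation*}
A second Leibniz expansion of each inner derivative, followed by swapping the order of summation, produces for each fixed $j$ the coefficient $\sum_{k=0}^{n-j}(-1)^{n-k}\binom{n}{k}\binom{n-k}{j}=\binom{n}{j}(-1)^n(1-1)^{n-j}$, which vanishes unless $j=n$. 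Only the surviving term $(-1)^n\mathbf{r}'\cdot\mathbf{r}^{(2n)}$ remains, which is precisely the upper equation of \eqref{eq:general_form_higher_order}.

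The upper equation of \eqref{eq:general_form_lower_order} then follows by differentiating its left-hand side $Q$ directly: a telescoping cancellation leaves $Q'=2(-1)^{n+1}\mathbf{r}'\cdot\mathbf{r}^{(2n)}$, so $\mathbf{r}'\cdot\mathbf{r}^{(2n)}=0$ is equivalent to $Q=\mathrm{const}$. The boundary terms produced by the repeated integrations by parts are handled separately for the two classes: for $\mathcal{A}_{n,L}$ it suffices to restrict to $\eta$ compactly supported in $(0,T)$, while for $\tilde{\mathcal{A}}_{n,L}$ the constraints in \eqref{solution_path_minimization_problem2} force admissible $\eta$ to satisfy $\eta^{(k)}(0)=\eta^{(k)}(T)=0$ for $0\leq k\leq n-1$, which annihilates every boundary contribution of the form $[\mathbf{r}^{(n)}\cdot\mathbf{r}^{(k+1)}]^{(j)}\eta^{(n-k-1-j)}\big|_0^T$. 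The main obstacle I anticipate is the combinatorial bookkeeping in the double-Leibniz simplification and keeping signs consistent across $n$ successive integrations by parts; the underlying variational machinery itself is routine.
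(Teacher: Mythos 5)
Your argument is correct, but it follows a genuinely different route from the paper's. You stay in the time domain: you perturb the conjectured minimizer $\sigma(t)=t$ by $\epsilon\eta$ with $\eta$ compactly supported, compute the first variation directly, and collapse it with Leibniz expansions plus the identity $\binom{n}{k}\binom{n-k}{j}=\binom{n}{j}\binom{n-j}{k}$ and the binomial theorem, obtaining the orthogonality equation $\mathbf{r}'\cdot\mathbf{r}^{(2n)}=0$ of \eqref{eq:general_form_higher_order} first and then the constancy statement in \eqref{eq:general_form_lower_order} by the telescoping identity $Q'=2(-1)^{n+1}\,\mathbf{r}'\cdot\mathbf{r}^{(2n)}$ (which I checked; your signs and the vanishing of all coefficients with $j<n$ are right). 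The paper instead changes variables so that $\sigma$ is the independent variable and $v(\sigma)=\dot\sigma$ the unknown, writes the Euler--Poisson equation with a Lagrange multiplier enforcing $\int_0^{\Sigma}d\sigma/v=T$, sets $v=\mathrm{const}$ to kill every term carrying a derivative of $v$, and identifies the surviving coefficient $\mu_n$ via an induction formula for $d^n x/dt^n$ and the Pascal-triangle identity of Proposition \ref{proposition_eq:expression_to_be_proven}; it thus reaches the first-integral form first and gets the $2n$-th order equation by differentiation. What each buys: the paper's computation exhibits the full Euler--Poisson structure for non-constant $v$, which ties the constant to the multiplier and underlies the sufficient condition \eqref{eq:sufficient_condition_n_is_3}, whereas your derivation is leaner --- no multiplier (the fixed endpoints are absorbed by compact support of $\eta$), no expansion of $d^nx/dt^n$ in powers of $v$ and its derivatives, and a one-line binomial computation in place of the inductive Appendix~B identity; it is essentially the Meirovitch-style direct argument the paper cites for $n=3$ in \eqref{eq:Meirovitch}, extended to arbitrary $n$, and, unlike what the paper remarks about that proof, you do recover the first integral (with unspecified constant, which is all the proposition requires). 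Two small points of hygiene: the condition $\eta(0)=\eta(T)=0$ comes from the fixed endpoint values $\sigma(0)=0$, $\sigma(T)=\Sigma$ rather than from the derivative constraints listed in \eqref{solution_path_minimization_problem2} (compact support covers it anyway, for both $\mathcal{\tilde{A}}_{n,L}$ and $\mathcal{A}_{n,L}$), and you should note explicitly that the linear rescaling taking $\dot\sigma=\Sigma/T$ to $\dot\sigma=1$ leaves both upper equations invariant, since they are homogeneous under $\sigma\mapsto c\sigma$.
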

%
%
A detailed proof of proposition \ref{main_proposition} is provided in Appendices A, B.
Particular cases of the system \eqref{eq:general_form_higher_order} together with known solutions are demonstrated for different geometric parameterizations further in text 
in equations \eqref{eq:2d_case_equiaffine}, \eqref{eq:2d_case_equiaffine_via_eq_aff_curvature}, \eqref{eq:2d_case_affine}, \eqref{eq:2d_case_centeraffine}, \eqref{eq:2d_case_equicenteraffine}, \eqref{eq:2d_case_Euclidian}, \eqref{eq:2d_case_similarity}, \eqref{eq:3d_case_equiaffine_higher_order}.
\vspace{0.3cm}

\par Now a number of important notes about the systems \eqref{eq:general_form_lower_order} and \eqref{eq:general_form_higher_order}.
\begin{enumerate}
 \item \textbf{The upper equation} in the systems \eqref{eq:general_form_lower_order} and \eqref{eq:general_form_higher_order} is independent on the geometric parametrization of a curve. The equation is derived from two criteria: (1) maximal smoothness \eqref{solution_path_minimization_problem2} or \eqref{solution_path_minimization_problem1} during accumulating $\sigma(t)$ along the vector function $\mathbf{r}_L(\sigma)$ and (2) constancy of the rate of accumulating $\sigma(t)$ \eqref{eq_goal_set2} or \eqref{eq_goal_set1}. Derivation of the upper equation is based on the Euler-Poisson equation for variational problems. Particular cases of the upper differential equation for planar curves when $2 \leq n \leq 4$ and for an arbitrary $n$ are provided\footnote{Cost functionals $J_{\sigma}(\mathbf{r}_L,\, n)$ for the planar (L = 2) and spatial (L = 3) curves were used in different motor control studies for orders of differentiation $n$ equal to 2-4, eg. \cite{Hogan:1984, Flash_Hogan_1985, Viviani_Flash_1995, Todorov_Jordan_1998, Polyakov:2001, Polyakov_etc_2001, Richardson_Flash_2003, Polyakov:2006, Ben_Itzkah.Karniel:2008, Polyakov_et_al_B.Cyb:2009}.} in Table \ref{Table_equations}. Orthogonality, and consequently the left hand side of the upper equation in the systems, are invariant under arbitrary Euclidian transformations, uniform scalings and reflections.
     \par  Relationship $\bm{r}(\sigma)$ can be substituted directly into the upper equation of \eqref{eq:general_form_higher_order}. Nevertheless usually curves used in different studies are parameterized by polar angle, length, one of the coordinates (e.g. $y = y(x)$ for a plane curve), etc, and not necessarily by the desirable geometric measurement. However there is no need to write explicit relationship $\bm{r}(t(\sigma))$ between the coordinates of the curve $\bm{r}(t)$ and the geometric parameter $\sigma$ in order to compute derivatives in the upper equation of \eqref{eq:general_form_higher_order} and to remain consistent with prescribed parametrization $\sigma$. If $t$ is an arbitrary argument and $\dot{\sigma}$ is never zero then all derivatives of $\bm{r}(t(\sigma))$ with respect to $\sigma$ can be computed by applying recursively the chain rule and derivative of inverse function:
    \begin{equation}\label{eq:formulae_substituttions}
        \mathbf{r}'(\sigma)|_{\sigma = \sigma(t)} = \frac{\dot{\bm{r}}(t)}{\dot{\sigma}}\,,\;
        \mathbf{r}''(\sigma)|_{\sigma = \sigma(t)} = \frac{d\left( \frac{\dot{\bm{r}}(t)}{\dot{\sigma}} \right) / dt}{\dot{\sigma}} \,,\; \ldots \; .
    \end{equation}
     Assume a given curve is being tested for being a candidate primitive shape (belonging to the classes $\mathcal{\tilde{A}}_{n,\, L}$, $\mathcal{A}_{n,\, L}$) for concrete desirable $\dot{\sigma}$. Then derivatives in \eqref{eq:formulae_substituttions} can be computed in a program for symbolic computations, e.g. Mathematica, and the constraint formalized by the lower equation in the systems \eqref{eq:general_form_lower_order} and \eqref{eq:general_form_higher_order} can be easily plugged into the upper equation.
 \item In case one aims to find a novel candidate curve for geometric primitive \textbf{the lower equation} in the systems \eqref{eq:general_form_lower_order} and \eqref{eq:general_form_higher_order} is needed to guarantee that the solution of the parameter-independent upper equation indeed represents a curve with required geometric parametrization and provides consequent geometric invariance of the parametrization. In this case of unknown candidate curve use of substitution \eqref{eq:formulae_substituttions} would be generally impractical to my view and system of two equations has to be solved. The lower equation depends on the choice of geometric parametrization and essentially represents functional form of its derivative with respect to an arbitrary argument. Example \ref{example:parametrization} below demonstrates a curve that satisfies the upper equation for equi-affine parametrization and is not consistent with some other parameterizations.
 \item Use of \textbf{transversality conditions} in addition to the upper equation in the systems \eqref{eq:general_form_lower_order} and \eqref{eq:general_form_higher_order} can provide a more restricted necessary condition for a curve to belong to the set $\mathcal{A}_{n,\, L}$.
 \item The following \textbf{sufficient condition} for the curves from the set $\mathcal{\tilde{A}}_{3,\, L}$ was formulated earlier for the case of the minimum-jerk cost ($n = 3$) \cite{Polyakov:2006}:
     \begin{equation}\label{eq:sufficient_condition_n_is_3}
        \left\{
            \begin{array}{l}
                {\mathbf{r}'''}^2 - 2 \mathbf{r}'' \cdot \mathbf{r}^{(4)} + 2 \mathbf{r}' \cdot \mathbf{r}^{(5)} = \mathrm{const}_0 \geq 0 \\
                \displaystyle\min_{0 \leq \sigma \leq \Sigma}\left[ 9 {\mathbf{r}'''}^2 + 2 \mathbf{r}' \cdot \mathbf{r}''' - 24 \mathbf{r}'' \cdot \mathbf{r}''' \right] = \mathrm{const}_1 \geq 0 \; .
            \end{array}
        \right.
     \end{equation}
     The left hand side of the upper equations in the system \eqref{eq:sufficient_condition_n_is_3} is identical to the left hands side of upper equation in the necessary condition \eqref{eq:general_form_lower_order} and is restricted with the constraint on the sign of the constant in the right hand side.
 \end{enumerate}
\begin{landscape}
\begin{table}[h]
\vspace*{-2cm}

\hspace*{0cm}
\vspace*{1cm}
\begin{landscape}
\begin{tabular}{|c|c|c|l|l|}
\hline
 \textbf{Order} & \textbf{Equation, exemplar or} & \textbf{Derivative of}  & \textbf{Comments about} & \textbf{Known solutions when} $\sigma$ \\
       & \textbf{general case}          & \textbf{the equation}   & \textbf{  equation }  & \textbf{is equi-affine arc} \\
 \hline
 \textbf{2} & ${x''}^2 + {y''}^2 - 2x'x^{(3)} - 2y'y^{(3)}$  & $x' x^{(4)} + y' y^{(4)} = 0$  & Planar ``minimum- &  \\
 & $ = \mathrm{const}$ & & acceleration'' criterion  & Parabolas \\
 & & & in motor control & \\
 \hline
 \textbf{3} & ${x'''}^2 + {y'''}^2 - 2x''x^{(4)} - 2y''y^{(4)}$  & $x' x^{(6)} + y' y^{(6)} = 0$  & Planar ``minimum- & 2D: Parabolas, circles \cite{Polyakov:2006, Polyakov_et_al_B.Cyb:2009}, \\
 &&& jerk'' criterion & logarithmic spiral \cite{Bright:2006, Polyakov_et_al_B.Cyb:2009}, \\ &  $ + 2x'x^{(5)} + 2y'y^{(5)} = \mathrm{const}$  & &  in motor control & 3D: Parabolic screw line \cite{Polyakov:2006, Polyakov_et_al_B.Cyb:2009}, \\
    & & & & 3D: Elliptic screw line\\
 \hline
 \textbf{4} &  $\left(x^{(4)}\right)^2 + \left(y^{(4)}\right)^2$  & $x' x^{(8)} + y' y^{(8)} = 0$  &  Planar ``minimum- & 2D: Parabolas, circles,\\
 & $  - 2x'''x^{(5)} - 2y''y^{(5)} $ & & snap'' criterion & logarithmic spiral\\
   & $ + 2x''x^{(6)} + 2y''y^{(6)}$               &                                & in motor control & 3D: Parabolic \& elliptic screw \\
   & $ - 2 x' x^{(7)} - 2 y' y^{(7)} = \mathrm{const} $ & &  & lines \\
 \hline
\ldots   & \ldots & \ldots  &  & \\
   \hline
 \textbf{\it{n}} & $\left(x^{(n)}\right)^2 + \left(y^{(n)}\right)^2 + $ & $x' x^{(2n)} + y' y^{(2n)} = 0 $  &  Planar equation & \\
  & $2 \displaystyle\sum_{i = 1}^{n-1} (-1)^i (x^{(n-i)} x^{(n+i)}  $ & & &   \\
  & $+ y^{(n-i)} y^{(n + i)})  = \mathrm{const} $& & &  For $L <= n$, $\mathbf{r} =  \{x_1,\, \ldots,\, x_L\}$ \\
     &  $\|\mathbf{r}^{(n)}\|^2 $  & $\mathbf{r}' \cdot \mathbf{r}^{(2 n)} = 0$    &   & s.t. $x_k(\sigma) = \displaystyle\frac{\sigma^k}{k!}$, \\
& $ + 2 \displaystyle\sum_{i = 1}^{n-1} (-1)^i \left( \mathbf{r}^{(n-i)} \cdot \mathbf{r}^{(n+i)} \right)$ & &      $L$-dimensional space & $k = 1,\, \ldots,\, L$, $\sigma$ is \\
& $ = \mathrm{const} $ & & & equi-affine arc in dimension $L$ \\
 \hline
\end{tabular}
\end{landscape}
\caption{\small{\bf Equations corresponding to different orders of smoothness in dimensions 2, 3, 4, n.} Upper equation in the systems \eqref{eq:general_form_lower_order}, \eqref{eq:general_form_higher_order} which is necessarily satisfied by the curves belonging to the classes $ \mathcal{\tilde{A}}_{n,\, L}$ and $\mathcal{A}_{n,\, L}$ defined in \eqref{eq_goal_set2} and \eqref{eq_goal_set1} respectively. Prime and order of differentiation in the brackets correspond to the derivative with respect to $\sigma$. Dot between two vectors in the row corresponding to the case $n$ denotes scalar product of the vectors. Details about planar solutions mentioned in Table \ref{Table_equations} are provided in Table \ref{Table_mathematical_objects}. The spatial solutions are analyzed in the part related to the spatial equi-affine group.}\label{Table_equations}
\end{table}
\end{landscape}

\par In a different study Meirovitch has recently introduced a variational problem for plane jerk without explicitly incorporating rate of accumulating arc ($\dot{\sigma}$) and with constraint in terms of path instead of relationship between movement speed and duration; such representation enabled him to find an elegant proof of the necessary condition on a given planar trajectory that minimizes jerk cost and follows prescribed path:
\begin{equation}\label{eq:Meirovitch}
    \dot{x} \frac{d^{6}x}{dt^6} + \dot{y} \frac{d^{6}y}{dt^6} = 0\, ,
\end{equation}
in other words being a necessary condition for a given trajectory satisfying the constrained minimum-jerk model\footnote{Personal communication (August, 2015), apparently this result appears in Supplementary Material to \cite{Meirovitch:2014}.}. 
Given that $\sigma(t) = t$ is always a legitimate parametrization for a path of a given trajectory as condition \eqref{equation_strictly_monotoneous} is satisfied for such $\sigma(t)$, equation \eqref{eq:Meirovitch} is equivalent to equation \eqref{equation_equiaffine_example_in_plane}, that is the plane version of the upper equation in \eqref{eq:general_form_higher_order} with $n = 3$, for every given trajectory (that itself specifies movement path). The proof by Meirovitch, though elegant, does not lead directly to the first integral \eqref{equation_equiaffine_example_in_plane_lower_order} of equation \eqref{equation_equiaffine_example_in_plane}. Equation\eqref{equation_equiaffine_example_in_plane} has recently been used for mixture of geometries approach \cite{Meirovitch:2014}.
\subsubsection*{ The process of identifying candidates for geometric primitives based on proposed system \eqref{eq:general_form_lower_order} may follow either of the three approaches}
\begin{enumerate}
    \item\label{lbl:approaches_candidate_solutions1} Specify desirable (geometric) parametrization $\sigma$ and solve the system \eqref{eq:general_form_lower_order} (or \eqref{eq:general_form_higher_order}) to identify candidate curves along which optimal trajectories conserve time derivative of prescribed geometric parameter, for example equi-affine or affine arcs. Here the lower equation of the systems is used as is.
    \item\label{lbl:approaches_candidate_solutions2} Specify desirable (geometric) parametrization $\tilde{\sigma}$ and guess candidate curves among solutions of the upper equations of either of the systems \eqref{eq:general_form_lower_order}, \eqref{eq:general_form_higher_order}:
        \begin{equation}\label{eq:general_form_lower_order_versus_t}
            \left\|\displaystyle\frac{d^n\mathbf{r}}{d\sigma^n}\right\|^2 + 2 \displaystyle\sum_{i = 1}^{n-1} (-1)^i \left( \frac{d^{n-i}\mathbf{r}}{d\sigma^{n-i}} \cdot \frac{d^{n+i} \mathbf{r}}{d\sigma^{n+i}} \right) = \mathrm{const}\, ,
        \end{equation}
    \begin{equation}\label{eq:only_upper_equation_higher_order}
        \displaystyle\frac{d \mathbf{r}}{d\sigma} \cdot \frac{d^{2n} \mathbf{r}}{d\sigma^{2n}} = 0\, ,
    \end{equation}
    where $\sigma$ is an arbitrary parametrization including the possibility $\sigma = t$.
    For example such solutions as polynomials of degree $\leq 2n - 1$ with respect to $\sigma$ or certain periodic functions can be easily guessed. Then among the guessed candidate curves determine those (if any) consistent with both equation \eqref{eq:general_form_lower_order_versus_t} and desired parametrization $\tilde{\sigma}$, see also example \ref{example:parametrization}. Segregation of the guessed curves is straightforward by either (a) parameterizing them with desired $\tilde{\sigma}$ or (b) by using substitution \eqref{eq:formulae_substituttions}; for a known curve both (a) and (b) may be viewed as replacement of the condition formalized by the lower equation in \eqref{eq:general_form_lower_order}.
\item\label{lbl:approaches_candidate_solutions3} Do not request any specific parametrization of candidate curves. Search for candidate curves that are described by arbitrary vector functions satisfying equation \eqref{eq:general_form_lower_order_versus_t} (or \eqref{eq:only_upper_equation_higher_order}), remain with parameterizations induced by identified solutions and possibly look for their geometric meanings. For example whether $\sigma$ is proportional to some geometric arc or another geometric parameter computed for a curve. Here the constraint formalized by the lower equation of the systems \eqref{eq:general_form_lower_order}, \eqref{eq:general_form_higher_order} does not exist from the beginning but its equivalent\footnote{Meaning formula for $\dot{\sigma}$ induced by its geometric properties and based on differentiation with respect to an arbitrary parameter. Different examples of the lower equation can be found further in text.} might be derived.
\end{enumerate}
In earlier studies \cite{Polyakov:2001, Polyakov:2006, Bright:2006, Polyakov_et_al_B.Cyb:2009} and earlier versions of this manuscript (1 - 3) known candidate curves were found using educated guess in Approaches \ref{lbl:approaches_candidate_solutions1} and \ref{lbl:approaches_candidate_solutions2}. In the present version of the manuscript method \ref{lbl:approaches_candidate_solutions3} is also used to identify additional candidate curves \eqref{eq:solutions_polynomials}, \eqref{eq:solutions_spirals_polynomials} without prior knowledge about geometric meaning of the parameter making these curves solutions of the system \eqref{eq:general_form_lower_order}.
\par Future studies may create a machinery for solving systems \eqref{eq:general_form_lower_order}, \eqref{eq:general_form_higher_order} and make Approach \ref{lbl:approaches_candidate_solutions1} a handy tool for identifying candidate curves parameterized by a measurement with desired properties. Equation \eqref{eq:2d_case_equiaffine_via_eq_aff_curvature} further in text demonstrates example of what could become a part of Approach \ref{lbl:approaches_candidate_solutions1}: the planar system of equations \eqref{eq:2d_case_equiaffine} for $n = 3$ is written in the form of a single equation in terms of curve's equi-affine curvature. Equation \eqref{eq:2d_case_equiaffine_via_eq_aff_curvature} can be applied whenever one wishes to characterize candidate curves with their equi-affine curvature.

\begin{example}\label{example:parametrization}
    Time derivative of the equi-affine arc called equi-affine velocity is computed as follows: $\dot{\sigma}_{ea} \equiv v_{ea} = (\dot{x} \ddot{y} - \dot{y} \ddot{x}) ^ {1/3}$. Time derivative of the Euclidian arc is computed as $\dot{\sigma}_{eu} \equiv v_{eu} = \sqrt{\dot{x}^2 + \dot{y}^2}$. Computation of Euclidian curvature at some point of the trajectory is as follows:
    $$c_{eu} = \frac{\dot{x} \ddot{y} - \dot{y} \ddot{x}}{(\dot{x}^2 + \dot{y}^2)^{3/2}} = \frac{ v_{ea}^3}{v_{eu}^3}$$
    and therefore
    $$v_{ea} = v_{eu} \, c_{eu}^{1/3} \, .$$
    Consider parametrization with cumulative Euclidian speed weighted with Euclidian curvature raised to the power $\beta$:
    \begin{equation}\label{speed_parametrized_beta}
        \tilde{v}_{\beta} \equiv v_{eu} \, {c_{eu}}^{\beta} = (\dot{x}^2 + \dot{y}^2)^{1/2 -  3 \beta / 2}\, (\dot{x} \ddot{y} - \dot{y} \ddot{x}) ^ {\beta} \; .
    \end{equation}
    Corresponding geometric measurement is equal to the integrated speed:
    $$\tilde{\sigma}_{\beta}(t) = \displaystyle\int_{\tau = 0}^{t} \tilde{v}_{\beta} d\tau\,  $$
    which is strictly monotonous function for the curve without inflection points (though it does not necessarily represent an arc of a curve in some geometry). Therefore $\tilde{\sigma}_{\beta}$ is legitimate geometric parametrization of a curve without inflection points. Obviously, the vector function
    \begin{equation}\label{parabola_canonical_form_example}
        x = \tilde{\sigma}_{\beta},\; y = \tilde{\sigma}_{\beta}^2 / 2
    \end{equation}
    satisfies equations in Table \ref{Table_equations}  for $n \geq 2$ as (2n)-th order derivatives of $x$, $y$ with respect to $\tilde{\sigma}_{\beta}$ are all zero. Such vector function describes parabola parameterized with equi-affine arc, case of $\beta = 1/3$. If $\tilde{\sigma}_{\beta}(t)$ linearly depends on $t$ meaning constant speed $\alpha$ then $x = \alpha t,\; y = \alpha^2 t^2/2$ implying $\dot{x} \ddot{y} - \dot{y} \ddot{x} = \alpha^3$. In turn, substituting $t = \tilde{\sigma}_{\beta} / \alpha$ into \eqref{speed_parametrized_beta} implies
    $$\dot{\tilde{\sigma}}_{\beta} \equiv \tilde{v}_{\beta} = (\dot{x}^2 + \dot{y}^2)^{1/2 -  3 \beta / 2} \cdot \alpha^{3 \beta} = (1 + {\tilde{\sigma}_{\beta}}^2)^{1/2 -  3 \beta / 2} \cdot \alpha^{3 \beta} \neq \alpha \; \mathrm{when}\,  \beta \neq 1/3\, .$$
    This long way to show that parabola cannot be parameterized with $\tilde{\sigma}_{\beta}$ in canonical form \eqref{parabola_canonical_form_example} whenever $\beta \neq 1/3$  was chosen for pedagogical reason. In particular, if one uses Approach \ref{lbl:approaches_candidate_solutions2} to select curves along which maximally smooth trajectories accumulate Euclidian arc with constant rate, parabolas would be filtered out with the lower equation of system \eqref{eq:general_form_higher_order}.  \;\;\; $\Box$
\end{example}

\subsubsection*{The problem of deriving optimal temporal profile for a given path}
\par Let some path $\bm{r}(s)$ be given. Consider the optimization problem \eqref{solution_path_minimization_problem1} of finding optimal trajectory $\bm{r}(s(t))$ along the known path. Similar to the constrained minimum-jerk model, optimal $s(t)$ should be found, but now with degree of smoothness $n$.
Equation \eqref{eq:only_upper_equation_higher_order} is satisfied for such problem with arbitrary feasible $\sigma$. In the spirit of the form \eqref{eq:Meirovitch} of planar equation \eqref{equation_equiaffine_example_in_plane} an alternative notation can be adopted in equation \eqref{eq:only_upper_equation_higher_order} for convenience of analyzing this specific problem:
\begin{equation}\label{eq:only_upper_equation_higher_order_s_vs_t}
    \displaystyle\frac{d \mathbf{r}(s(t))}{d t} \cdot \frac{d^{2n} \mathbf{r}(s(t))}{d t^{2n}} = 0
\end{equation}
or simply
\begin{equation}\label{eq:only_upper_equation_higher_order_vs_t}
    \displaystyle\frac{d \mathbf{r}(t)}{d t} \cdot \frac{d^{2n} \mathbf{r}(t)}{d t^{2n}} = 0\, .
\end{equation}
Form \eqref{eq:only_upper_equation_higher_order_vs_t} of \eqref{eq:only_upper_equation_higher_order} is true for any trajectory $\bm{r}(t)$ constrained by some path and minimizing cost \eqref{general_cost_function1} because feasibility condition \eqref{equation_strictly_monotoneous} is satisfied for $\sigma(t) = t$.
\par Now the rule for derivative of the nested function can be applied to equation \eqref{eq:only_upper_equation_higher_order_s_vs_t} in order to formulate the optimization problem \eqref{solution_path_minimization_problem1} as differential equation for $s(t)$, generally non-linear.
\begin{example}\label{example:find_trajectory_given_path}
    \par Consider an ellipse $x(\theta) = a \cos\theta$, $y(\theta) = b \sin\theta$ and the problem of finding an optimal rule of drawing an ellipse with $n = 2$, that is constrained minimum-acceleration problem. Now use recursively $df(\sigma(t))/dt = f' \dot{\sigma}$ to derive: $d^4 f(\sigma(t))/dt^4 = f^{(4)} \dot{\sigma}^4 + 6 f''' \ddot{\sigma} \dot{\sigma}^2 + 4 x'' \dddot{\sigma} \dot{\sigma} + 3 x'' \ddot{s}^2 + x' d^4s/dt^4$. Equation \eqref{eq:only_upper_equation_higher_order_s_vs_t} with $n = 2$ becomes for the ellipse:
    $$\begin{array}{l}
         -a^2 \dot{\theta} \sin\theta \cdot \left(  \dot{\theta}^4 \cos\theta + 6 \ddot{\theta} \dot{\theta}^2 \sin\theta  - 4 \dddot{\theta} \dot{\theta} \cos\theta  - 3 \ddot{\theta}^2 \cos\theta  - d^4\theta/dt^4  \sin\theta \right)  +  \\
     b^2 \dot{\theta} \cos\theta \cdot \left( \dot{\theta}^4 \sin\theta - 6 \ddot{\theta} \dot{\theta}^2 \cos\theta  - 4 \dddot{\theta} \dot{\theta} \sin\theta  - 3 {\ddot{\theta}}^2 \sin\theta  + d^4\theta/dt^4  \cos\theta \right)  = 0 \, .
    \end{array}$$
 \;\;\; $\Box$
\end{example}
Example \ref{example:find_trajectory_given_path} demonstrates that generally it may be impractical to apply equation \eqref{eq:only_upper_equation_higher_order_s_vs_t} to derive an optimal speed of ``drawing'' along given path.
\subsection*{Different parameterizations, arcs in the geometries of affine group in plane and some of its subgroups, equations and solutions}\label{section:examples_equation_different_geometries}
\par Different kinds of invariance were analyzed in the studies of action and perception of motion. For example, point-to-point hand movements are assumed to produce nearly straight paths. Straight trajectories parameterized with Euclidian arc \eqref{Euclidian_speed} and having constant Euclidian velocity $\dot{\sigma}_{eu}$ satisfy systems \eqref{eq:general_form_lower_order} and \eqref{eq:general_form_higher_order} with arbitrary degree of smoothness. Straight paths are meaningful only in Euclidian geometry among the six geometries considered in this study\footnote{Equi-affine and similarity arcs of straight segments are zero, center-affine and affine arcs are not defined, equi-center-affine arc is zero when straight line crosses the origin (the arcs are introduced below).}. More complex movements were analyzed in the frameworks of equi-affine and affine geometries \cite{Pollick_Shapiro_1997, Handzel_Flash_1999, Polyakov:2001, Polyakov_etc_2001, Polyakov:2006, Flash_Handzel:2007, Polyakov_et_al_B.Cyb:2009, Polyakov_et_al_PLoS_C_B:2009, Bennequin:2009}. In this section I present system  \eqref{eq:general_form_higher_order} for different degrees of smoothness $n$ while assuming constant speed of accumulating arc in different geometries. The expressions for the rate of accumulating arc that are presented below are based on the results from very useful book by Shirokov \& Shirokov \cite{Shirokovy_1959}. Information about the relationship between the derived system of equations \eqref{eq:general_form_higher_order} and candidate solutions is summarized in Table \ref{Table_mathematical_objects}.

\par Explicit relationship $x(\sigma),\; y(\sigma),\; z(\sigma)$ is shown below for some of the curves in a number of geometries, $z$ is not relevant for plane curves of course. The expressions for $\dot{\sigma}(t)$ are provided for every geometry/parametrization under consideration.

\subsubsection*{Equi-affine group}\label{subsection:equi-affine_group}
\par Equi-affine transformations of coordinates involve 5 independent parameters and are of the form:
\begin{equation}\label{eq:equi-affine_transformation}
  \begin{array}{c}
      x_1 = \alpha x + \beta y + a \\
      y_1 = \gamma x + \delta y\, + b , \
  \end{array}
  \quad \left |
     \begin{array}{cc}
        \alpha & \beta \\
        \gamma & \delta
     \end{array}
  \right |  = 1\, .
\end{equation}
The rate of accumulating equi-affine arc is called equi-affine velocity. It is computed as follows \cite{Shirokovy_1959}:
\begin{equation}\label{eq:ea_speed}
    \dot{\sigma}_{ea} = \left |
     \begin{array}{cc}
        \dot{x} & \ddot{x} \\
        \dot{y} & \ddot{y}
     \end{array}
  \right | ^{1 / 3}\; .
\end{equation}
System \eqref{eq:general_form_higher_order} in plane becomes:
\begin{equation}\label{eq:2d_case_equiaffine}
    \left\{
    \begin{array}{l}
         x' x^{(2n)} +  y' y^{(2n)}  = 0 \\
        x' y'' - x'' y' = 1 \, .
    \end{array}
    \right.
\end{equation}
The lower equation guarantees that the solution of the upper equation is parameterized with the equi-affine arc
\begin{equation}\label{eq:ea_arc}
    \sigma_{ea} = \int_0^t \dot{\sigma}_{ea}(\tau) d\tau\; .
\end{equation}
Earlier studies used system \eqref{eq:2d_case_equiaffine} for the minimum-jerk cost functional ($n = 3$) \cite{Polyakov:2001, Polyakov:2006, Polyakov_et_al_B.Cyb:2009}. Here concrete curves are considered as candidate solutions, some of them are filtered out for the prescribed equi-affine parametrization (Approach \ref{lbl:approaches_candidate_solutions2}). The lower equation is automatically satisfied when a curve is parameterized with desired geometric measurement or when substitution \eqref{eq:formulae_substituttions} is used for such measurement.
\par The 1st and the 3rd order derivatives of the position vector of a planar curve $\mathbf{r}(\sigma_{ea})$ with respect to the equi-affine arc are parallel. The parallelism follows from the identity $x' y'' - x'' y' = 1$ which, in particular, appears in the system \eqref{eq:2d_case_equiaffine}. Apparently, the equi-affine curvature of a curve \cite{Shirokovy_1959, Guggenheimer:1977, Calabi_Olver_Tannenbaum_1996}
  \begin{equation}\label{eq:ea_curvature}
        \kappa_{ea}(\sigma_{ea}) = x''(\sigma_{ea}) y'''(\sigma_{ea}) - x'''(\sigma_{ea}) y''(\sigma_{ea}) \;
  \end{equation}
is a scaling factor between the 1st and the 3rd order derivatives of the position vector: $\mathbf{r}'''(\sigma_{ea}) + \kappa(\sigma_{ea}) \mathbf{r}'(\sigma_{ea}) = 0$
which implies the possibility to express higher order derivatives of the vector $\mathbf{r}(\sigma_{ea})$ in terms of its 1st and 2nd order derivatives when equi-affine curvature is a known function of the equi-affine arc. In particular, system \eqref{eq:2d_case_equiaffine} with $n = 3$ can be rewritten. Given that
\begin{equation}\label{eq:ea_curvature_scaling_factor}
    \mathbf{r}'''(\sigma_{ea}) = -\kappa_{ea}(\sigma_{ea}) \mathbf{r}'(\sigma_{ea})\, ,
\end{equation}
\begin{equation*}
    \mathbf{r}^{(6)}(\sigma_{ea}) = \mathbf{r}''(\sigma_{ea}) (\kappa_{ea}^2(\sigma_{ea}) - 3\kappa_{ea}''(\sigma_{ea})) + \mathbf{r}'(\sigma_{ea})(4 \kappa_{ea}'(\sigma_{ea}) \kappa_{ea}(\sigma_{ea}) - \kappa_{ea}'''(\sigma_{ea}))\;
\end{equation*}
and the upper equation of the system \eqref{eq:2d_case_equiaffine} for the case of $n = 3$ ($\mathbf{r}' \cdot \mathbf{r}^{(6)} = 0$) becomes
\begin{equation}\label{eq:2d_case_rprime_r6primes_via_eq_aff_curvature}
    \begin{array}{rcl}
        \mathbf{r}'(\sigma_{ea}) \cdot \mathbf{r}^{(6)}(\sigma_{ea}) & = & (\mathbf{r}'(\sigma_{ea}) \cdot \mathbf{r}''(\sigma_{ea})) (\kappa_{ea}^2(\sigma_{ea}) - 3\kappa_{ea}''(\sigma_{ea}))\\
            & + & {\mathbf{r}'}^2(\sigma_{ea})(4 \kappa_{ea}'(\sigma_{ea}) \kappa_{ea}(\sigma_{ea}) - \kappa_{ea}'''(\sigma_{ea})) = 0\; .
    \end{array}
\end{equation}
Noting that $\mathbf{r}'(\sigma_{ea}) \cdot \mathbf{r}''(\sigma_{ea}) = \displaystyle \frac{1}{2} \frac{d}{d\sigma_{ea}} ({\mathbf{r}'}^2(\sigma_{ea}))$, equation \eqref{eq:2d_case_rprime_r6primes_via_eq_aff_curvature} implies that
\begin{equation}\label{eq:equi_affine_der_squared_der}
    \frac{1}{2} \frac{d}{d\sigma_{ea}} ({\mathbf{r}'}^2(\sigma_{ea})) = {\mathbf{r}'}^2(\sigma_{ea}) \frac{4 \kappa_{ea}'(\sigma_{ea}) \kappa_{ea}(\sigma_{ea}) - \kappa_{ea}'''(\sigma_{ea})}{3\kappa_{ea}''(\sigma_{ea}) - \kappa_{ea}^2(\sigma_{ea})}\; .
\end{equation}
After integrating \eqref{eq:equi_affine_der_squared_der} the system \eqref{eq:2d_case_equiaffine} with $n = 3$ can be rewritten as follows:
\begin{equation}\label{eq:2d_case_equiaffine_via_eq_aff_curvature}
        {\mathbf{r}'}^2(\sigma_{ea}) = {\mathbf{r}'}^2(0) \exp\left[2 \left(F(\sigma_{ea}) -  F(0)\right)\right]\, ,
\end{equation}
where
\begin{equation*}
    F(\sigma_{ea}) = \int \frac{4 \kappa_{ea}' \kappa_{ea} - \kappa_{ea}'''}{3\kappa_{ea}'' - \kappa_{ea}^2} d\sigma_{ea} \; .
\end{equation*}
\par The expression for the equi-affine curvature can also be written for an arbitrary artument $t$ \cite{Shirokovy_1959}:
\begin{equation}\label{eq:ea_curvature_vs_t}
    \kappa_{ea}(t) = \sqrt{\frac{3 \dot{\sigma}_{ea}^3 \cdot \left |
     \begin{array}{cc}
        \dot{x} & d^4x / d t^4 \\
        \dot{y} & d^4y / d t^4
     \end{array}
  \right |
   + 12 \dot{\sigma}_{ea}^3 \cdot
    \left |
        \begin{array}{cc}
            \ddot{x} & \dddot{x} \\
            \ddot{y} & \dddot{y}
        \end{array}
    \right | -
    5 \left |
        \begin{array}{cc}
            \dot{x} & \dddot{x} \\
            \dot{y} & \dddot{y}
        \end{array}
    \right |^2
  } {9 \dot{\sigma}_{ea}^{8}}} \, .
\end{equation}
\vspace{0.5cm}

\par The results for candidate solutions (parabola, circle, logarithmic spiral) are as follows:
\begin{enumerate}
  \item \textbf{Parabola} is parameterized with equi-affine arc, up to an equi-affine transformation, as follows:
   \begin{equation}\label{eq:parabola_canonical}
        \begin{array}{l}
            x = \sigma_{ea} \\
            y = {\sigma_{ea}}^2 / 2 \; .
        \end{array}
  \end{equation}
  So $x$ and $y$ coordinates of the parabola in equi-affine parametrization are always polynomials of up to 2-nd degree with respect to $\sigma_{ea}$. The class of parabolas constitutes obvious solution of \eqref{eq:2d_case_equiaffine} for $n \geq 2$. The class of parabolas is invariant under arbitrary affine transformations \cite{Polyakov:2006, Polyakov_et_al_B.Cyb:2009}. Drawing parabolas with constant equi-affine velocity does minimize the cost functional \eqref{general_cost_function1} for $n \geq 2$ and provides zero cost. Corresponding results for $n = 3$ were reported in \cite{Polyakov:2001, Polyakov:2006, Polyakov_et_al_B.Cyb:2009}.
  \par Measuring path along parabola with equi-affine arc has the following interesting property.
   If, for three points $F$, $G$, $H$ on a
                    parabola, the chord $FH$ is parallel
                    to the tangent to the parabola at the intermediate point
                    $G$, then the equi-affine arc measured along parabola between
                    $F$ to $G$ equals to the equi-affine arc measured
                    between $G$ to $H$\footnote{The proof is provided in Appendix D of \cite{Polyakov:2006}.}. Such chord and tangent are demonstrated in Figure \ref{fig:3_points_parabola_chord_tangent}.
   \begin{figure}
 \vspace*{-1.5cm}
\begin{tabular}{c}
\psfig{figure=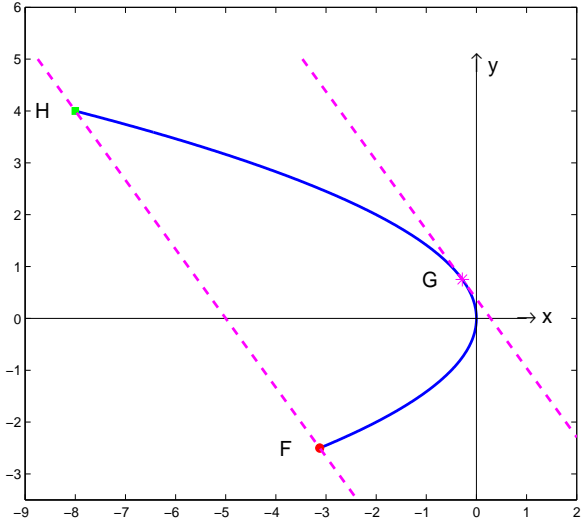,
height=70mm,width=77mm} %
\end{tabular}
\caption{\small \textbf{Tangent to a parabola and parallel chord are connected with path's portions having equal equi-affine arc.} Tangent to depicted parabolic segment at point $G$ is parallel to the chord $FH$. Equi-affine arc between $F$ and $G$ is equal to the equi-affine arc between $G$ and $H$. In the canonical coordinate system parabola is described by the relationship $x = -y^2/(2p)$, where $p$ is the focal parameter. If tangent is parallel to the chord then, in canonical coordinate system, $y$ coordinate of the point of touch by the tangent ($G$) is average of the $y$-coordinates of the chord ($FH$)}\label{fig:3_points_parabola_chord_tangent}
\end{figure}   
  \item \textbf{Circle} is parameterized with equi-affine arc as follows:
  \begin{equation}\label{eq:circle_equi_affine}
        \begin{array}{l}
        x = x_0 + \kappa_{ea}^{-3/4} \cdot \cos(\sqrt{\kappa_{ea}} \sigma_{ea}) \\
        y = y_0 + \kappa_{ea}^{-3/4} \cdot \sin(\sqrt{\kappa_{ea}} \sigma_{ea})\;.
        \end{array}
  \end{equation}
 For an arbitrary $n \geq 1$ circle is non-invariant solution under arbitrary equi-affine transformations, however circles are invariant solutions (remain circles) under Euclidian transformations  \eqref{eq:Eucl_transformation} composed with uniform scaling \eqref{eq:scaling_transformation_no_reflection} including reflection \eqref{eq:scaling_transformation_with_reflection}.
  \item \textbf{Logarithmic spiral} can be parameterized with polar angle:
     \begin{equation}\label{eq:log_spiral_polar}
        \begin{array}{l}
        x = x_{0} + \mathrm{const} \cdot e^{\beta \varphi} \cdot \cos{\varphi}\\
        y = y_{0} + \mathrm{const} \cdot e^{\beta \varphi} \cdot \sin{\varphi}\; .
        \end{array}
     \end{equation}
     Introducing parametrization with equi-affine arc from \eqref{eq:ea_speed} and integrating
      $$d\sigma_{ea} / d\varphi = (\mathrm{const} ^{2} (1 + \beta^2))^{1/3} \cdot  e^{2 \beta \varphi / 3}$$
     with initial condition $\varphi(0) = 0$ results in:
      $$\varphi(\sigma_{ea}) = \frac{3}{2 \beta} \ln\left( \frac{2\beta\sigma_{ea}}{3} \cdot (\mathrm{const} ^{2} (1 + \beta^2))^{-1/3} + 1 \right)\; $$
      which can be substituted into \eqref{eq:log_spiral_polar} to imply the expressions for $x(\sigma_{ea}),\, y(\sigma_{ea})$. Apparently, \eqref{eq:log_spiral_polar} is solution for \eqref{eq:2d_case_equiaffine} only for certain values of $\beta$ which depend on the degree of smoothness $n$. The values of $\beta$ corresponding to $n \leq 5$ are as follows.
      \begin{enumerate}
        \item $n = 1,\, 2$: no solutions.
        \item $n = 3$: $\beta = \pm 3 / \sqrt{7}$ (see also \cite{Bright:2006, Polyakov_et_al_B.Cyb:2009}) or $\beta \approx \pm 1.13389$.
        \item $n = 4$: $\beta = \pm \sqrt{\frac{43 \pm 4 \sqrt{97}}{33}}$ or $\beta \approx \pm 0.330499$, $\beta \approx \pm1.58014$.
        \item $n = 5$: $\beta = \pm \sqrt{\frac{3}{13}}$ and $\beta = \pm\sqrt{\frac{489 \pm 12 \sqrt{1609}}{275}}$ or $\beta \approx \pm 0.166808$, $\beta \approx \pm 0.480384$, $\beta \approx \pm 1.87844$.
      \end{enumerate}
\end{enumerate}
\vspace{0.3cm}

\subsubsection*{Affine group}
\par Planar affine transformations of coordinates involve 6 independent parameters and are of the form:
\begin{equation}\label{eq:affine_transformation}
  \begin{array}{c}
      x_1 = \alpha x + \beta y + a \\
      y_1 = \gamma x + \delta y\, + b , \
  \end{array}
  \quad \left |
     \begin{array}{cc}
        \alpha & \beta \\
        \gamma & \delta
     \end{array}
  \right | \neq 0\, .
\end{equation}
The speed of accumulating affine arc is computed as follows\footnote{The formula for the speed of accumulating affine arc parameterized with an arbitrary parameter $t$ (the middle formula in \eqref{eq:aff.speed}) in \cite{Shirokovy_1959} contains misprint and therefore it is different from \eqref{eq:aff.speed}. The rightmost formula in \eqref{eq:aff.speed} can be found in  \cite{Shirokovy_1959}.}:
\begin{equation}\label{eq:aff.speed}
    \dot{\sigma}_{a} = \sqrt{\frac{3 \dot{\sigma}_{ea}^3 \cdot \left |
     \begin{array}{cc}
        \dot{x} & d^4x / d t^4 \\
        \dot{y} & d^4y / d t^4
     \end{array}
  \right |
   + 12 \dot{\sigma}_{ea}^3 \cdot
    \left |
        \begin{array}{cc}
            \ddot{x} & \dddot{x} \\
            \ddot{y} & \dddot{y}
        \end{array}
    \right | -
    5 \left |
        \begin{array}{cc}
            \dot{x} & \dddot{x} \\
            \dot{y} & \dddot{y}
        \end{array}
    \right |^2
  } {9 \dot{\sigma}_{ea}^{6}}} = \dot{\sigma}_{ea} \sqrt{\kappa_{ea}}\; ,
\end{equation}
where $\dot{\sigma}_{ea}$ is equi-affine velocity \eqref{eq:ea_speed} and $\kappa_{ea}$ is equi-affine curvature \eqref{eq:ea_curvature}.
System \eqref{eq:general_form_higher_order} in plane becomes
\begin{equation}\label{eq:2d_case_affine}
    \left\{
    \begin{array}{l}
        x' x^{(2n)} +  y' y^{(2n)}  = 0 \\
        \displaystyle\frac{3 (x' y'' - x'' y') \cdot \left |
     \begin{array}{cc}
        x' & x^{(4)} \\
        y' & y^{(4)}
     \end{array}
  \right |
   + 12 (x' y'' - x'' y') \cdot
    \left |
        \begin{array}{cc}
            x'' & x''' \\
            y'' & y'''
        \end{array}
    \right | -
    5 \left |
        \begin{array}{cc}
            x' & x''' \\
            y' & y'''
        \end{array}
    \right |^2
  } {9 (x' y'' - x'' y')^{2}} =  1 \; .   \end{array}
    \right.
\end{equation}
Here concrete curves are considered as candidate solutions, some of them are filtered out for the prescribed affine parametrization (Approach \ref{lbl:approaches_candidate_solutions2}). The lower equation is automatically satisfied when a curve is parameterized with desired geometric measurement or when substitution \eqref{eq:formulae_substituttions} is used for such measurement.
The results for candidate solutions (parabola, circle, logarithmic spiral) are as follows:
\begin{enumerate}
  \item \textbf{Parabolas}' affine arc is zero, same as equ-affine arc of the straight line is zero or Euclidian length of a point is zero. Therefore testing whether parabolas are solutions of the system \eqref{eq:2d_case_affine} is meaningless. Affine curvature \cite{Shirokovy_1959}
      \begin{equation}\label{eq:affine_curvature}
            \kappa_{a} = \kappa_{ea}^{-3/2} \cdot d\kappa_{ea} / d\sigma_{ea}
      \end{equation}
       of a parabola is not defined.
      \vspace{0.2cm}
      \par\textbf{Non-ambiguity in sequences of affine transformations for piece-wise parabolic representation of ``drawing'' patterns.} Equi-affine curvature \eqref{eq:ea_curvature} is zero for parabolas, positive constant for ellipses including circles, and negative constant for hyperbolas \cite{Shirokovy_1959}. The main theorem of the equi-affine theory of plane curves states that ``the natural equation $\kappa_{ea} = f(\sigma_{ea})$ defines a plane curve up to an arbitrary equi-affine transformation" \cite{Shirokovy_1959}. Therefore given two arbitrary parabolic segments having the same equi-affine arc and noting that their equi-affine curvatures are identical, one segment can be transformed into the other by applying a unique equi-affine transformation \cite{Polyakov:2006, Polyakov_et_al_B.Cyb:2009}.
      \par Consequently, given two arbitrary parabolic segments with prescribed initial and final points, there exists a unique affine transformation \eqref{eq:affine_transformation} mapping one segment to the other\footnote{Already without requirement for equality of their equi-affine arcs as in case of equi-affine transformation.} in such a way that initial and final points are matched. However when direction of drawing is not prescribed there exist two affine transformations \eqref{eq:affine_transformation} mapping one parabolic segment to the other. The determinants of the linear part of the two transformations have opposite signs and the same absolute value, say $|w|$.
      \par Specifically, say parabolic segment $P_1$ connects point $A_1$ to $B_1$ and parabolic segment $P_2$, belonging to the same plane, connects point $A_2$ to $B_2$, as depicted in Figure \ref{fig:parabolas}. There exists unique affine transformation mapping $P_1$ into $P_2$ in such a way that $A_1$ is mapped into $A_2$ (correspondingly $B_1$ into $B_2$). There also exists unique affine transformation mapping $P_1$ into $P_2$ and $A_1$ into $B_2$ (correspondingly $B_1$ into $A_2$). Note also that $|w| = \left|\sigma_{ea, 2} / \sigma_{ea, 1}\right|^{3/2}$, $\sigma_{ea, i}$ is equi-affine arc \eqref{eq:ea_arc} of the $i$th segment. This property of existence and uniqueness of affine transformation relating two oriented parabolic segments follows from the main theorem of the equi-affine theory of plane curves mentioned above and possibility to represent any affine transformation with unique superposition of uniform scaling (either with or without reflection) and equi-affine transformation:
      \begin{equation}\label{eq:affine_is_equi_aff_and_scaling}
        T_{\mbox{affine}} = T_{\mbox{equi-affine}} \circ T_{\mbox{scaling}}\, .
      \end{equation}
      Uniform scaling without reflection is defined as 2x2 identity matrix multiplied by the scaling coefficient:
      \begin{equation}\label{eq:scaling_transformation_no_reflection}
            T_{\mbox{scaling}}^{\mbox{no reflection}} = |w| \cdot \left(
        \begin{array}{cc}
        1 & 0 \\
        0 & 1
     \end{array} \right)\,.
     \end{equation}
     Uniform scaling with reflection is defined as reflection transformation multiplied by the scaling coefficient:
     \begin{equation}\label{eq:scaling_transformation_with_reflection}
        T_{\mbox{scaling}}^{\mbox{with reflection}} = |w| \cdot \left(
\begin{array}{cc}
        1 & 0 \\
        0 & -1
     \end{array} \right)\,.
     \end{equation}
\par Parabolic segments connected into a sequence always have well identified initial and final points. Sequences of parabolic-like components revealed in monkey drawing movements \cite{Polyakov:2006, Polyakov_et_al_B.Cyb:2009, Polyakov_et_al_PLoS_C_B:2009} were usually implemented with unchanged direction of motion. Therefore piece-wise parabolic representation of complex patterns with sequences of affine transformations applied to a single (arbitrary!) parabolic template is unambiguous.
\par Decomposition of affine transformation \eqref{eq:affine_is_equi_aff_and_scaling} into sequence of two transformations, one of them scaling, may also take place in processing of spatial information in the brain. In particular, study \cite{Seculer.Nash:1972} demonstrated that a pair of mental transformations, size scaling and rotation, produced additive effects on reaction time, consistent with serial processing of these transformations. Rotations constitute a subgroup of Euclidian transformations that in turn form a subgroup of equi-affine transformations.

  \item \textbf{Circle}. Noting that affine arc is integrated square root of equi-affine curvature \eqref{eq:ea_curvature} ($\sigma_{a} = \int\sqrt{\kappa_{ea}} d\sigma_{ea}$ \cite{Shirokovy_1959}) and that equi-affine curvature of a circle is positive constant one immediately obtains for a circle: $\sigma_{ea} = \kappa_{ea}^{-1/2} \sigma_a$. Substituting into \eqref{eq:circle_equi_affine} one gets:
        \begin{equation}\label{eq:circel_affine}
        \begin{array}{l}
        x = x_0 + \kappa_{ea}^{-3/4} \cdot \cos(\sigma_{a}) \\
        y = y_0 + \kappa_{ea}^{-3/4} \cdot \sin(\sigma_{a})
        \end{array}
  \end{equation}
  which is solution of the system \eqref{eq:2d_case_affine}. Circles constitute non-invariant solutions under arbitrary affine transformation for arbitrary $n \geq 1$. The class of circles is invariant under similarity transformations \eqref{eq:similarity_transformation} and scalings with reflection \eqref{eq:scaling_transformation_with_reflection}.
  \item \textbf{Logarithmic spiral}. The speed of accumulating affine arc of the logarithmic spiral \eqref{eq:log_spiral_polar} with respect to changing polar angle $\varphi$ is the following constant \cite{Bright:2006}:
      \begin{equation*}
        d{\sigma}_a / d\varphi = \displaystyle\frac{\sqrt{9 + \beta^2}}{3}\; .
      \end{equation*}
      So setting $\varphi(0) = 0$,
      \begin{equation*}
            \varphi(\sigma_{a}) = \displaystyle\frac{3}{\sqrt{9 + \beta^2}} \sigma_{a}\; .
      \end{equation*}
      The expression for the logarithmic spiral \eqref{eq:log_spiral_polar} becomes
    \begin{equation}\label{eq:log_spiral_affine}
        \begin{array}{l}
        x(\sigma_{a}) = x_{0} + \mathrm{const} \cdot \exp{\left(\beta \frac{3}{\sqrt{9 + \beta^2}} \sigma_{a}\right) } \cos{\left(\frac{3}{\sqrt{9 + \beta^2}} \sigma_{a}\right)}\\\\
        y(\sigma_{a}) = y_{0} + \mathrm{const} \cdot \exp{\left(\beta \frac{3}{\sqrt{9 + \beta^2}} \sigma_{a}\right) } \sin{\left(\frac{3}{\sqrt{9 + \beta^2}} \sigma_{a}\right)}\; .
         \end{array}
     \end{equation}
     The values of $\beta$ for which logarithmic spiral satisfies the system \eqref{eq:2d_case_affine} depend on the degree of smoothness $n$ as follows.
     \begin{enumerate}
        \item $n = 1$: no solutions.
        \item $n = 2$: $\beta = \pm\sqrt{3}$ or $\beta \approx \pm 1.73205$.
        \item $n = 3$: $\beta = \pm\sqrt{5 \pm 2\sqrt{5}}$ (see also \cite{Bright:2006, Meirovitch:2014}) or $\beta \approx \pm 0.726543,\,  \pm 3.07768$.
        \item $n = 4$: $\beta^6 - 21\beta^4 + 35\beta^2 = 7$ implying\footnote{ Of course, exact solutions with roots can be written for this cubic equation.} $\beta \approx \pm 0.481575,\, \pm 1.25396,\, \pm 4.38129$.
        \item $n = 5$: $\beta^6 - 33 \beta^4 + 27\beta^2 = 3$ implying $\beta \approx \pm 0.36397,\, \pm 0.8391,\, \pm 5.67128$.
     \end{enumerate}
\end{enumerate}

 \begin{figure}
 \vspace*{-1.5cm}
\psfig{figure=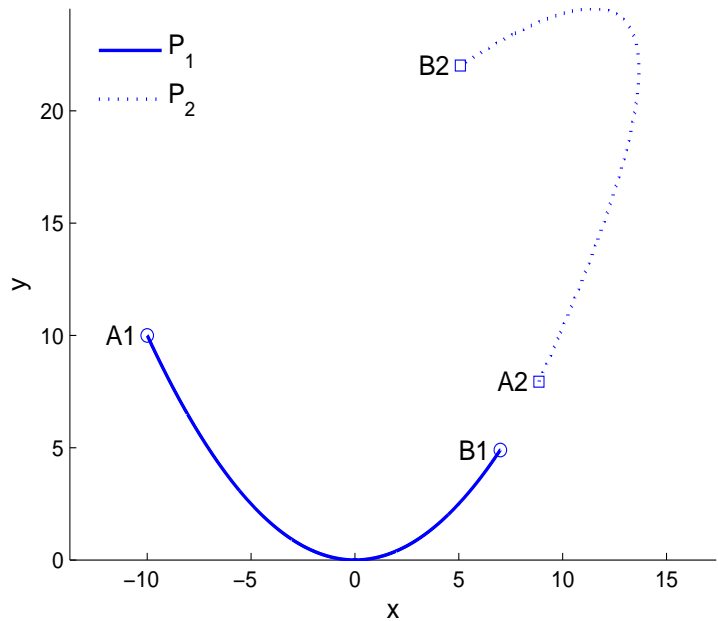,
height=85mm,width=105mm} %
\caption{\small \textbf{Two oriented parabolas.} Parabolic segment $P_1$ connects point $A_1$ to $B_1$ and is "drawn" counter-clock-wise. Parabolic segment $P_2$ connects point $A_2$ to $B_2$ and is also "drawn" counter-clock-wise. Affine transformation \eqref{eq:affine_transformation} that maps $P_1$ to $P_2$ and preserves the counter-clock-wise orientation has positive determinant of its linear part and transfers point $A_1$ into point $A_2$, correspondingly $B_1$ into $B_2$. There also exists a unique affine transformation that maps $P_1$ into $P_2$ while transferring $A_1$ into $B_2$, correspondingly $B_2$ into $A_1$. The determinant of the linear part of such transformation is negative, orientation of "drawing" flips from counter-clock-wise ($A_2$ towards $B_2$) into clock-wise ($B_2$ towards $A_2$).}\label{fig:parabolas}
      \end{figure}

\subsubsection*{Center-affine group}
\par Center-affine transformations of coordinates involve 4 independent parameters and are of the form:
\begin{equation}\label{eq:center_affine_transformation}
  \begin{array}{c}
      x_1 = \alpha x + \beta y \\
      y_1 = \gamma x + \delta y\, , \
  \end{array}
  \quad \left|
     \begin{array}{cc}
        \alpha & \beta \\
        \gamma & \delta
     \end{array}
  \right| \neq 0\, .
\end{equation}
The speed of accumulating center-affine arc is computed as follows \cite{Shirokovy_1959}:
\begin{equation}\label{eq:center_affine_speed}
    \dot{\sigma}_{ca} = \sqrt{\epsilon \, \frac{{\dot{\sigma}_{ea}}^3}{\left |
     \begin{array}{cc}
        x & \dot{x} \\
        y & \dot{y}
     \end{array}
  \right |}}\; ,
\end{equation}
with $\epsilon = \mbox{sign}\left[(x' y'' - y' x'') \cdot (x y' - y x')\right]$.
System \eqref{eq:general_form_higher_order} in plane  becomes
\begin{equation}\label{eq:2d_case_centeraffine}
    \left\{
    \begin{array}{l}
         x' x^{(2n)} +  y' y^{(2n)}  = 0  \\ \\
        \left(\displaystyle\frac{x' y'' - x'' y'}{x y' - x' y}\right) ^ 2 = 1 \, .
    \end{array}
    \right.
\end{equation}
Here concrete curves are considered as candidate solutions, some of them are filtered out for the prescribed center-affine parametrization (Approach \ref{lbl:approaches_candidate_solutions2}). The lower equation is automatically satisfied when a curve is parameterized with desired geometric measurement or when substitution \eqref{eq:formulae_substituttions} is used for such measurement.
The results for candidate solutions (parabola, circle, logarithmic spiral) are as follows:
\begin{enumerate}
  \item \textbf{Parabola}. As mentioned above, drawing parabola with constant equi-affine velocity does minimize the cost functional for $n \geq 2$. Drawing parabola with constant equi-affine velocity results in non-constant denominator in the expression of center-affine speed \eqref{eq:center_affine_speed}\footnote{Take $\sigma_{ea}(t) = t$ and substitute vector function describing a parabola, $x = \sigma_{ea},\; y = {\sigma_{ea}}^2 / 2$ into the denominator of \eqref{eq:center_affine_speed}.}. Therefore expression \eqref{eq:center_affine_speed} implies that the center-affine speed of the optimal trajectory along parabola is not constant and so the upper equation of the system \eqref{eq:2d_case_centeraffine} is not satisfied.
  \item \textbf{Circle}. Using parametrization of a circle with equi-affine arc as in \eqref{eq:circle_equi_affine}, equation \eqref{eq:center_affine_speed} implies that for a circle whose center coincides with the origin the center-affine speed is related to the equi-affine velocity as follows:
      $$\dot{\sigma}_{ca} = {\kappa_{ea}}^{1/8}\cdot{\dot{\sigma}_{ea}}\; .$$
      Equi-affine curvature of a circle is positive constant. Therefore drawing a circle with constant center-affine speed is equivalent to drawing a circle with constant equi-affine velocity. So system \eqref{eq:2d_case_centeraffine} is satisfied for any circle centered at the origin and for any $n \geq 1$ as circles satisfy the upper equation for the equi-affine parametrization.
  \item \textbf{Logarithmic spiral}. Direct computation based on \eqref{eq:center_affine_speed} implies that the speed of accumulating center-affine arc with respect to changing polar angle $\varphi$ from \eqref{eq:log_spiral_polar} is the following constant
      $$d{\sigma}_{ca} / d\varphi =\sqrt{1 + \beta^2}\;.$$
      Assuming the curve is centered at the origin (for every point on the curve, the angle between tangent to the curve at that point and the vector connecting the point to the origin is the same) and $\varphi(0) = 0$,
      $$\varphi(\sigma_{ca}) = \displaystyle\sqrt{\frac{1}{{1 + \beta^2}}} \sigma_{ca}\;.$$
      The expression for the logarithmic spiral \eqref{eq:log_spiral_polar} becomes
    \begin{equation}\label{eq:log_spiral_center_affine}
        \begin{array}{l}
        x(\sigma_{ca}) = \mathrm{const} \cdot \exp{\left(\beta \sqrt{\frac{1}{{1 + \beta^2}}} \sigma_{ca}\right) } \cos{\left(\sqrt{\frac{1}{{1 + \beta^2}}} \sigma_{ca}\right)}\\
        y(\sigma_{ca}) = \mathrm{const} \cdot \exp{\left(\beta \sqrt{\frac{1}{{1 + \beta^2}}} \sigma_{ca}\right) } \sin{\left(\sqrt{\frac{1}{{1 + \beta^2}}} \sigma_{ca}\right)}\; .
         \end{array}
     \end{equation}
     Logarithmic spiral satisfies the system \eqref{eq:2d_case_centeraffine} while corresponding values of $\beta$ depend on the degree of smoothness. The dependence is the same as in the case of affine arc!
\end{enumerate}

\subsubsection*{Equi-center-affine group}
\par Planar equi-center-affine transformations of coordinates involve 3 independent parameters and are of the form:
\begin{equation}\label{eq:equi_center_affine_transformation}
  \begin{array}{c}
      x_1 = \alpha x + \beta y \\
      y_1 = \gamma x + \delta y\, , \
  \end{array}
  \quad \left|
     \begin{array}{cc}
        \alpha & \beta \\
        \gamma & \delta
     \end{array}
  \right| = 1\, .
\end{equation}
The speed of accumulating equi-center-affine arc is computed as follows \cite{Shirokovy_1959}:
\begin{equation}\label{eq:equi_center_affine_speed}
    \dot{\sigma}_{eca} = \left |
     \begin{array}{cc}
        x & \dot{x} \\
        y & \dot{y}
     \end{array}
  \right |\; .
\end{equation}
System \eqref{eq:general_form_higher_order} in plane becomes
\begin{equation}\label{eq:2d_case_equicenteraffine}
    \left\{
    \begin{array}{l}
         x' x^{(2n)} +  y' y^{(2n)}  = 0  \\ \\
        x y' - x' y = 1 \, .
    \end{array}
    \right.
\end{equation}
Here concrete curves are considered as candidate solutions, some of them are filtered out for the prescribed equi-center-affine parametrization (Approach \ref{lbl:approaches_candidate_solutions2}). The lower equation is automatically satisfied when a curve is parameterized with desired geometric measurement or when substitution \eqref{eq:formulae_substituttions} is used for such measurement.
The results for candidate solutions (parabola, circle, logarithmic spiral) are as follows:
\begin{enumerate}
  \item \textbf{Parabola}. Note that the square root of the equi-center-affine speed is exactly the denominator in expression \eqref{eq:center_affine_speed} for the center-affine speed and is non-constant when the equi-affine speed is constant. Therefore optimal drawing of parabola has non-constant equi-center-affine speed, more detailed explanation is provided in case of the center-affine arc.
  \item \textbf{Circle}. Again, note that the square root of the equi-center-affine speed is equal to denominator of the center-affine speed \eqref{eq:center_affine_speed} while the numerator is a function of the equi-affine speed. Given that both the equi-affine and center-affine speeds are constant for optimal drawing of a circle centered at the origin, conclude that the equi-center-affine speed is also constant for this motion.
  \item \textbf{Logarithmic spiral}. Direct computation based on \eqref{eq:equi_center_affine_speed} implies that the speed of accumulating equi-center-affine arc with respect to changing polar angle $\varphi$ from \eqref{eq:log_spiral_polar} is the following constant
      $$d{\sigma}_{eca} / d\varphi = \mathrm{const} \cdot \exp\left(2 \beta \varphi \right)\;.$$
      Assuming the curve is centered at the origin as explained in the part related to the center-affine arc and $\varphi(0) = 0$,
      \begin{equation}\label{eq:log_spiral_phi_vs_sigma_eca}
            \varphi(\sigma_{eca}) = \displaystyle\frac{1}{2 \beta} \ln\left( \frac{2\, \beta\, \sigma_{eac}}{\mathrm{const}} + 1 \right)\;.
      \end{equation}
      The expression for $x$ and $y$ coordinates of the logarithmic spiral can be obtained by substituting \eqref{eq:log_spiral_phi_vs_sigma_eca} into \eqref{eq:log_spiral_polar} and setting $x_0 = y_0 = 0$.

      The values of $\beta$ for which logarithmic spiral parameterized with equi-center-affine arc satisfies the upper equation of system \eqref{eq:2d_case_equicenteraffine} depend on the degree of smoothness $n$ as follows.
     \begin{enumerate}
        \item $n = 1$: no solutions.
        \item $n = 2$: $\beta = \pm\sqrt{0.6}$.
        \item $n = 3$: $\beta = \pm\sqrt{\displaystyle\frac{1}{189} \left( 95 - 4 \sqrt{505} \right)} \approx \pm0.164448$.
        \item $n = 4$: $19305 \beta^6  - 25333 \beta^4 + 1435 \beta^2 - 7$ implying\footnote{Expressions with roots can be written for solutions of this cubic equation.} \\ $\beta \approx \pm 0.0734067,\, \pm 1.11945,\, \pm 0.231726$.
        \item $n = 5$: $ 1276275 \beta^8 - 1992932 \beta^6 + 169442 \beta^4  3 - 1988 \beta^2 + 3 = 0$ implying\footnote{Expressions with roots can be written for solutions of this quartic equation.} $\beta \approx \pm 0.0420802 ,\, \pm 0.109069 ,\, \pm 0.275325 ,\, \pm 1.21328 $.
     \end{enumerate}

\end{enumerate}

\subsubsection*{Euclidian group}\label{subsection:Euclidian_group}
\par Euclidian transformations of coordinates are 3-parametric, they are of the form:
\begin{equation}\label{eq:Eucl_transformation}
\begin{array}{c}
      x_1 =  x \cos\theta - y \sin\theta + a \\
      y_1 = x \sin\theta + y \cos\theta + b\, , \
  \end{array}
\end{equation}
The speed of accumulating Euclidian arc which is a standard notion of tangential speed whose integral is equal to the length of the drawn trajectory is computed as follows:
\begin{equation}\label{eq:Euclidian_speed}
    \dot{\sigma}_{eu} = \sqrt{\dot{x}^2 + \dot{y}^2}\; .
\end{equation}
System \eqref{eq:general_form_higher_order} in plane becomes
\begin{equation}\label{eq:2d_case_Euclidian}
    \left\{
    \begin{array}{l}
         x' x^{(2n)} +  y' y^{(2n)}  = 0  \\
         \\
        \displaystyle\sqrt{{x'}^2 + {y'}^2} = 1 \, .
    \end{array}
    \right.
\end{equation}
Here concrete curves are considered as candidate solutions, some of them are filtered out for the prescribed Euclidian parametrization (Approach \ref{lbl:approaches_candidate_solutions2}). The lower equation is automatically satisfied when a curve is parameterized with desired geometric measurement or when substitution \eqref{eq:formulae_substituttions} is used for such measurement.
The results for candidate solutions (parabola, circle, logarithmic spiral) are as follows:
\begin{enumerate}
  \item \textbf{Parabola}. Drawing parabola with constant equi-affine velocity does minimize the jerk and has non-constant Euclidian speed. Therefore the upper equation of the system \eqref{eq:2d_case_Euclidian} is not satisfied for parabolas parameterized with Euclidian arc.
  \item \textbf{Circle}. Motion with constant Euclidian speed along a circle is equivalent to motion with constant angular speed, therefore it is also equivalent to motion with constant equi-affine velocity. So the system \eqref{eq:2d_case_Euclidian} is satisfied for circles.
  \item \textbf{Logarithmic spiral}. Direct computation based on \eqref{eq:log_spiral_polar} and \eqref{eq:Euclidian_speed} implies that the speed of accumulating Euclidian arc-length with respect to changing polar angle $\varphi$ is the following expression
      $$d{\sigma}_{eu} / d\varphi =\mathrm{const} \cdot \sqrt{1 + \beta^2} \cdot e^{\beta\varphi}\; .$$
      After setting $\varphi(0) = 0$,
      $$\varphi(\sigma_{eu}) = \displaystyle\ln\left(\frac{\beta}{\mathrm{const} \cdot \sqrt{1 + \beta^2}}\sigma_{eu} + 1\right) / \beta\; .$$
       The expression for the logarithmic spiral \eqref{eq:log_spiral_polar}, up to Euclidian transformations, becomes
    \begin{equation}\label{eq:log_spiral_euclidian}
        \begin{array}{l}
        x(\sigma_{eu}) = \left(\frac{\beta}{\sqrt{1 + \beta^2}}\sigma_{eu} + \mathrm{const}\right) \cos{\left[ \ln\left(\frac{\beta}{\mathrm{const} \sqrt{1 + \beta^2}}\sigma_{eu} + 1\right) / \beta \right]}\\ \\
        y(\sigma_{eu}) = \left(\frac{\beta}{\sqrt{1 + \beta^2}}\sigma_{eu} + \mathrm{const}\right) \sin{\left[ \ln\left(\frac{\beta}{\mathrm{const} \sqrt{1 + \beta^2}}\sigma_{eu} + 1\right) / \beta \right]}\; .
         \end{array}
     \end{equation}
     Logarithmic spiral satisfies the system \eqref{eq:2d_case_Euclidian} when the value of $\beta$ is chosen appropriately for each value of the degree of smoothness $n$. For $n \leq 5$ the values of $\beta$ are as follows.
     \begin{enumerate}
        \item $n = 1$: $\beta$ is arbitrary. However the case of $n = 1$ does not actually correspond to a practical optimization problem because the corresponding cost function obtains the same values for any rule $\sigma(t)$, either having constant derivative or not.
        \item $n = 2$: no solutions.
        \item $n = 3$: $\beta = \pm\frac{1}{\sqrt{5}}$ (see also \cite{Bright:2006, Meirovitch:2014}) or $\beta \approx \pm 0.447214$.
        \item $n = 4$: $84 \beta^4 - 35 \beta^2 + 1 = 0$ implying $\beta = \pm\sqrt{\frac{35 \pm \sqrt{889}}{168}}$ or $\beta \approx \pm 0.17566,\,  \pm 0.621136$.
        \item $n = 5$: $3044 \beta^6 - 1869 \beta^4 + 126 \beta^2 - 1 = 0$ implying $\beta \approx \pm 0.095726,\, \pm 0.258088,\, \pm 0.733636$.
     \end{enumerate}
    These values of $\beta$ are different from the values of $\beta$ in cases of equi-affine, affine or equi-center-affine arcs.
\end{enumerate}

\subsubsection*{Similarity group}
\par Similarity transformations of coordinates involve 4 independent parameters and are of the form:
\begin{equation}\label{eq:similarity_transformation}
  \begin{array}{c}
      x_1 = \rho \left(x \cos\theta + y \sin\theta\right) + a \\
      y_1 = \rho \left(- x \sin\theta + y \cos\theta\right) + b \, , \
  \end{array}
\end{equation}
The similarity group consists of combined parallel translations and rotations (both forming Euclidian group), and uniform scaling defined by the parameter $\rho$, without reflection. The speed of accumulating the arc in the similarity group is computed as follows \cite{Shirokovy_1959}:
\begin{equation}\label{eq:similarity_speed}
    \dot{\sigma}_{si} = \frac{{\dot{\sigma}_{ea}}^3}{{\dot{\sigma}_{eu}}^2}\; .
\end{equation}
System \eqref{eq:general_form_higher_order} in plane  becomes
\begin{equation}\label{eq:2d_case_similarity}
    \left\{
    \begin{array}{l}
         x' x^{(2n)} +  y' y^{(2n)}  = 0  \\ \\
        \displaystyle\frac{x' y'' - x'' y'}{{x'}^2 + {y'}^2} = 1 \, .
    \end{array}
    \right.
\end{equation}
Here concrete curves are considered as candidate solutions, some of them are filtered out for the prescribed similarity parametrization (Approach \ref{lbl:approaches_candidate_solutions2}). The lower equation is automatically satisfied when a curve is parameterized with desired geometric measurement or when substitution \eqref{eq:formulae_substituttions} is used for such measurement.
The results for candidate solutions (parabola, circle, logarithmic spiral) are as follows:
\begin{enumerate}
  \item \textbf{Parabola}. Equation \eqref{eq:similarity_speed} implies that coordinates of a parabola parameterized with the similarity arc, up to a similarity transformation, are:
     \begin{equation*}
        \begin{array}{l}
        x(\sigma_{si}) = \mathrm{const} \cdot \tan\left( \sigma_{si}\right) \\
        y(\sigma_{si}) = \frac{\mathrm{const}}{2} \tan^2\left(\sigma_{si}\right)\; .
        \end{array}
     \end{equation*}
       As in the case of Euclidian arc, noting that maximally smooth drawing of parabola has constant equi-affine velocity and at the same time non-constant rate of accumulating Euclidian and similarity arcs, conclude that parabolas parameterized with similarity arc do not satisfy the upper equation of the system \eqref{eq:2d_case_similarity}.
  \item \textbf{Circle}. Drawing a circle with constant equi-affine velocity is equivalent to drawing the circle with constant Euclidian speed. Therefore formula \eqref{eq:similarity_speed} implies for circles that constancy of the similarity speed is also equivalent to the constancy of the equi-affine velocity. So system \eqref{eq:2d_case_similarity} is satisfied for any circle as circles satisfy the corresponding system of equations for the equi-affine parametrization.
  \item \textbf{Logarithmic spiral}. Direct computation based on \eqref{eq:similarity_speed} implies that the speed of accumulating similarity arc with respect to changing polar angle $\varphi$ from \eqref{eq:log_spiral_polar} is equal to 1:
      \begin{equation*}
        d{\sigma}_{si} / d\varphi = 1\; .
      \end{equation*}
      Setting $\varphi(0) = 0$, obtain
      \begin{equation*}
            \varphi = \sigma_{si}\; .
      \end{equation*}
      For $n \leq 5$ logarithmic spiral satisfies the system \eqref{eq:2d_case_similarity} for the same values of $\beta(n)$ as in the case of affine and center-affine groups!
\end{enumerate}

\subsubsection*{Parametrization with polar angle}
\par Sufficiently short segments of any smooth curve with non-zero curvature can be appropriately translated and rotated so that they can be further parameterized with polar angle, for example part of a circle translated so that circle's center coincides with the origin. Plane curve parameterized with polar angle is represented in polar coordinates $(\rho,\, \theta),\, \rho \geq 0$ as follows:
\begin{equation*}
    \begin{array}{l}
         x(\theta) = \rho(\theta) \cos(\theta)  \\
         y(\theta) = \rho(\theta) \sin(\theta)\, .
    \end{array}
\end{equation*}
Note that for an arbitrary parametrization $\theta(t)$
\begin{equation*}
    \dot{\theta} = \frac{ \dot{x} y - \dot{y} x }{ x^2 + y^2 }\, .
\end{equation*}
So, for parametrization with polar angle the system \eqref{eq:general_form_higher_order} in plane becomes
\begin{equation}\label{eq:polar_angle}
    \left\{
    \begin{array}{l}
         x' x^{(2n)} +  y' y^{(2n)}  = 0  \\
         \\
        \displaystyle\frac{ x' y - y' x }{ x^2 + y^2 }  =  \displaystyle\frac{\sigma_{eca}}{  x^2 + y^2} = 1\, .
    \end{array}
    \right.
\end{equation}
Here concrete curves are considered as candidate solutions, some of them are filtered out for the prescribed parametrization with polar angle (Approach \ref{lbl:approaches_candidate_solutions2}). The lower equation is automatically satisfied when a curve is parameterized with desired geometric measurement or when substitution \eqref{eq:formulae_substituttions} is used for such measurement.
\par Parabolas are not solutions of the system \eqref{eq:polar_angle} while circles are and logarithmic spirals (translated to be centered at the origin\footnote{For every point on a circle or logarithmic spiral, the angle between tangent to the curve at that point and the vector connecting the point to the origin is the same.}) are for the same values of $\beta$ as in case of affine arc.
\par I call \textbf{"cosh spiral", "sinh spiral"} the following curves
\begin{equation}\label{eq:vectorial_expression_cosh_not_solution}
     \begin{array}{l}
        x(\sigma) = \cos(\sigma) \cosh(\beta \sigma) \\
        y(\sigma) = \sin(\sigma) \cosh(\beta \sigma)\; ,
     \end{array}
\end{equation}
\begin{equation}\label{eq:vectorial_expression_sinh_not_solution}
     \begin{array}{l}
        x(\sigma) = \cos(\sigma) \sinh(\beta \sigma) \\
        y(\sigma) = \sin(\sigma) \sinh(\beta \sigma)\; .
     \end{array}
\end{equation}
The coordinates of both curves are equal to the sum/difference of coordinates of two logarithmic spirals with values of $\beta$ having opposite signs\footnote{$\cosh(\beta \sigma) = \frac{1}{2}\left( e^{\beta \sigma} + e^{-\beta \sigma} \right)$, $\sinh(\beta \sigma) = \frac{1}{2}\left( e^{\beta \sigma} - e^{-\beta \sigma} \right)$.}. Neither of the two curves satisfies the upper equation in the systems \eqref{eq:general_form_lower_order}, \eqref{eq:general_form_higher_order} when parameterized by any of six above-mentioned arcs (equi-affine, affine, center-affine, equi-center-afine, Euclidian, similarity). Nevertheless both curves satisfy system \eqref{eq:polar_angle} for parametrization with polar angle when values $\beta(n)$ for $n \leq 5$ are the same as in case of logarithmic spiral parameterized with affine arc (the values of $\beta(n)$ can be found earlier in text). Note also that the curves in equations \eqref{eq:vectorial_expression_cosh_not_solution}, \eqref{eq:vectorial_expression_sinh_not_solution} have been translated so that the logarithmic spirals whose coordinates are summed/subtracted are centered at the origin. The curves in \eqref{eq:vectorial_expression_cosh_not_solution}, \eqref{eq:vectorial_expression_sinh_not_solution} are invariant solutions of \eqref{eq:polar_angle} under rotations and reflections. The "cosh" and "sinh" spirals and logarithmic spiral, all three with the same value of $\beta = \sqrt{5 - 2 \sqrt{5}}$ are depicted in Figure\ref{fig:spirals}.
\begin{figure}
    \psfig{figure=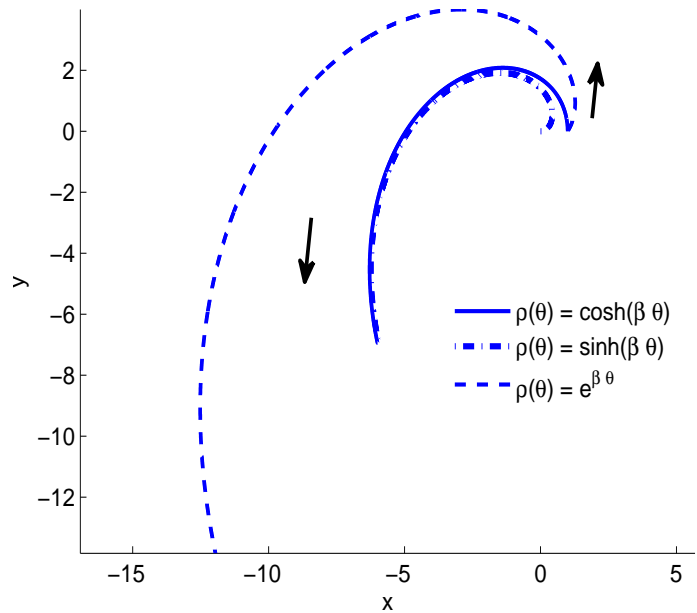,
    height=85mm,width=100mm} %
    \caption{\small\textbf{Three spirals.} "cosh spiral" \eqref{eq:vectorial_expression_cosh_not_solution} (solid), "sinh spiral" \eqref{eq:vectorial_expression_sinh_not_solution} (dots) and logarithmic spiral \eqref{eq:log_spiral_polar} (dashed), all three have $\beta = \sqrt{5 - 2 \sqrt{5}}$ that makes the curves solutions of the upper equation for parameterizations with polar angle (cosh and sinh spirals) and affine/center-affine/similarity arcs/polar angle (logarithmic spiral). Drawing is implemented counter-clock-wise, $0 \leq \theta \leq 4$. The spirals in the plot are identified with their equations in polar coordinates.}\label{fig:spirals}
\end{figure}
\subsubsection*{Pseudo solutions}
\par I call pseudo-solution a vector function that satisfies the upper equation of the systems \eqref{eq:general_form_lower_order}, \eqref{eq:general_form_higher_order} but does not satisfy the lower equation of the system. In other words parametrization of the curve provided by the vector function is not consistent with desired geometric parameter. Example \ref{example:parametrization} demonstrates that vector function \eqref{parabola_canonical_form_example} parameterizes parabola with equi-affine arc and is solution of the system \eqref{eq:general_form_higher_order} while considering the same vector function being parameterized with Euclidian arc results in pseudo solution.
\subsubsection*{Summary for known planar solutions}

\par The results for candidate planar solutions considered above are summarized in Table \ref{Table_mathematical_objects}. In particular, Table \ref{Table_mathematical_objects} shows that parabolas constitute affinely invariant solutions of the equations \eqref{eq:general_form_lower_order} and \eqref{eq:general_form_higher_order} for all degrees of smoothness above 1 in equi-affine geometry and are not solutions for any $n$ in any other geometry considered. Circles are solutions in all geometries for any degree of smoothness and possess invariance under composition of Euclidian transformations \eqref{eq:Eucl_transformation} and scaling \eqref{eq:scaling_transformation_no_reflection} including reflection \eqref{eq:scaling_transformation_with_reflection}. Logarithmic spirals were analyzed for up to 5th degree of smoothness. For degrees of smoothness 2-5 the parameter $\beta$ of the logarithmic spiral \eqref{eq:log_spiral_polar} depends on the degree of smoothness and the geometry in which the curve with such $\beta$ constitutes solution of the equation.  Circles and logarithmic spirals are analyzed separately, however a circle can be considered as a particular case of logarithmic spiral with $\beta = 0$. In case of center-affine and equi-center-affine geometries the curves are not invariant under translation and therefore should be translated appropriately before parameterizing with corresponding arcs; so in these two geometries I consider circles and logarithmic spirals centered at the origin and straight lines crossing the origin.
\par For a logarithmic spiral \eqref{eq:log_spiral_polar}, in the geometries analyzed in this work, the left hand side of the upper equation in the system \eqref{eq:general_form_lower_order} is equal to some function of $\beta$ (say $f(\beta)$) multiplied by a function depending exponentially on the polar angle\footnote{In other words, $\mbox{Left Hand Side} = f(\beta) \cdot exp(g(\beta) \cdot \varphi)$.}. Therefore only logarithmic spirals with $\beta$ nullifying $f(\beta)$ are solutions of \eqref{eq:general_form_lower_order}, \eqref{eq:general_form_higher_order} implying that the constant in the right hand side of the upper equation in \eqref{eq:general_form_lower_order}\footnote{First integral of the upper equation in the system \eqref{eq:general_form_higher_order}.} is zero for all solutions of \eqref{eq:general_form_lower_order} being logarithmic spirals in the geometries considered.
\begin{landscape}
\begin{table}[!ht]
\begin{adjustwidth}{0in}{0in}
\hspace*{-0.5cm}
{\small
\begin{tabular}{|l|l|c|c|c|c|c|c|c|}
\hline
Object  & Invariance for &  $n$ &  Equi-affine & Affine & Center-affine & Equi- & Euclidian & Polar \\
        & solutions        &   &  arc    & arc & (circle and log spiral& center- & arc & angle        \\
        & of \eqref{eq:general_form_higher_order} & & & & are centered & affine & &\\
        &   & & & & at the origin) & arc & &\\
        & & & & & \& similarity arcs & & &\\
\hline
\textbf{Parabola}& Affine & $\geq 2$ & Any         & Not relevant    & None       & None   & None     &  None \\
                  &         &     &                  &  (zero arc)         &      &     &  &      \\
    \hline

 \textbf{Straight}& {
  Affine} & {
   $\geq 1$ } & {
    Not relevant   }      & {
    Not relevant }   & {
    Not relevant  }      & Not    & {
    Any  }  & Not     \\
   {
    \textbf{line}     }          &         &     & {
      (zero arc)    }            & {
        (arc not defined)   }          & {
          (arc zero or    }     &   relevant   &     &  relevant   \\
   & & & & & {
   not defined) } & (zero) &     & (zero) \\
\hline
\textbf{Circle} & Uniform scaling &  $\geq 1$ &  Any         & Any    & Any           & Any   &  Any   &  Any  \\
                  & and Euclidian  &            &                 &               &           &     &     &    \\
\hline
\textbf{Logarithmic} &  & 2 & None  & $\beta \approx \pm 1.732$ & & $\beta = \pm\sqrt{0.6},$  & None &  \\
 \textbf{spiral} \eqref{eq:log_spiral_polar} &   & 3        &   $\beta \approx \pm 1.134$    & $\beta \approx \pm 0.727$,              &  &  $\beta \approx \pm 0.164$   &    $\beta \approx \pm 0.447$  &  \\
        &   Uniform scaling    &   &                               & $\beta \approx \pm 3.078$   &  Same  & & & Same\\
        & and Euclidian & 4 & $\beta \approx \pm 0.330$, & $\beta \approx \pm 0.482$, & & $\beta \approx \pm 0.073,$ & $\beta \approx \pm 0.176$, & \\
        &             &   & $\beta \approx \pm 1.580$   & $\beta \approx \pm 1.254$,  & as & $\beta \approx \pm 1.119,$ & $\beta \approx \pm 0.621$ & as\\
        &             &   &                               & $\beta \approx \pm 4.381$   & & $\beta \approx \pm 0.232$ &     &\\
        & & 5 & $\beta \approx \pm 0.167$, & $\beta \approx \pm 0.364$, & affine & $\beta \approx \pm 0.042,$  & $\beta \approx \pm 0.096$, & affine\\
        & &   & $\beta \approx \pm 0.480$, & $\beta \approx \pm 0.839$,& & $\beta \approx \pm 0.109,$ & $\beta \approx \pm 0.258$, & \\
        & &   & $\beta \approx \pm 1.878$, & $\beta \approx \pm 5.671$ & & $\beta \approx \pm 0.275,$ & $\beta \approx \pm 0.734$ & \\
        & &   &                            &                           & & $\beta \approx \pm 1.213$ &                           & \\
\hline
\textbf{``cosh spiral''}  & Uniform scaling &  &   None       & None     & None          &   & None      &  Same as \\
 \eqref{eq:vectorial_expression_cosh_not_solution}                  &       &      &                 &               &           &  &     & affine  \\
\textbf{``sinh spiral''}    & and Euclidian &  &           &      &            &  &     & for  log.   \\
 \eqref{eq:vectorial_expression_sinh_not_solution}                  &        &
      &                 &               &           &       &     &  spiral \\
\hline
\end{tabular}}
\end{adjustwidth}
\caption{\small{\bf Known solutions and their invariance for different degrees of smoothness.}  Known solutions of the systems \eqref{eq:general_form_lower_order} and \eqref{eq:general_form_higher_order} in plane for different orders of trajectory smoothness $n$ and geometric parameterizations invariant in affine group and four of its subgroups. Invariance of the class of curves and value of the parameter $\beta$ of the logarithmic spiral are indicated. Curves parameterized with center-affine and equi-center-affine arcs and with polar angle are considered centered at the origin by parallel translation: straight lines cross the origin, tangents to logarithmic spirals and circles have constant angle with the vectors connecting corresponding points on the curves to the origin.}\label{Table_mathematical_objects}
\end{table}
\end{landscape}
\subsection*{Solutions in space for spatial curves parametered with 3D equi-affine arc}\label{section:3d_equi_affine}
\par Equi-affine transformations of coordinates in space involve 11 independent parameters and are of the form:
\begin{equation}\label{eq:3Dequi-affine_transformation}
  \begin{array}{c}
      x_1 = \alpha_{1\, 1} x + \alpha_{1\, 2} y + \alpha_{1\, 3} z + a \\
      y_1 = \alpha_{2\, 1} x + \alpha_{2\, 2} y + \alpha_{2\, 3} z + b \\
      z_1 = \alpha_{3\, 1} x + \alpha_{3\, 2} y + \alpha_{3\, 3} z + c , \
  \end{array}
  \quad \left |
     \begin{array}{ccc}
        \alpha_{1\, 1}  & \alpha_{1\, 1}  & \alpha_{1\, 3} \\
        \alpha_{2\, 1}  & \alpha_{2\, 1}  & \alpha_{2\, 3} \\
        \alpha_{3\, 1}  & \alpha_{3\, 1}  & \alpha_{3\, 3}
     \end{array}
  \right |  = 1\, .
\end{equation}
The speed of accumulating spatial equi-affine arc is computed as follows \cite{Shirokovy_1959}:
\begin{equation}\label{eq:ea3_speed}
    \dot{\sigma}_{ea3} = \left |
     \begin{array}{ccc}
        \dot{x} & \ddot{x} & \dddot{x} \\
        \dot{y} & \ddot{y} & \dddot{y} \\
        \dot{z} & \ddot{z} & \dddot{z} \\
     \end{array}
  \right | ^{1 / 6}
\end{equation}
and is called spatial equi-affine velocity. It has been proposed that 3-dimensional movements conserve spatial equi-affine velocity and the conservation phenomenon was called the ``1/6 power-law'' \cite{Pollick:2009, Maoz:2009}.
The system \eqref{eq:general_form_higher_order} becomes
\begin{equation}\label{eq:3d_case_equiaffine_higher_order}
    \left\{
    \begin{array}{l}
        x' x^{(2n)} + y' y^{(2n)} + z' z^{(2n)} = 0 \\
        \\
        \left |
     \begin{array}{ccc}
        x' & x'' & x''' \\
        y' & y'' & y''' \\
        z' & z'' & z''' \\
     \end{array}
  \right | = 1 \; .
    \end{array}
    \right.
\end{equation}
A list of all spatial curves, up to spatial equi-affine transformations, with constant spatial equi-affine curvature and torsion\footnote{Formula for the spatial equi-affine curvature\cite{Shirokovy_1959}: $\chi(\sigma_{ea3}) =        \left |
     \begin{array}{ccc}
        x' & x''' & x^{(4)} \\
        y' & y''' & y^{(4)} \\
        z' & z''' & z^{(4)} \\
     \end{array}\right|$.
     Formula for the equi-affine torsion\cite{Shirokovy_1959}: $\tau(\sigma_{ea3}) =        - \left |
     \begin{array}{ccc}
        x'' & x''' & x^{(4)} \\
        y'' & y''' & y^{(4)} \\
        z'' & z''' & z^{(4)} \\
     \end{array} \right|$. The differentiation is implemented with respect to the spatial equi-affine arc.
     } is provided in \cite{Shirokovy_1959}. Considering these curves as candidate solutions, some of them got filtered out for the prescribed parametrization (Approach \ref{lbl:approaches_candidate_solutions2}). The lower equation is automatically satisfied when a curve is parameterized with desired geometric measurement or when substitution \eqref{eq:formulae_substituttions} is used for such measurement. Only the following two curves from the list are solutions of the system \eqref{eq:3d_case_equiaffine_higher_order}.
\begin{enumerate}
  \item \textbf{Parabolic screw line} \cite{Polyakov:2006, Polyakov_et_al_B.Cyb:2009} can be parameterized with spatial equi-affine arc
    \begin{equation}\label{eq:ea3_arc}
        \sigma_{ea3} = \int_0^t \dot{\sigma}_{ea3}(\tau) d\tau\; .
    \end{equation}
      up to a spatial equi-affine transformation, as follows:
      \begin{equation}\label{eq:parabolic_screw_line}
        \begin{array}{rcl}
            x & = & \sigma_{ea3} \\
            y & = & {\sigma_{ea3}}^2 / 2 \\
            z & = & {\sigma_{ea3}}^3 / 6 \; .
        \end{array}
      \end{equation}
      Parabolic screw line is invariant solution under arbitrary spatial equi-affine transformations when $n \geq 2$. The class of parabolic screw lines is invariant under arbitrary spatial affine transformations.
  \item  \textbf{Elliptic screw line} can be parameterized with spatial equi-affine arc up to a spatial equi-affine transformation as follows:
        \begin{equation}\label{eq:elliptic_screw_line}
        \begin{array}{rcl}
            x & = & \mathrm{const} \cdot \cos\left(\mathrm{const}^{-1/3} \sigma_{ea3} \right) \\
            y & = & \mathrm{const} \cdot \sin\left(\mathrm{const}^{-1/3} \sigma_{ea3} \right) \\
            z & = & \mathrm{const}^{-1/3} \sigma_{ea3}\; .
        \end{array}
      \end{equation}
      Spatial Euclidian transformations composed with scaling and reflection transformations applied to elliptic screw lines result in solutions of the system \eqref{eq:3d_case_equiaffine_higher_order} as well for $n \geq 1$. However the cost functional with $n = 1$ obtains the same values for any rule $\sigma(t)$. Therefore the case of $n = 1$ is not interesting.  Arbitrary equi-affine transformations of the elliptic screw line of the form \eqref{eq:elliptic_screw_line} will not necessarily be solutions of the system \eqref{eq:3d_case_equiaffine_higher_order}.
\end{enumerate}
\vspace{0.3cm}
\par Both parabolic and elliptic screws have constant spatial equi-affine curvature (zero for the parabolic screw) and zero equi-affine torsion \cite{Shirokovy_1959}. Their Euclidian curvature and torsion are not zero. Elliptic screw has constant Euclidian curvature and torsion. Study \cite{Bright:2006} used another method to show that trajectories along the curves with constant Euclidian curvature and torsion simultaneously satisfy the constrained minimum-jerk model and the 1/6 power-law.
\par Parabolic screw line parameterized with Euclidian arc is not solution of the 3-dimensional versions of the upper equations in the systems \eqref{eq:general_form_lower_order}, \eqref{eq:general_form_higher_order}  while elliptic screw is.

\subsection*{Case of arbitrary parametrization, dimension and order of smoothness}\label{section:3d_equi_affine}
\par Equation \eqref{eq:only_upper_equation_higher_order} can be used to easily identify some curves being solutions for parameterizations whose properties are not prescribed in advance.
\begin{enumerate}
    \item Curves $\bm{r}_{L}(\sigma) = (r_1(\sigma),\, \ldots,\, r_L(\sigma))$ in $L$ dimensional space whose coordinates are polynomials of order $\leq 2 n - 1$ of an arbitrary parameter $\sigma$ obviously satisfy equation \eqref{eq:only_upper_equation_higher_order} as their $2n$-th order derivatives with respect to $\sigma$ are all zero:
        \begin{equation}\label{eq:solutions_polynomials}
            r_i(\sigma) = \sum_{k = 0}^{2 n - 1}a_{i, k} \sigma^k,\; i = 1,\, \ldots,\, L\, .
        \end{equation}
        Such solutions are invariant under arbitrary linear transformations and translations in $L$ dimensional space, including affine transformations. Parabolas constitute a particular case of \eqref{eq:solutions_polynomials} for $L = 2$ and curve's coordinates being particular 2-nd order polynomials with respect to $\sigma_{ea}$.
    \item In spaces of dimensions $L \geq 2$ take any two coordinates to be described as coordinates $x$ and $y$ of (a) logarithmic spiral \eqref{eq:log_spiral_polar} or of (b) "cosh spiral" \eqref{eq:vectorial_expression_cosh_not_solution}, or of (c) "sinh spiral" \eqref{eq:vectorial_expression_sinh_not_solution}. Whenever $L \geq 3$ describe the rest $L - 2$ coordinates with arbitrary polynomials of order $\leq 2 n - 1$. Obviously, left hand side of the equation \eqref{eq:only_upper_equation_higher_order} will be the same as the left hand side of the planar version of the equation for the curves (a)-(c) because $2n$th order derivatives of the polynomials vanish. Therefore the curves of the form (a) - (c) will satisfy equation \eqref{eq:only_upper_equation_higher_order} with the same values $\beta(n)$ as in case of parametrization with polar angle in plane presented earlier in text. So, without limitation of generality
        \begin{equation}\label{eq:solutions_spirals_polynomials}
            \begin{array}{rcl}
                r_1(\sigma) & = & x(\sigma)\, , \\
                r_2(\sigma) & = & y(\sigma)\, , \\
                r_i(\sigma) & = & \sum_{k = 0}^{2 n - 1}a_{i-2, k} \sigma^k,\; i = 3,\, \ldots,\, L\, ,
            \end{array}
        \end{equation}
        with $x(\sigma)$, $y(\sigma)$ being coordinates of the above-mentioned curves (a) - (c). Geometric parametrization of the curves (a) - (c) induced by $\sigma$ is invariant (still parameterizes solutions) when a curve undergoes Euclidian transformation and uniform scaling (including reflections) in $L$ dimensional space because such transformations preserve orthogonality.
\end{enumerate}
\section*{Discussion}
\par This work considers the problem of finding paths whose maximally smooth trajectories accumulate geometric measurement along the path with constant rate. The order of smoothness is arbitrary. Class of differential equations obeyed by such paths is derived. System of two differential equations corresponds to each order of smoothness $n$. Systems of special interest and their known solutions are presented. Detailed exposition with clarifying examples is provided for the rationale of the research question, derivation of the equations and their applications for different orders of smoothness, in different geometries (equi-affine and 5 other geometries) and dimensions. Derived class of equations constitutes generalization of earlier works on one hand and a tool for further developments in the field of motor control on the other hand, eg. finding more solutions and thus revealing new candidates for primitive shapes.
\par Earlier works proposed several curves for whom presumably the predictions of the constrained minimum-jerk model ($n = 3$) have constant rate of accumulating equi-affine velocity \cite{Polyakov:2001, Polyakov_etc_2001, Polyakov:2006, Bright:2006, Polyakov_et_al_B.Cyb:2009}, or planar Euclidian or affine velocity \cite{Bright:2006}, see also \cite{Meirovitch:2014}. Logarithmic spirals with such properties were proposed by Ido Bright\cite{Bright:2006}. Equations with values of $n$ different from 3 and solutions of the system of equations in arbitrary dimensional space $L$ is the main novelty of this work. Classes of curves described by formulas \eqref{eq:solutions_polynomials}, \eqref{eq:solutions_spirals_polynomials} and their particular cases \eqref{eq:vectorial_expression_cosh_not_solution}, \eqref{eq:vectorial_expression_sinh_not_solution} in plane have been proposed in the framework of invariance-smoothness criteria for movement trajectories for the first time.



%
%
%
\subsection*{Empirical rationale}
\par This work presents mathematical result of a largely predictive study. Nevertheless the incentive of deriving the equations for degrees of smoothness above 3 is based on several pieces of already existing empirical evidence related to importance of geometric invariance and kinematic smoothness in control of biological movements. Empirical findings revealed that geometric properties of biological movements dictate their speed, eg. the two-thirds power-law \cite{Lacquaniti_Terzuolo_Viviani_1983} which is equivalent to piece-wise constancy of movement's equi-affine velocity. Further studies of the geometric aspects of biological movements supported the validity of the two-thirds power-law and existence of the neural representation of the equi-affine velocity during movements' perception and performance \cite{Viviani_Stucchi_1992, Levit-Binnun.Schechtman.Flash:2006, Polyakov:2006, Dayan.etal.Flash:2007, Polyakov_et_al_B.Cyb:2009}.
\par During practice monkey scribbling movements became clustered into a small number of relatively long parabolic-like movement elements; analysis of motor cortical activity underlying monkey scribbling movements supported the hypothesis about parabolic movement primitives being represented in motor cortical activity synchronized to neural states \cite{Polyakov:2006, Polyakov_et_al_B.Cyb:2009, Polyakov_et_al_PLoS_C_B:2009}. Sosnik et. al. demonstrated that over the course of practice sequences of nearly straight point-to-point drawing movements by humans get coarticulated into smooth movements that can be well approximated with minimum-jerk trajectories passing through a single via-point \cite{Sosnik_etc_2004}. Such trajectories are parabolic-like \cite{Polyakov:2006, Shpigelmacher:2006,Polyakov_et_al_B.Cyb:2009}. Therefore convergence of non-smooth movements into smooth parabolic-like is natural for monkeys and humans. Parabolas constitute an equi-affine solution of the derived class of equations (that is for $\sigma = \sigma_{ea}$)  for an arbitrary degree of smoothness $n$ above 1 and not only for $n = 3$.
\par Equation \eqref{equation_equiaffine_example} was applied earlier to the parametrization with planar and spatial equi-affine arcs \cite{Polyakov:2006, Polyakov_et_al_B.Cyb:2009}. Further work \cite{Bennequin:2009, Fuchs:2010} analyzed geometric invariance and suggested that biological movements may be represented not only in either equi-affine or Euclidian geometries but simultaneously in multiple, and including affine, geometries while geometric representation guides temporal properties of movements. The predictions of the theory were tested on three data sets: drawings of elliptical curves, locomotion and drawing of complex figural forms. The authors claimed that their theory accounted well for the kinematic and temporal features of the movements, in most cases overperforming the constrained minimum jerk model. In addition, more recent findings show neural representation of scale invariance \cite{Kadmon.Flash:2014} (relevant for the similarity group \eqref{eq:similarity_transformation}). An important empirical evidence showed connection between the level of activation in different motor ares (M1, PMd, pre SMA) and the degree of motion smoothness acquired during learning to coarticulate point-to-point segments into complex smooth trajectories \cite{sosnik.flash.sterkin.hauptmann.karni:2014}. So there exists an empirical evidence that 1) invariance in different geometries and 2) level of hand trajectory's smoothness are represented in neural activity. Those two empirical characteristics of drawing-like movements and their mathematical properties motivated extension of the equation \eqref{equation_equiaffine_example} to arbitrary degree of smoothness and further demonstration of the method in different geometries here, including geometries defined by the similarity, center-affine and equi-center-affine groups. The two above-mentioned features being merged result in paths satisfying the derived systems of equations. The methodology demonstrated in the manuscript may be further applied for parameterizations not mentioned in this work.
\par Incorporation of movement primitives paradigm, type of geometric invariance, and level of trajectory's smoothness into methods for decoding neural data may provide additional information useful for the algorithms employed for brain-machine interfaces. Earlier works proposed neural network models for composing complex movements from primitives \cite{Schrader.Diesmann.Morrison:2011, Hanuschkin.Herrmann.Morrison.Diesmann:2011}.
\par Earlier empirical studies demonstrated existence of spontaneously generated hand trajectories with nearly straight\footnote{The end-points of the trajectories were prescribed but not trajectories' form.} \cite{Flash_Hogan_1985} and parabolic-like \cite{Polyakov:2006, Polyakov_et_al_B.Cyb:2009, Polyakov_et_al_PLoS_C_B:2009} paths. Empirical study of additional (to straight and parabolic segments) solutions of the derived equations as candidate movement primitives may lead to better understanding of movements' neural representation and production. At the moment I point to the following candidates: circles and logarithmic spirals (2 dimensions) and parabolic and elliptic screws (3 dimensions). Circles are actually a particular case of the logarithmic spiral \eqref{eq:log_spiral_polar} with $\beta = 0$.
\par A different study related to the research topic presented here reported that logarithmic spirals are affine orbits (have constant affine curvature \eqref{eq:affine_curvature}\footnote{So indeed, for logarithmic spiral \eqref{eq:log_spiral_polar} its affine curvature is equal to the following constant: $\kappa_{a} = 4 \beta / \sqrt{9 + \beta^2}$.}) and investigated logarithmic spirals in the framework of the mixed-geometry approach \cite{Meirovitch.Flash:2013, Meirovitch:2014}. Work \cite{Meirovitch:2014} focused on logarithmic spirals and other affine orbits and showed for each spiral to which mixture of geometries it belongs with respect to jerk minimization. An algorithm for representing movements based on mixed geometry approach and jerk optimization was proposed \cite{Meirovitch.Flash:2013}.
\par Formula \eqref{eq:affine_curvature} can be integrated to show that equi-affine curvature $\kappa_{ea}$ of any curve with constant affine curvature $\kappa_{a}$ is the following function of curve's equi-affine arc: $\kappa_{ea}(\sigma_{ea}) = 4 / \left(\kappa_{a} \sigma_{ea} + \mbox{const}\right)^2$. Therefore equation \eqref{eq:ea_curvature_scaling_factor} can be integrated\footnote{Following 2.14 of \cite{Kamke:1959},
(1)
if $\kappa_{a}^2 > 16$, $x'(\sigma_{ea}) = C_{1x} \left( \kappa_{a} \sigma_{ea} + \mbox{const} \right)^{1/2 + \sqrt{1 - (16 / \kappa_{a})}} + C_{2x} \left( \kappa_{a} \sigma_{ea} + \mbox{const} \right)^{1/2 - \sqrt{1 - (16 / \kappa_{a})}}$;
(2)
if $\kappa_{a}^2 = 16$, $x'(\sigma_{ea}) = C_{1x} \sqrt{\kappa_{a} \sigma_{ea} + \mbox{const}} +  C_{2x} \sqrt{\kappa_{a} \sigma_{ea} + \mbox{const}} \cdot \ln\left( \kappa_{a} \sigma_{ea} + \mbox{const} \right)$;
(3)
if $\kappa_{a}^2 < 16$, $x'(\sigma_{ea}) = C_{1x} \sqrt{\kappa_{a} \sigma_{ea} + \mbox{const}} \cdot \cos\left(0.5 \sqrt{16 / \kappa_{a} - 1} \cdot \ln\left(\kappa_{a} \sigma_{ea} + \mbox{const} \right) \right) + C_{2x} \sqrt{\kappa_{a} \sigma_{ea} + \mbox{const}} \cdot \sin\left(0.5 \sqrt{16 / \kappa_{a} - 1} \cdot \ln\left( \kappa_{a} \sigma_{ea} + \mbox{const} \right) \right)$. The same for the $y$ coordinate. The 3rd case is recognized as logarithmic spirals for certain choice of constants $C$.} and used in form \eqref{eq:2d_case_equiaffine_via_eq_aff_curvature} of the system \eqref{eq:2d_case_equiaffine} to show that logarithmic spirals found earlier for the equi-affine parametrization are the only solutions with constant affine curvature.

%

%
\par Geometric invariance and smoothness of contours are also relevant for the visual system. In this respect the primary visual cortex (V1) can be viewed as the bundle of what
are called 1-jets of curves in $\mathbb{R}$ \cite{Petitot:2003}. The 1-st order jet of a function $f$ is
characterized by three slots: the coordinate $x$, the value of $f$ at $x$, $y = f(x)$, and the
value of its derivative $p = f'(x)$. The latter is the slope of the tangent to the graph
of $f$ at the point $a = (x, f(x) )$ of $\mathbb{R}$. ``Jets are feature detectors specialized in the
detection of tangents. The fact that V1 can be viewed as a jet space explains why
V1 is functionally relevant for contour integration. ... The Frobenius integrability
condition ... is an idealized mathematical version of the Gestalt principle of good
continuation'' \cite{Petitot:2003}. Smooth drawings possess nice integrability properties. Edge completion as the interpolation of gaps between edge segments, which are extracted from an image, can
be performed by parabolas \cite{Handzel_Flash_2001_}. Smoothing may be applied by the motor system at the transitions between neighboring superimposed movement elements and thus the geometric levels of planning may precede the temporal level (see also \cite{Torres.Andersen:2006}).
%
%


\par Different levels of smoothness at different movement's stages may be employed by the neural system even for well-practiced performance. In that case solutions of the system \eqref{eq:general_form_higher_order} with different degrees of smoothness $n$ might be combined together. Parabolic shapes satisfy equations \eqref{eq:general_form_higher_order} and constitute the class of the only affine invariant solutions for the case of the minimum-acceleration and minimum-jerk cost functionals \cite{Polyakov:2001, Polyakov:2006, Polyakov_et_al_B.Cyb:2009}. Possibility of non-parabolic equi-affine invariant solutions for higher degrees of smoothness has to be checked. Nevertheless, existence of non-parabolic solutions points to non-parabolic candidates for primitive shapes. Use of non-parabolic primitive shapes in production of complex movements might be efficient, for example, for movement segments which presumably are not represented solely in equi-affine geometry. Connection between the mechanisms of (1) action and (2) perception, and relevance of geometric invariance and smoothness for both mechanisms were observed in earlier studies. Therefore the proposed method of identifying geometric primitives may be meaningful for both motor and visual systems.
\par Obviously, minimum-jerk cost, as a scalar product, is invariant under Euclidian transformations. Left hand side of the upper equations in the systems \eqref{eq:general_form_lower_order}, \eqref{eq:general_form_higher_order} is composed of scalar products and therefore upper equations' solutions are invariant (remain solutions) under Euclidian transformations and uniform scaling with and without reflection but not general transformations that do not preserve orthogonality, e.g. arbitrary equi-center-affine transformation and transformations containing them (center-affine, equi-affine, affine). Therefore invariance of solutions was indicated in earlier works \cite{Polyakov:2006, Polyakov_et_al_B.Cyb:2009} and is indicated here in Table \ref{Table_mathematical_objects}. Minimum-jerk trajectories constrained with via-points are not invariant under arbitrary equi-affine transformations (and correspondingly affine) \cite{Polyakov:2001}. A recent follow up study \cite{Meirovitch:2014} thoroughly analyzed geometric invariance of the minimum-jerk trajectories with via-point.

\subsection*{Representation of movements in different geometries, compositionality, variability, and decision-making}
\par Straight point-to-point trajectory is described in the framework of the minimum-jerk model as 5th order polynomial of time \cite{Hogan:1984, Flash_Hogan_1985}:
\begin{equation}\label{eq:minimum_jerk_point_to_point}
    \bm{r}(t) = \bm{r}_{0} + \bm{A} \left[ 10 \left(\frac{t - t_0}{t_{d}}\right)^3 - 15 \left(\frac{t - t_0}{t_{d}}\right)^4 + 6 \left(\frac{t - t_0}{t_{d}}\right)^5\right]\, ,\;  t_0 \leq t \leq t_0  + t_d\, ,
\end{equation}
where $t_{d}$, $t_0$ and $\bm{A}$ are movement's duration, starting time, and the vector connecting movement's start ($\bm{r}_0$) and end points respectively. Trajectories described by \eqref{eq:minimum_jerk_point_to_point} have symmetric bell-shaped tangential velocity profile \cite{Flash_Hogan_1985}.
\par Minimum-jerk trajectories connecting two end-points and passing through a via-point \cite{Flash_Hogan_1985} are parabolic-like \cite{Polyakov:2006, Polyakov_et_al_B.Cyb:2009, Shpigelmacher:2006}. Such parabolic-like trajectories can be approximated based on vectorial composition of 3 point-to-point (straight) minimum-jerk movements, each having its own amplitude, duration, starting time and direction\footnote{This is similar to vectorial concatenation of two point-to-point movement elements into a single trajectory (without rest point in the middle) in the double-step paradigm \cite{Flash_Henis_1991}.} \cite{Polyakov:2006, Polyakov_et_al_PLoS_C_B:2009}. Composed triplets of point-to-point minimum-jerk trajectories were found to fit well piece-wise parabolic monkey scribbling movements \cite{Polyakov_et_al_B.Cyb:2009}. Moreover, a (straight) point-to-point minimum-jerk trajectory can be approximated with composition of 3 smaller and slower point-to-point minimum-jerk trajectories with each of 3 having the same direction, amplitude and duration \cite{Polyakov:2006, Polyakov_et_al_PLoS_C_B:2009}. So piece-wise parabolic trajectory can be decomposed in a hierarchical manner into short point-to-point minimum-jerk trajectories\footnote{Such decomposition is not just an exemplar case but this is a universal property for parabolic segments because it is preserved under arbitrary affine transformations and affine transformations can be used to map the piece of parabola approximating the path into an arbitrary parabolic segment.}, the algorithm is described further in text.
\par Straight segments form primitive shapes in Euclidian geometry. For other geometries considered in this work straight segments have either zero arc (equi-affine, equi-center-affine\footnote{Consider lines translated to cross the origin.}, similarity arcs) or their arc is not defined at all (center-affine and affine arcs). Correspondingly parabolic segments are primitive shapes in equi-affine geometry, in other geometries considered in this work parabolic segments either have zero arc (affine arc) or do not satisfy equations \eqref{eq:general_form_lower_order}, \eqref{eq:general_form_higher_order}. So construction of parabolic-like (equi-affine) primitives can be based on sequential representation of Euclidian primitives getting coarticulated! Such construction demonstrates a plausible realization of simultaneous representation of movements in different geometries. Representation of movements in several geometries suggested by Bennequin et. al. \cite{Bennequin:2009, Fuchs:2010} as weighted mixture of different geometric arcs and proposed as a basis for geometric rationale of movement timing can also be viewed through the (coexisting) perspective of coarticulation of primitives in one geometry into primitives in another geometry. Indeed, analysis of monkey scribblings showed that jerky unordered movements converged into organized piece-wise parabolic performance \cite{Polyakov:2001, Polyakov_etc_2001, Polyakov:2006, Polyakov_et_al_B.Cyb:2009, Polyakov_et_al_PLoS_C_B:2009}, while sequences of point-to-point trajectories by humans got coarticulated into sequences of smooth trajectories \cite{Sosnik_etc_2004} that are parabolic-like \cite{Polyakov:2006, Shpigelmacher:2006,Polyakov_et_al_B.Cyb:2009}.
\par Variability of well-practiced spontaneous monkey scribbling movements was influenced by getting or not getting a reward \cite{Polyakov:2006, Polyakov_et_al_PLoS_C_B:2009}. Tuning of  primitives' onset in different kinds of goal-directed movements (achieving a prescribed movement goal or implementing spontaneous search for invisible target) may be guided by decision-making and/or action selection based on ongoing feedback/reinforcement signals, e.g. receiving or not receiving a reward. Therefore greater variability of non-rewarded movements was interpreted as characterizing monkey's decision-making about concatenating concurrent, already preplanned, piece-wise parabolic movement sequence with another primitive element of monkey's scribbling repertoire. It was suggested that paradigms involving decision-making might be advantageous in studies investigating movement construction based on the compositionality of movement primitives \cite{Polyakov:2006, Polyakov_et_al_PLoS_C_B:2009}. Below algorithm describes how trajectories composed of (1) parabolas and (2) straight lines, both classes are invariant under affine transformations, may be composed of identical elementary building blocks. Different syntactic rules of combining higher level primitives (e.g. piece-wise parabolic sequences) may be developed during practice.
\subsection*{Hierarchical construction of complex trajectories based on primitives in Euclidian and equi-affine geometries}
\par Assume that a rule of implementing a slow and short point-to-point minimum-jerk movement of the end-effector could be built in, eg. in the nervous system of humans or for a robotic arm. Let the peak speed of such rule of motion be equal to $v_p$. Proceed as follows\footnote{Parts of the method were reported and illustrated in \cite{Polyakov:2006, Polyakov_et_al_PLoS_C_B:2009}.}.
\begin{enumerate}
  \item Segment the trajectory into approximately straight and curved portions.
   \item\label{item2_hierarchical construction} Approximate curved portions of the trajectory path with parabolic segments. The less is their curvature at the ``vertex'' the ``wider'' is parabola, while ``width'' is measured with parabola's focal parameter.
   \item Decompose each parabolic portion of the trajectory into three point-to-point trajectories.
   \item Decompose identified straight portions of the trajectory into short and slow point-to-point trajectories: for each straight portion $i$ compute peak speed $v_{p,\, i}(\bm{A}, t_d)$ given its $\bm{A}_i$ and $t_{d,\, i}$ from \eqref{eq:minimum_jerk_point_to_point}; the number of lower hierarchical levels for each portion will be approximately equal to $\ln(v_p / v_{p,\, i}) / \ln(0.55)$. The slow and short rule of motion (assumed to be known) lies at the lowest level of hierarchy of each decomposed straight movement segment.
\end{enumerate}

\par Parabolic-like trajectories and relatively long straight movements form the top of the hierarchical pyramid, each level below the top consists of shorter and slower straight trajectories; the number of elements is 3 times the number of elements at the level above. The peak speed of the point-to-point trajectories $\approx$ 0.55 of the peak speed of the longer point-to-point trajectories one level above. One may attempt to perturb the decomposition during a number of trials in order to construct more accurate and optimal performance.
\par Given that a point-to-point trajectory \eqref{eq:minimum_jerk_point_to_point} connects two states of rest by definition one may allow certain parts of the longer straight movements to overlap, eg. an original straight portion of the trajectory and a point-to-point trajectory obtained during decomposition of the parabolic-like segment. By means of such overlaps it is possible to avoid intermediate states of rest during the entire piece-wise parabolic trajectory.

\subsection*{What happens when the motor control system is compromised}
\par View on motor output when the system is compromised may provide an additional insight. Motor control studies of patients suffering from Parkinson's decease (PD) observed and quantified violations of known motor regularities like the two-thirds power-law, isochrony, kinematic smoothness. Compliance with the isochrony principle \cite{viviani.terzuolo:1982} was impaired for the PD patients versus the control group in experiment involving point-to-point movement via an intermediate target \cite{Flash.Henis.Inzelberg.Korczyn:1992}. The same study also reported that patients' velocity profiles demonstrated substantial abnormalities including lack of smoothness and multiple small peaks or plateaus in the velocity profile.
\par In the framework of geometric invariance the two-thirds power-law is equivalent to constant rate of accumulating equi-affine arc thus implying movement timing being proportional to accumulated equi-affine arc. Apparently, patients with PD demonstrated impairments in how the two-thirds power-law characterizes their perception of planar motion \cite{Dayan.Inzelberg.Flash:2012}. In particular, patients with PD perceived on average movements closer but not equal to a constant Euclidian velocity as more uniform than movements with constant equi-affine velocity, in contrast to choices of control subjects \cite{Dayan.Inzelberg.Flash:2012}. In compliance with other studies mentioned in \cite{Dayan.Inzelberg.Flash:2012} this result demonstrates central, eg. visuo-motor, and not purely motor impairment of PD patients and supports again central (not purely motor) role of geometric characteristics of biological motion. Supporting the central role of geometric invariance, fMRI study of healthy humans demonstrated that basal ganglia respond preferentially to visual motion with constant rate of accumulating the equi-affine arc \cite{Dayan.etal.Flash:2007} while basal ganglia is also the main location of disfunction in PD. In terms of the mixed-geometry approach by Bennequin et. al. \cite{Bennequin:2009} one can hypothesize that dominance of the equi-affine contribution to movement representation is replaced with more dominant contribution of Euclidian measurement of trajectory's arc in case of PD patients.
\par Task-incidental degrees of freedom values of PD patients (while off dopaminergic medication) were abnormally variable during automated movements while task-relevant components abnormally dominated patients' intentional motions \cite{Torres.Heilman.Poizner:2011}. Moreover, patients' transition between voluntary and automated modes of joint control was abrupt, and, unlike normal controls, the type of visual guidance differentially affected their postural trajectories. These finding provided support to the view that PD patients lack automated control that contributes to impairments in voluntary control and that basal ganglia are critical for multi-joint control \cite{Torres.Heilman.Poizner:2011}. A different study demonstrated that for PD patients attention induced a shift from the automatic mode to the controlled pattern within the striatum\footnote{A component of the basal ganglia.} while for the control subjects attention had no apparent effect on the striatum when movement achieved the automatic stage \cite{Wu.Liu.Zhang.Hallett.Zheng.Chan:2014}. Basal ganglia contributes to decision making processes including decisions related to perception and action (eg. see \cite{Berns.Sejnowski:1996, Cheng.Anderson:2012, Ding.Gold:2013}).
\par Given a plausibly intimate relationship between movement variability and decision-making in the framework of movement compositionality (no need for decision-making during completing a preplanned well-practiced motor program would reduce variability), to my view, non-typical variability patterns demonstrate that PD may impair the decision-making process for choosing primitives composing movements. Disfunction of this decision-making process may also contribute to such PD disorders as lack of smoothness and irregularity of the velocity profiles leading in turn to (at least partial) failure to exploit more advantageous parsimonious representation provided by  invariance of geometric primitives\footnote{In the framework of approach of this manuscript geometric primitives provide more parsimonious representation for smooth movements}. Another cause to observed movement variability features of PD patients could be disruption of their ability to coarticulate sequences of elementary submovements into complex and smooth task dependent primitives. Apparently, study with the double-step paradigm reported that the PD patients have impaired abilities to process simultaneously the motor responses to two visual stimuli which are presented in rapid succession \cite{Plotnik.Flash.etc:1998}. I hypothesize that during practice intact motor control system tends to achieve motor performance based on smooth and stereotypical (often following coarticulation) patterns characterized with convergence to low variability and low-dimensional representation that manifests the principle of greater parsimony \footnote{The motor control system tends to achieve more parsimonious control strategies through practice/learning} \cite{Polyakov:2006}.
\par Assessment of behaviors by means of measuring stochastic properties of intra-trial variability based on special characteristic (micro-movements)  was proposed recently in the framework of autism spectrum disorders (ASD) and PD studies \cite{Torres.etal:2013, Torres:2013B}. The method successfully characterized behaviors of subjects.  Apparently, ASD results, in part, from impaired basal ganglia function (eg. see \cite{Qiu.Adler.etc:2010, Prat.Stocco:2012}). Probably, mechanisms related to geometric invariance and decision making could also be impaired to certain degree in case of ASD. Knowing the differences between judgements of observed movement speed uniformity between the PD patients (whose basal ganglia function is impaired) and control subjects \cite{Dayan.Inzelberg.Flash:2012}, it would be interesting to implement similar experiment with ASD patients and to assess their location on the scale Euclidian -- equi-affine uniformity versus control and PD subjects. Another interesting approach could be to implement setup of compositionality and movement primitives studies with ASD and PD patients, for example setups from the studies of movement coarticulation \cite{Sosnik_etc_2004} and point of no return \cite{Polyakov:2006, Sosnik.Shemesh.Abeles:2007}, and to compare the performance of the 3 types of subjects: PD, ASD and controls.
\par Impairment of decision-making mechanisms employed in motor control may disrupt a plausible hierarchical procedure (described above) of constructing smooth complex trajectories from geometric primitives. Such impairment may also destroy higher level mechanisms of (1) binding primitives from different geometries and  (2) Bennequin's time representation based on weighted mixture of different geometric arcs.  Lack of smoothness is a typical consequence of a vast range of neurological disorders, for example stroke, ataxia, Huntington's disease, secondary parkinsonism. Probably, some of those disorders are characterized with certain levels of disfunction in movement compositionality.
\par I am not aware about studies analyzing equi-affine and affine invariants of movement trajectories produced by the patients with neurodegeneration  (eg. PD) or neurodevelopmental disorder (eg. ASD). Numerical computations of such quantities \cite{Calabi_Olver_Tannenbaum_1996} are highly sensitive to non-smoothness and irregularities in trajectories' data, require data regularization\footnote{In particular because of high order differentiation. Text S1 of \cite{Polyakov_et_al_PLoS_C_B:2009} describes regularization procedure used in equi-affine analysis of monkey scribbling movements.} and therefore would be especially challenging for motor output produced by patients characterized by non-smooth movements. Still, in the current work, geometric invariance of movement primitives is associated with simultaneous ability to successfully coarticulate basic movement elements into smooth movement blocks after practicing novel motor task.

\subsection*{Afterword}
\par To my view the following insight of a prominent mathematician of the 20-th century Andrey Kolmogorov anticipated the idea of geometric movement primitives \cite{Kolmogorov:1988}: ``If we turn to the human activity --  conscious, but not following the rules of formal logic, i.e. intuitive or semi-intuitive activity, for example to motor reactions, we will find out that high perfection and sharpness of the mechanism of continuous motion is based on the
movements of the continuous-geometric type ... One can consider, however, that this is
not a radical objection against discrete mechanisms. Most likely the intuition of continuous
curves in the brain is realized based on the discrete mechanism''\footnote{Translated from Russian by FP.}.
\par The way of representing the ``continuous curves in the brain'' as coarticulated geometric primitives might go beyond planning trajectory paths and may correspond to perception processes and geometric imagination as well. Moreover, I speculate that at certain hierarchical level of cognitive processes the ``discrete mechanisms'' of complex movements and language intersect. Observations of low-dimensional representation of monkey scribbling movements with parabolic primitives and reward-related concatenation of parabolic segments into complex trajectories \cite{Polyakov:2006, Polyakov_et_al_PLoS_C_B:2009} support feasibility of this speculation.

\counterwithin{equation}{section}

\appendix
\setcounter{table}{0}
\renewcommand\thetable{\thesection.\arabic{table}}

\setcounter{example}{0}
\renewcommand{\theexample}{\Alph{section}.\arabic{example}}
\setcounter{prop}{0}
\renewcommand{\theprop}{\Alph{section}.\arabic{prop}}

\section{Derivation of the proposition}\label{Appendix_proof_propostion}

\par Here upper equation from the system \eqref{eq:general_form_lower_order} is derived for the curves belonging to the classes $\mathcal{\tilde{A}}_{n,\, L}$ and $\mathcal{A}_{n,\, L}$. The derivation of the equations proofs Proposition \ref{main_proposition}.

\subsubsection*{Derivation}
\par Geometric parametrization $\sigma$ of a curve is given. The rule $\sigma(t)$ of accumulating $\sigma$ with time along the curve is strictly monotonous and differentiable as many times as necessary. Noting that there is one-to-one continuous correspondence between $t$ and $\sigma$,
for the function $\sigma(t) \in [0,\, \Sigma]$ define an
inverse function $t = \tau(\sigma) \in [0,\, T]$. The following notation is used:
$$v \equiv v(\sigma) \equiv  \left. \frac{d}{dt}\sigma(t) \right|_{t = \tau(\sigma)}  \equiv \left. \dot{\sigma}(t)\right|_{t = \tau(\sigma)}\, .$$
Further the following property based on the chain rule is used for a differentiable function $f$:
\begin{equation}\label{eq:general_formula_df_dt}
    \frac{d}{dt}f(\sigma(t)) = \frac{d\sigma}{dt} \cdot \frac{d}{d\sigma}f(\sigma) \equiv \dot{\sigma}\frac{d}{d\sigma}f(\sigma) \equiv v f'\, ,
\end{equation}
where prime denotes differentiation with respect to $\sigma$.
So, for example, two higher order derivatives of $\sigma$ with respect to time will be:
\begin{eqnarray}\label{example_w_j}
 w &=& w(\sigma) \equiv \left. \frac{d^2}{dt^2}\sigma(t) \right|_{t = \tau(\sigma)}= \left. \frac{d}{dt}\left[v(\sigma(t))\right]\right|_{t = \tau(\sigma)} =  v \frac{d}{d\sigma}v = {v}'v \\
 \nonumber j &=& j(\sigma) = \left. \frac{d^3}{dt^3}\sigma(t) \right|_{t = \tau(\sigma)}= v\frac{d}{d\sigma}w =
     {v}''{v}^2 + {{v}'}^2 v \, .
\end{eqnarray}
\par Without limitation of generality further derivations will be implemented for trajectories in 2 dimensions. Derivations for the trajectories in 3 or higher dimensions are identical. So consider $J_{\sigma}(\bm{r}_L,\, n)$ from \eqref{general_cost_function1} with $L = 2$ and use \eqref{eq:general_formula_df_dt} to implement change of variables:
\begin{eqnarray}\label{general_cost_function}
    J_{\sigma}(\bm{r}_2,\, n) & = &\displaystyle\int\limits_0^T \left\{
                                \left[\frac{d^n x(\sigma(t))}{dt^n}\right]^2 + \left[\frac{d^n y(\sigma(t))}{dt^n}\right]^2\right\}\, dt = \\
                              & &  \int\limits_0^{\Sigma}\frac{1}{v} \left.\left\{
                                \left[\frac{d^n x(\sigma(t))}{dt^n}\right]^2 + \left[\frac{d^n y(\sigma(t))}{dt^n}\right]^2\right\}\right|_{t = \tau(\sigma)} d\sigma = \nonumber\\
                         & &   \int\limits_0^{\Sigma}\frac{1}{v} I_{n}(x', x'', \ldots, x^{(n)}; y', y'', \ldots, y^{(n)}; v, v', \ldots, v^{(n-1)}) d\sigma \;, \nonumber
\end{eqnarray}
where  $I_{n}$ denotes the expression parameterized with $\sigma$:
\begin{equation}\label{eq:In}
    I_{n} \equiv \left. \left[\frac{d^n x(\sigma(t))}{dt^n}\right]^2 + \left[\frac{d^n y(\sigma(t))}{dt^n}\right]^2 \right|_{t = \tau(\sigma)}\; .
\end{equation}
For example:
\begin{example} In case $n=3$, one has:
\begin{eqnarray*}
J_{\sigma}(\bm{r}_2, 3) & = & \int\limits_0^T ({\dddot{
                                          x}}^2 + {\dddot{y}}^2 ) dt =\int\limits_0^{\Sigma}\frac{1}{v}\bigl[({x'''}^2+{y'''}^2)
      {v}^6 + \\
 & & 9({x''}^2+{y''}^2){w}^2{v}^2 +({x'}^2+{y'}^2){j}^2 + 6(x'''x''+y'''y''){v}^4w +\\
 & & 2(x'''x'+y'''y'){v}^3j + 6(x''x'+y''y') v w j \bigr] \, d\sigma = \\
     & & \int\limits_0^{\Sigma} \frac{1}{v} \cdot I_{3}(x', x'', x'''; y', y'', y'''; v, v', v'')\,
     d\sigma \,
\end{eqnarray*}
with $w$, $j$ from \eqref{example_w_j}.  \;\;\; $\Box$
\end{example}
\par I approach the optimization problems \eqref{solution_path_minimization_problem2} and \eqref{solution_path_minimization_problem1} with a standard method from the calculus of variations, the Euler-Poisson (E-P) equation with Lagrange multiplier (eg. \cite{Gelfand_Fomin_2000}). The Lagrange multiplier ($\lambda$) is used to guarantee that the speed of accumulating the arc is feasible: $\displaystyle\int_{0}^{\Sigma} \frac{d\sigma}{v} = T$.
\begin{eqnarray}\label{E_P_equation_nth_order}
     \mbox{E-P}(I_n / v) &=& \frac{\partial (I_n / v)}{\partial v} -
                    \frac{d}{d\sigma}\left( \frac{(\partial (I_n / v))}{\partial v'}
                    \right)+ \frac{d^2}{d\sigma^2}\left( \frac{\partial (I_n / v)}{\partial v''}\right)
                    - \ldots  \\
    \nonumber    & & + (-1)^{n-1} \frac{d^{n-1}}{d\sigma^{n-1}} \left(\frac{\partial (I_n / v)}{\partial v^{(n-1)}} \right) + \lambda\frac{\partial}{\partial v} \left(\frac{1}{v} \right)= \\
     \nonumber &=& v^{(2n - 3)} (\ldots) + v^{(2n - 4)}(\ldots) + \ldots + v' (\ldots) \\
    \label{E_P_equation_nth_order_mu}  & &  +  v^{2n - 2}\left(\ldots\right)+ \lambda\frac{\partial}{\partial v} \left(\frac{1}{v} \right) = 0 \, . 
\end{eqnarray}
Expressions in brackets $(2 n - 3)$, $(2 n - 4)$, etc. from \eqref{E_P_equation_nth_order_mu} denote differentiation with respect to $\sigma$ of corresponding order. All derivatives of $v$ in the brackets $(\ldots)$ in \eqref{E_P_equation_nth_order_mu} have order lower than the order of derivative of $v$ multiplying the brackets. Note that the term $v^{2n - 2}$ in \eqref{E_P_equation_nth_order_mu} represents the value of the speed to the power of $2n - 2$ and not the order of derivative. So the expression
$$ v^{2n - 2}\left( \ldots\right) + \lambda\frac{\partial}{\partial v} \left(\frac{1}{v} \right)$$ 
is the only part of \eqref{E_P_equation_nth_order_mu} which contains no derivatives of $v$. Let us denote by $\mu_n$ the expression in the brackets multiplied by $ v^{2n - 2}$, so the Euler-Poisson equation \eqref{E_P_equation_nth_order_mu} can be rewritten as follows:
\begin{equation}\label{mu_order_n}
    \mbox{E-P}(I) = v^{(2n - 3)} (\ldots) + v^{(2n - 4)}(\ldots) + \ldots + v' (\ldots) +  v^{2n - 2} \mu_n + \lambda\frac{\partial}{\partial v} \left(\frac{1}{v} \right) = 0
\end{equation}
Now as an
\begin{example} Consider the case of minimum-jerk criterion, in other words the 3rd order smoothness ($n = 3$). Using derivations identical to the ones in \cite{Polyakov:2006, Polyakov_et_al_B.Cyb:2009}, the Euler-Poisson equation corresponding to \eqref{E_P_equation_nth_order_mu} will be as follows:
\begin{eqnarray*}
     & & v'''\cdot (\ldots) + v'' \cdot (\ldots) + v'
     \cdot(\ldots) + \\
     & & + v ^4 \cdot({x'''}^2 + {y'''}^2 - 2x''x^{(4)} - 2y''y^{(4)} + 2x'x^{(5)} + 2y'y^{(5)}) + \lambda\frac{\partial}{\partial v} \left(\frac{1}{v} \right) = 0 \, .
\end{eqnarray*}
Here
$$\mu_3 = {x'''}^2 + {y'''}^2 - 2x''x^{(4)} - 2y''y^{(4)} + 2x'x^{(5)} + 2y'y^{(5)} \, .\;\;\; \Box$$
\end{example}
\par The desirable $v$ for the optimal solution is constant,
according to (\ref{eq_goal_set2}). Therefore all derivatives of $v$ are zero and the Euler-Poisson equation for the desired $v$ reduces to the following:
$$ v^{2n - 2} \mu_n - \frac{\lambda}{v^2} = 0 \;.$$
As stated above, $v$, $\lambda$ are constant, therefore under the assumption $v \neq 0$ which obviously takes place,
\begin{equation}\label{EP_equation_reduced_form}
   \mu_n  = \mbox{const} \, .
\end{equation}
\begin{prop}\label{p:mu_n}
\begin{eqnarray*}
  \nonumber \mu_n & = &  \left[x^{(n)}\right]^2 + \left[y^{(n)}\right]^2 - 2 \left[ x^{(n-1)} x^{(n+1)} + y^{(n-1)} y^{(n+1)} \right] + 2 \left[ x^{(n-2)} x^{(n+2)} + y^{(n-2)} y^{(n+2)} \right]\\
                  &  + & \ldots
         + (-1)^{n-1} \cdot 2 \left[ x' x^{(2n - 1)} + y' y^{(2n - 1)} \right]\, ,
\end{eqnarray*}
or more formally
\begin{equation}\label{mu_full_expression}
  \mu_n = \left[x^{(n)}\right]^2 + \left[y^{(n)}\right]^2 + 2 \sum_{i = 1}^{n-1} (-1)^i \left(x^{(n - i)} x^{(n + i)} + y^{(n - i)} y^{(n + i)}\right)\; ,
\end{equation}
which is the 2-dimensional version of the upper equation in system \eqref{eq:general_form_lower_order}.
\end{prop}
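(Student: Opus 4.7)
The plan is to read $\mu_n$ off the Euler--Poisson operator in \eqref{E_P_equation_nth_order_mu} after substituting the optimal profile $v = \mathrm{const}$, $v^{(j)} = 0$ for $j \geq 1$, keeping only those terms that survive with no residual $v$-derivative factor. Any $d/d\sigma$ applied to $v$ or one of its $\sigma$-derivatives produces a new $v$-derivative factor that then vanishes under the substitution, so the contributions which remain arise from $\sigma$-differentiations acting exclusively on the $x^{(i)}, y^{(i)}$ variables. This reduces the task to three steps: (i) compute $\partial(I_n/v)/\partial v^{(k)}$ at $v^{(j)}=0$; (ii) apply $d^k/d\sigma^k$ to each result via Leibniz, treating $v$ as constant; (iii) sum over $k$ with the $(-1)^k$ signs dictated by \eqref{E_P_equation_nth_order_mu} and simplify combinatorially.

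Step (i) is the technical core and rests on Faà di Bruno's formula
\begin{equation*}
    \frac{d^n x}{dt^n} = \sum_{j=1}^n B_{n,j}(\Phi_1, \ldots, \Phi_{n-j+1})\, x^{(j)}, \qquad \Phi_m \equiv \frac{d^m \sigma}{dt^m},
\end{equation*}
together with the recursion $\Phi_{m+1} = v\,(\Phi_m)'$ starting from $\Phi_1 = v$. A short induction shows that each $\Phi_m$ contains the monomial $v^{m-1} v^{(m-1)}$ and that every other monomial in $\Phi_m$ has total degree at least two in the $v$-derivatives $v', v'', v''', \ldots$. Consequently, the part of $d^n x/dt^n$ that is linear in $v^{(k)}$ and free of other $v$-derivative factors comes solely from the Bell monomial $\Phi_1^{n-k-1}\Phi_{k+1}$ inside $B_{n,n-k}$; its coefficient in the Bell expansion is $\binom{n}{k+1}$, giving the linear-in-$v^{(k)}$ piece $\binom{n}{k+1}\, v^{n-1} v^{(k)}\, x^{(n-k)}$. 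Pairing this with the $v^n x^{(n)}$ term of $(d^n x/dt^n)^2$, adding the $y$-analogue, and dividing by $v$ yields
\begin{equation*}
    \left.\frac{\partial(I_n/v)}{\partial v^{(k)}}\right|_{v^{(j)}=0} = 2\binom{n}{k+1}\, v^{2n-2}\, \mathbf{r}^{(n)} \cdot \mathbf{r}^{(n-k)}, \qquad 1 \leq k \leq n-1,
\end{equation*}
while $k = 0$ combines $(\partial I_n/\partial v)/v$ and $-I_n/v^2$ into $(2n-1)\, v^{2n-2}\, |\mathbf{r}^{(n)}|^2$.

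Step (ii) applies $d^k/d\sigma^k$ with $v$ frozen, so Leibniz immediately yields $\sum_{j=0}^k \binom{k}{j}\,\mathbf{r}^{(n+j)} \cdot \mathbf{r}^{(n-j)}$. Feeding this into step (iii) and regrouping by the pair $(n-i, n+i)$ reduces the proposition to the two binomial identities
\begin{equation*}
    (2n-1) + 2\sum_{k=1}^{n-1}(-1)^k\binom{n}{k+1} = 1, \qquad \sum_{k=i}^{n-1}(-1)^k\binom{n}{k+1}\binom{k}{i} = (-1)^i \;\; \text{for } 1 \leq i \leq n-1,
\end{equation*}
both of which follow from $\sum_{m=0}^{n} (-1)^m \binom{n}{m} = 0$; the second is obtained after the substitution $m = k+1$ by extracting the $t^i$-coefficient of $\sum_{m=0}^n (-1)^m \binom{n}{m}(1+t)^{m-1} = (-1)^n t^n/(1+t)$, which vanishes for $1 \leq i \leq n-1$. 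The main obstacle I expect is the structural claim about $\Phi_m$ in step (i): without it, additional linear-in-$v^{(k)}$-alone contributions might arise from Bell monomials that mix several higher $\Phi_\ell$'s, and the clean coefficient $\binom{n}{k+1}$ would have to be replaced by a more cumbersome sum. The recursion $\Phi_{m+1} = v\,(\Phi_m)'$ is precisely what rules these out. Everything else is routine calculus-of-variations bookkeeping and standard binomial manipulations.
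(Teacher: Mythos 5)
Your proposal is correct and follows the same overall route as the paper's proof: discard from the Euler--Poisson operator \eqref{E_P_equation_nth_order_mu} every term that retains a factor $v^{(j)}$, isolate the part of $d^n x/dt^n$ that is linear in a single $v^{(k)}$ with all other factors pure powers of $v$, apply the Leibniz rule, and finish with a binomial identity. Your Fa\`a di Bruno/Bell-polynomial bookkeeping (with the recursion $\Phi_{m+1}=v\,(\Phi_m)'$ ruling out spurious linear contributions) reproduces exactly the paper's induction formula \eqref{eq:der_x_dt_order_n} -- your coefficient $\binom{n}{k+1}$ of $v^{n-1}v^{(k)}x^{(n-k)}$ is the paper's $\binom{n}{k}$ coefficient of $v^{(k-1)}$ after the index shift -- and your identity $\sum_{k=i}^{n-1}(-1)^k\binom{n}{k+1}\binom{k}{i}=(-1)^i$ is, after the substitution $m=k+1$ and replacing $i$ by $i+1$, precisely the Pascal-triangle property \eqref{eq:expression_to_be_proven} of Proposition \ref{proposition_eq:expression_to_be_proven} used in \eqref{eq:I_n_x_expressed_x_v}. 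The only genuine divergence is in how these two auxiliary facts are established: the paper proves the derivative expansion by direct induction and the combinatorial identity by induction along Pascal's triangle (Appendix \ref{Appendix_Pascal_triangle}), whereas you obtain the former from the structure of the Bell monomials and the latter by extracting the $t^i$-coefficient of $\bigl(1-(1+t)\bigr)^n/(1+t)$, which is essentially the alternative generating-function proof credited to Ron Adin in the paper's footnote; your handling of the diagonal term via $(2n-1)+2\sum_{k=1}^{n-1}(-1)^k\binom{n}{k+1}=1$ is likewise equivalent to the paper's $i=1$ case of the same identity. The structural claim about $\Phi_m$ (leading monomial $v^{m-1}v^{(m-1)}$, all other monomials of degree at least two in the $v$-derivatives) is the one point that genuinely needs the short induction you mention, and it does hold, so there is no gap; your variant buys a more transparent combinatorial step, while the paper's inductions keep the argument elementary and self-contained.
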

\begin{proof}
\par In order to find the expression for $\mu_n$ implement the differentiation in the Euler-Poisson equation \eqref{E_P_equation_nth_order}. Apparently, the argument of the cost functional from \eqref{general_cost_function},
$$ \left[\frac{d^n x(\sigma(t))}{dt^n}\right]^2 + \left[\frac{d^n y(\sigma(t))}{dt^n}\right]^2 $$
can be split into the $''x``$ and $''y``$ parts and therefore the functional argument $I_n$ of the Euler-Poisson equation is splittable as well:
\begin{equation}\label{eq:splitting_I_Ix_Iy}
I_n = I_{n,\, x} + I_{n,\, y} = \left.\left\{\left[\frac{d^n x(\sigma(t))}{dt^n}\right]^2 + \left[\frac{d^n y(\sigma(t))}{dt^n}\right]^2\right\}\right|_{t = \tau(\sigma)} \, .
\end{equation}
The result of differentiation in the $x$ part
\begin{equation}\label{eq:x_part}
    I_{n,\,x} = \left. \left[\frac{d^n x(\sigma(t))}{dt^n}\right]^2 \right|_{t = \tau(\sigma)}
\end{equation}
is identical to the result of differentiation in the $y$ part $I_{n,\,y} = \left. \left[\frac{d^n y(\sigma(t))}{dt^n}\right]^2 \right|_{t = \tau(\sigma)}$ up to the name of the argument ($x$ being replaced by $y$).
\par So without limitation of generality I implement the proof for the $''x``$ part only and need to prove that
\begin{eqnarray}\label{eq:EP_x_part}
(E-P) \frac{I_{n,\, x}}{v} &= & (E-P)\left\{\frac{1}{v} \left.\left[\frac{d^n x(\sigma(t))}{dt^n}\right]^2\right|_{t = \tau(\sigma)} \right\} =  \\
\nonumber & = & v^{2n - 2}\left(\left(x^{(n)}\right)^2 - 2  x^{(n-1)} x^{(n+1)} + 2 x^{(n-2)} x^{(n+2)}+ \ldots + (-1)^{n-1} \cdot 2 x' x^{(2n - 1)}\right)\, \\
\nonumber & + & v' (\ldots) + v''(\ldots) + \ldots + v^{(n)} (\ldots) + \lambda_x\frac{\partial}{\partial v} \left(\frac{1}{v} \right)\, .
\end{eqnarray}
The result for the $y$ part being identical to \eqref{eq:EP_x_part} with proper replacement of $x$ terms with $y$ terms described above will immediately imply equality \eqref{mu_full_expression} which I am proving.
\par Now the expression for $d^nx(\sigma(t)) / dt^n$ will be rewritten and parameterized with $\sigma$. Time derivatives of $x(\sigma(t))$ parameterized by $\sigma$ are computed as follows:
\begin{eqnarray*} 
    \left.\dot{x}\right|_{t = \tau(\sigma)} & = & \left.\frac{dx}{d\sigma} \frac{d\sigma}{dt}\right|_{t = \tau(\sigma)} = x' v \\
    \left.\ddot{x}\right|_{t = \tau(\sigma)} & = & v \cdot \left(\left.\dot{x}\right|_{t = \tau(\sigma)}\right)' = x'' v^2 + x' v' v \\
    \left.\dddot{x}\right|_{t = \tau(\sigma)} & = & v \cdot \left(\left.\ddot{x}\right|_{t = \tau(\sigma)}\right)' = x''' v^3 + 3 x'' v' v^2 + x' v'' v^2 + {v'}^2 (x' v)\, .
\end{eqnarray*}

It can be shown by induction that
\begin{equation}\label{eq:der_x_dt_order_n}
    \left.\frac{d^n x(\sigma(t))}{dt^n}\right|_{t = \tau(\sigma)} = v^{n - 1} \left[ \sum_{k = 2}^{n} \binom{n}{k}x^{(n - (k - 1))} v^{(k - 1)} + x^{(n)} v\right] + \sum_{i,\, j > 0} v^{(i)} v^{(j)} (\ldots)\; .
\end{equation}
Expressions denoted by $(\ldots)$ and multiplied by the product of derivatives of $v$, $v^{(i)} v^{(j)}$, in \eqref{eq:der_x_dt_order_n} are irrelevant in our derivations as their contribution to $I_{n,\, x}$ will be zeroed out under the assumption of constant speed $(v = \mbox{const})$.
\par Expression \eqref{eq:der_x_dt_order_n} implies for the squared derivative:
\begin{eqnarray}\label{eq:der_x_dt_order_n_squared}
    \left.\left[\frac{d^{n} x(\sigma(t))}{dt^{k}}\right]^2\right|_{t = \tau(\sigma)} & = &  v^{2n - 2}\left[ \sum_{k = 2}^{n} \binom{n}{k}x^{(n - (k - 1))} v^{(k - 1)} + x^{(n)} v\right]^2 \\
    \nonumber &+& \sum_{i,\, j > 0} v^{(i)} v^{(j)} (\ldots) = \sum_{i,\, j > 0} v^{(i)} v^{(j)} (\ldots) \\
    \nonumber & +& v^{2n - 2}\left[ \left( x^{(n)}\right)^2 v^2 + 2 v x^{(n)} \sum_{k = 2}^{n} \binom{n}{k}x^{(n - (k - 1))} v^{(k - 1)} \right]  \; .
\end{eqnarray}
So for $I_{n,\, x}$ from \eqref{eq:x_part}
\begin{eqnarray}\label{eq:I_n_x_expressed_x_v}
    \frac{I_{n,\, x}}{v} =  v^{2n - 1} \cdot \left[x^{(n)}\right]^2 + 2 \cdot v^{2n - 2} \cdot x^{(n)}  \sum_{k = 2}^{n} \binom{n}{k} x^{(n - (k - 1))} v^{(k - 1)} + \sum_{i, j > 0} v^{(i)} v^{(j)} (\ldots) \;\;\;\;\;\;\;
\end{eqnarray}
and the Euler-Poisson equation \eqref{E_P_equation_nth_order_mu} for $I_{n,\, x} / v$ will be as follows:
\begin{eqnarray}
    \nonumber (E-P) \frac{I_{n,\, x}}{v} & = & \lambda_x \frac{\partial}{\partial v} \left( \frac{1}{v} \right) +    (2n - 1) v^{2n - 2} \left[x^{(n)} \right]^2 - 2 \binom{n}{2} \cdot \frac{d}{d\sigma}\left[v^{2n - 2} \cdot x^{(n)} \cdot x^{(n - 1)} \right] \\
    \nonumber & + &
     2 \binom{n}{3} \cdot \frac{d^2}{d\sigma^2}\left[v^{2n - 2} \cdot x^{(n)} \cdot x^{(n - 2)} \right] +
     \ldots \\
     \nonumber & + & (-1) ^{n-1} \cdot 2 \cdot  \binom{n}{n} \cdot \frac{d^{n - 1}}{d\sigma^{n - 1}}\left[v^{2n - 2} \cdot x^{(n)} \cdot x'\right]
     + \sum_{i > 0} v^{(i)} (\ldots)  =
\end{eqnarray}
\begin{eqnarray}
    \nonumber & &  \sum_{i > 0} v^{(i)} (\ldots) + \lambda_x \frac{\partial}{\partial v} \left( \frac{1}{v} \right) + (2n - 1) v^{2n - 2} \left[x^{(n)} \right]^2 + \\
    \nonumber & & 2 \sum_{k = 2}^{n}(-1)^{k - 1} \binom{n}{k}\cdot \frac{d^{k - 1}}{d\sigma^{k - 1}}\left[v^{2n - 2} \cdot x^{(n)} \cdot x^{(n-(k - 1))} \right]  = \\
    \nonumber & &  \sum_{i > 0} v^{(i)} (\ldots) + \lambda_x \frac{\partial}{\partial v} \left( \frac{1}{v} \right) +  (2n - 1) v^{2n - 2} \left[x^{(n)} \right]^2 + \\
     \label{eq:I_n_x_expressed_x_v} & & 2 v^{2n - 2} \sum_{k = 2}^{n}(-1)^{k - 1} \binom{n}{k}\cdot \frac{d^{k - 1}}{d\sigma^{k - 1}}\left[x^{(n)} \cdot x^{(n-(k - 1))} \right]  \, .
\end{eqnarray}
The values of binomial coefficients in \eqref{eq:I_n_x_expressed_x_v} form a slice of Pascal triangle without two numbers at the boundary. A property of Pascal triangle introduced in Proposition \ref{proposition_eq:expression_to_be_proven} of Appendix \ref{Appendix_Pascal_triangle} implies that
\begin{eqnarray*}
 (2n - 1) v^{2n - 2} \left[x^{(n)} \right]^2 + 2 v^{2n - 2} \sum_{k = 2}^{n}(-1)^{k - 1} \binom{n}{k}\cdot \frac{d^{k - 1}}{d\sigma^{k - 1}}\left[x^{(n)} \cdot x^{(n-(k - 1))} \right] = \\
     v^{2n - 2}\left(\left(x^{(n)}\right)^2 - 2  x^{(n-1)} x^{(n+1)} + 2 x^{(n-2)} x^{(n+2)}+ \ldots + (-1)^{n-1} \cdot 2 x' x^{(2n - 1)}\right)
\end{eqnarray*}
and so
\begin{eqnarray*}
    (E-P) \frac{I_{n,\, x}}{v} & = & \sum_{i > 0} v^{(i)} (\ldots) + \lambda_x \frac{\partial}{\partial v} \left( \frac{1}{v} \right) + v^{2n - 2} \left\{\left[x^{(n)}\right]^2 + 2 \sum_{i = 1}^{n - 1} (-1)^{i}  x^{(n - i)} x^{(n + i)} \right\} \; ,
\end{eqnarray*}
which completes the proof of Proposition \ref{p:mu_n} meaning that the Proposition \ref{main_proposition} is true.
\end{proof} 

\setcounter{table}{0}

\setcounter{example}{0}
\renewcommand{\theexample}{\Alph{section}.\arabic{example}}
\setcounter{prop}{0}
\renewcommand{\theprop}{\Alph{section}.\arabic{prop}}

\section{A property of Pascal's triangle}\label{Appendix_Pascal_triangle}
\par Binomial coefficients of the form $\binom{N}{k}$, $k = 0,\, \ldots,\, N$ form the $N$th row of Pascal's triangle. Denote the elements of the $N$th row of the triangle as $$\alpha_{N,\, k} \equiv \binom{N}{k} \, .$$
An example of the first 9 rows of Pascal's triangle is demonstrated in Table \ref{table:example_Pascal_triangle}. The first 5 rows of Pascal's triangle with coefficients $\alpha$ replacing the numbers are demonstrated in Table \ref{table:example_Pascal_triangle_alphas}.
\begin{table}[h]
\begin{tabular}{rccccccccccccccccc}
$N=0$: & & & & & & & & & 1 & & & & \\
$N=1$: & & & & & & & & 1 & & 1 & & & \\
$N=2$: & & & & & & & 1 & & 2 & & 1 & & \\
$N=3$: & & & & & & 1 & & 3 & & 3 & & 1 & \\
$N=4$: & & & & & 1 & & 4 & & 6 & & 4 & & 1 & \\
$N=5$: & & & & 1 & & 5 & & 10 & & 10 & & 5 & & 1 &\\
$N=6$: & & & 1 & & 6 & & 15 & & 20 & & 15 & & 6 & & 1 & \\
$N=7$: & & 1 & & 7 & & 21 & & 35 & & 35 & & 21 & & 7 & & 1 & \\
$N=8$: & 1 & & 8 & & 28 & & 56 & & 70 & & 56 & & 28 & & 8 & & 1 \\
\end{tabular}
\caption{An example of the first 9 rows of Pascal's triangle.}\label{table:example_Pascal_triangle}
\end{table}
\begin{table}[h]
\begin{tabular}{rccccccccc}
$N=0$: & & & & & $\alpha_{0,\, 0}$ & & & & \\
$N=1$: & & & & $\alpha_{1,\, 0}$ & & $\alpha_{1,\, 1}$ & & & \\
$N=2$: & & & $\alpha_{2,\, 0}$ & & $\alpha_{2,\, 1}$ & & $\alpha_{2,\, 2}$ & & \\
$N=3$: & & $\alpha_{3,\, 0}$ & & $\alpha_{3,\, 1}$ & & $\alpha_{3,\, 2}$ & & $\alpha_{3,\, 3}$ & \\
$N=4$: & $\alpha_{4,\, 0}$ & & $\alpha_{4,\, 1}$ & & $\alpha_{4,\, 2}$ & & $\alpha_{4,\, 3}$ & & $\alpha_{4,\, 4}$%
\end{tabular}
\caption{An example of the first 5 rows of Pascal's triangle when coefficients $\alpha$ replace the numbers.}\label{table:example_Pascal_triangle_alphas}
\end{table}
\par The values of $\alpha$ at the next level $i$ of the triangle are obtained recursively from the values at the level $i - 1$ according to the following rule:
\begin{eqnarray}
\label{eq:define_alpha1}
     \alpha_{i,\, j} = \alpha_{i - 1,\, j - 1} + \alpha_{i - 1,\, j}\; .
\end{eqnarray}
while
\begin{equation}\label{eq:boundary_conditions_alpha}
        \alpha_{0,\, 0} = 1;\; \alpha_{-1,\, i} = 0;\; \alpha_{i,\, i + 1} = 0\; \forall i\, .
\end{equation}
I \textit{prove} a formula which establishes a relationship between the elements in a row of Pascal's triangle and in the ``diagonal'' whose left most element appears adjacent to the left most element of the sequence of coefficients in the row; an example is provided below.
\begin{prop}\label{proposition_eq:expression_to_be_proven}
\begin{equation}\label{eq:expression_to_be_proven}
    \sum_{k = i}^{N} (-1)^{i - k + 1} \alpha_{k - 1,\, i - 1} \cdot \alpha_{N,\, k}  = -1,\; i \leq N \,.
\end{equation}
\end{prop}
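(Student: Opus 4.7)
The plan is to recast the alternating sum as a Chu--Vandermonde convolution. I would begin by pulling out the uniform sign $(-1)^{i-k+1}=-(-1)^{k-i}$, which reduces the target identity to
\[
\sum_{k=i}^{N}(-1)^{k-i}\binom{k-1}{i-1}\binom{N}{k}=1.
\]
Next I would apply the upper-negation identity $\binom{-i}{j}=(-1)^{j}\binom{i+j-1}{j}$; setting $j=k-i$ this absorbs the alternating factor as $(-1)^{k-i}\binom{k-1}{i-1}=\binom{-i}{k-i}$, so the sum becomes $\sum_{k=i}^{N}\binom{-i}{k-i}\binom{N}{k}$.

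From here I would use the symmetry $\binom{N}{k}=\binom{N}{N-k}$ together with the substitution $l=N-k$ to bring the sum into the standard Chu--Vandermonde form
\[
\sum_{l=0}^{N-i}\binom{N}{l}\binom{-i}{(N-i)-l}=\binom{N+(-i)}{N-i}=\binom{N-i}{N-i}=1.
\]
Reinstating the global minus sign from the first step yields the claimed value $-1$, valid for every $N\geq i$.

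The only point requiring care, and the main ``obstacle,'' is the legitimacy of using a negative upper index in the binomial coefficient and in Chu--Vandermonde; both are standard polynomial-identity extensions (the convolution holds in $\mathbb{Z}[a,b]$, so specialising $a=N$, $b=-i$ is legal, and $\binom{-i}{m}$ is nonzero only for $m\geq 0$, which is exactly the range the reindexing produces). As a sanity check and alternative route, I would also note that the identity admits a short induction on $N$: the base $N=i$ reduces to the single term $(-1)^{1}\binom{i-1}{i-1}\binom{i}{i}=-1$, and the inductive step uses Pascal's recurrence $\binom{N+1}{k}=\binom{N}{k}+\binom{N}{k-1}$ to split $S_i(N+1)$ into $S_i(N)$ plus a shifted sum that vanishes by the same upper-negation/Vandermonde device (or by a second application of the inductive hypothesis after one more use of Pascal). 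Either route keeps the proof to a handful of lines and yields the constant $-1$ independently of $N$, which is exactly the structural fact needed to collapse the Pascal-triangle expression appearing in Appendix A.
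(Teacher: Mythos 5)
Your proof is correct, but it takes a genuinely different route from the one written out in the paper. The paper proves \eqref{eq:expression_to_be_proven} by induction on the row index $N$ (for all $i \le N$ simultaneously): it applies Pascal's recurrence \eqref{eq:define_alpha1} once to $\alpha_{N+1,\,k}$ and once to $\alpha_{k-1,\,i-1}$, reindexes, invokes the boundary conventions \eqref{eq:boundary_conditions_alpha}, and uses the inductive hypothesis twice to arrive at $-2+1=-1$. Your main argument instead collapses the alternating sum into a single Vandermonde convolution: after extracting the global sign, the upper-negation identity $(-1)^{k-i}\binom{k-1}{i-1}=\binom{-i}{k-i}$ and the substitution $l=N-k$ turn the sum into $\sum_{l=0}^{N-i}\binom{N}{l}\binom{-i}{(N-i)-l}=\binom{N-i}{N-i}=1$, which is legitimate since Chu--Vandermonde is a polynomial identity in the upper arguments and your reindexed range is exactly $0\le l\le N-i$, so no out-of-range terms are silently added or dropped. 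This is essentially the coefficient-of-$x^{N-i}$ in $(1+x)^{-i}(1+x)^{N}$ argument that the paper relegates to a footnote (attributed to Ron Adin) as an alternative; your version makes it explicit and avoids the boundary-term bookkeeping and double use of the recursion that the paper's induction requires, at the cost of invoking negative upper indices, which you justify correctly. Your secondary sketch (induction on $N$ with base case $N=i$, where the single term gives $-1$) is the same strategy as the paper's own proof, with a cleaner base case than the paper's verification at row $N=3$.
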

The numbers $\alpha_{k - 1,\, i - 1}$ belong to the ``diagonal'' and elements $\alpha_{N,\, k}$ belong to the row $N$. An example of the relationship stated in equation \eqref{eq:expression_to_be_proven} is provided in Table \ref{table:example_Pascal_triangle_rule}.
Expression \eqref{eq:expression_to_be_proven} might be already known. However I could not find it elsewhere.

\begin{table}[h]
\vspace{-0.5cm}
\begin{tabular}{rccccccccccccccccc}
$N=0$: & & & & & & & & & 1 & & & & \\
$N=1$: & & & & & & & & 1 & & 1 & & & \\
$N=2$: & & & & & & & 1 & & 2 & & 1 & & \\
$N=3$: & & & & & & 1 & & 3 & & 3 & & \textbf{1} & \\
$N=4$: & & & & & 1 & & 4 & & 6 & & \textbf{4} & & 1 & \\
$N=5$: & & & & 1 & & 5 & & 10 & & \textbf{10} & & 5 & & 1 &\\
$N=6$: & & & 1 & & 6 & & 15 & & \textbf{20} & & 15 & & 6 & & 1 & \\
$N=7$: & & 1 & & 7 & & 21 & & \textbf{35} & & 35 & & 21 & & 7 & & 1 & \\
$N=8$: & 1 & & 8 & & 28 & & 56 & & $\textbf{70}$ & & \textbf{56} & & \textbf{28} & & \textbf{8} & & \textbf{1} \\
\end{tabular}
\caption{Example of the relationship \eqref{eq:expression_to_be_proven} between the elements in a row and the diagonal whose leftmost elements are adjacent in a way demonstrated. The elements used in the computation are shown in bold. Here $N = 8$, $i = 4$. The computation is as follows: $-1 \cdot 70 + 4 \cdot 56 - 10 \cdot  28 + 20 \cdot  8 - 35 \cdot 1 = -1$.}\label{table:example_Pascal_triangle_rule}
\end{table}

\begin{proof}
\par Induction\footnote{Proposition \ref{proposition_eq:expression_to_be_proven} proven here by induction can also be proven by looking for the coefficient of $x^{N-i}$ in the product $(1+x)^{-i} \cdot (1+x)^N$ as suggested by Ron Adin.} is used to derive the formula \eqref{eq:expression_to_be_proven}. For the row with $N = 3$ in Table \ref{table:example_Pascal_triangle} the equality \eqref{eq:expression_to_be_proven} is true for any value of $i \leq 3$. Assume the equality is true for some $N$ and for an arbitrary $i \leq N$ at row $N$. Further rewrite \eqref{eq:expression_to_be_proven} for the row $N + 1$ and use \eqref{eq:define_alpha1} for $\alpha_{N+1,\, k}$:
\begin{eqnarray}
  \nonumber && \sum_{k = i}^{N + 1} (-1)^{i - k + 1} \alpha_{k - 1,\, i - 1} \cdot \alpha_{N + 1,\, k} =  \sum_{k = i}^{N + 1} (-1)^{i - k + 1} \alpha_{k - 1,\, i - 1} \cdot \left( \alpha_{N,\, k - 1} + \alpha_{N,\, k} \right) =  \\
  \nonumber  &&  \sum_{k = i}^{N + 1} (-1)^{i - k + 1} \alpha_{k - 1,\, i - 1} \cdot  \alpha_{N,\, k - 1} + \left( \sum_{k = i}^{N} (-1)^{i - k + 1} \alpha_{k - 1,\, i - 1} \cdot  \alpha_{N,\, k} + \right. \\
  \nonumber &&  \left. (-1) ^{i - (N + 1) + 1} \alpha_{N + 1 - 1,\, i - 1} \cdot \alpha_{N,\, N + 1} \right)
\end{eqnarray}
From \eqref{eq:boundary_conditions_alpha} $\alpha_{N,\, N + 1} = 0$ and from \eqref{eq:expression_to_be_proven} and the assumption of induction $\sum_{k = i}^{N} (-1)^{i - k + 1} \alpha_{k - 1,\, i - 1} \cdot  \alpha_{N,\, k} = -1$. Therefore, and using again \eqref{eq:define_alpha1} for $\alpha_{k-1,\, i-1}$,
\begin{eqnarray}
    \nonumber && \sum_{k = i}^{N + 1} (-1)^{i - k + 1} \alpha_{k - 1,\, i - 1} \cdot \alpha_{N + 1,\, k} =  \sum_{k = i}^{N + 1} (-1)^{i - k + 1} \alpha_{k - 1,\, i - 1} \cdot  \alpha_{N,\, k - 1} - 1 + 0 = \\
    \nonumber &&  \sum_{k = i}^{N + 1} (-1)^{i - k + 1} \left(\alpha_{k - 2,\, i - 2} + \alpha_{k - 2,\, i - 1} \right) \cdot \alpha_{N,\, k - 1} - 1 = \\
    \nonumber &&  \sum_{k = i}^{N + 1} (-1)^{i  - k + 1} \alpha_{N,\, k - 1} \cdot \alpha_{k - 2,\, i - 1} + \sum_{k = i - 1}^{N} (-1)^{(i - 1) - k + 1} \alpha_{N,\, k} \cdot \alpha_{k - 1,\, (i - 1) - 1} - 1 = \\
    \nonumber &&  (-1)^{i - i + 1}  \alpha_{N,\, i - 1} \cdot \alpha_{i - 2,\, i - 1} + \sum_{k = i + 1}^{N + 1} (-1)^{i - k + 1} \alpha_{N,\, k - 1} \cdot \alpha_{k - 2,\, i - 1} - 1 - 1\; .
\end{eqnarray}
Now note that $\alpha_{i - 2,\, i - 1} = 0$ from \eqref{eq:boundary_conditions_alpha} and use the substitution $j = k - 1$ to get:
\begin{eqnarray}
    \nonumber && \sum_{k = i}^{N + 1} (-1)^{i - k + 1} \alpha_{k - 1,\, i - 1} \cdot \alpha_{N + 1,\, k} = -2 + (-1) \cdot \sum_{j = i}^{N} (-1)^{i - j + 1} \alpha_{N,\, j} \cdot \alpha_{j - 1,\, i - 1} = \\
    \nonumber && -2 + 1 = -1
 \end{eqnarray}
\end{proof}

\subsection*{Application of the derived property \eqref{eq:expression_to_be_proven} to the Euler-Poisson equation}
\par An optimization problem \eqref{solution_path_minimization_problem1}
is under consideration.
\par The problem of finding vector functions for which the speed of accumulating the parameter $\sigma(t)$ is constant and is constrained at the boundaries is stated in the manuscript. Derivation of the differential equation satisfied by the solutions of the optimization problem applies Euler-Poisson equation and leads to the following expression from \eqref{eq:I_n_x_expressed_x_v}:
\begin{equation}\label{eq:result_of_euler_poisson}
 (2n - 1) \left[x^{(n)} \right]^2 + 2 \sum_{k = 2}^{n}(-1)^{k - 1} \binom{n}{k}\cdot \frac{d^{k - 1}}{d\sigma^{k - 1}}\left[x^{(n)} \cdot x^{(n-(k - 1))} \right]\, ,
\end{equation}
where
$$x^{(n)} \equiv \frac{d^{n}}{d\sigma^{n}} x(\sigma)\, .$$
Here I show how property \eqref{eq:expression_to_be_proven} helps to rewrite the expression \eqref{eq:result_of_euler_poisson} in a simpler form:
\begin{eqnarray}
\nonumber &&(2n - 1) \left[x^{(n)} \right]^2 + 2 \sum_{k = 2}^{n}(-1)^{k - 1} \binom{n}{k}\cdot \frac{d^{k - 1}}{d\sigma^{k - 1}}\left[x^{(n)} \cdot x^{(n-(k - 1))} \right] = \\
\nonumber && \left[x^{(n)}\right]^2 - 2  x^{(n-1)} x^{(n+1)} + 2 x^{(n-2)} x^{(n+2)}+ \ldots + (-1)^{n-1} \cdot 2 x' x^{(2n - 1)} = \\
\label{eq_simpler_form} && \left[x^{(n)}\right]^2 + 2 \sum_{i = 1}^{n - 1} (-1)^{i}  x^{(n - i)} x^{(n + i)}\; .
\end{eqnarray}
Interestingly, differentiation of \eqref{eq_simpler_form} leads to simply a constant multiplied by a single product of the 1st and $2n$-th order derivatives of $x$ with respect to $\sigma$:
\begin{equation*}
    \frac{d}{d\sigma}\left\{ \left[x^{(n)}\right]^2 + 2 \sum_{i = 1}^{n - 1} (-1)^{i}  x^{(n - i)} x^{(n + i)} \right\} =  2 \cdot (-1)^{n - 1} \cdot x' x^{(2 n)}\, .
\end{equation*}

\subsection*{Derivation of the expression \eqref{eq_simpler_form}}
Applying Leibnitz rule to \eqref{eq:result_of_euler_poisson}
\begin{eqnarray}
\nonumber && (2n - 1) \left[x^{(n)} \right]^2 + 2 \sum_{k = 2}^{n}(-1)^{k - 1} \binom{n}{k}\cdot \frac{d^{k - 1}}{d\sigma^{k - 1}}\left[x^{(n)} \cdot x^{(n-(k - 1))} \right] = \\
\nonumber && (2n - 1) \left[x^{(n)} \right]^2 + 2 \sum_{k = 2}^{n}(-1)^{k - 1} \binom{n}{k}\cdot \sum_{j = 0}^{k - 1}\binom{k - 1}{j}x^{(n + j)} x^{(n - (k - 1) + k - 1- j)} = \\
\nonumber && (2n - 1) \left[x^{(n)} \right]^2 + 2 \sum_{j = 0}^{n - 1} x^{(n + j)} x^{(n - j)} \left[ \sum_{k = j + 1}^{n}  (-1)^{k - 1} \binom{k - 1}{j} \cdot \binom{n}{k} \right] - 2 n \left[x^{(n)} \right]^2\, .
\end{eqnarray}
Substituting $i = j + 1$:
\begin{eqnarray}
\nonumber && (2n - 1) \left[x^{(n)} \right]^2 + 2 \sum_{k = 2}^{n}(-1)^{k - 1} \binom{n}{k}\cdot \frac{d^{k - 1}}{d\sigma^{k - 1}}\left[x^{(n)} \cdot x^{(n-(k - 1))} \right] = \\
\nonumber && - \left[x^{(n)} \right]^2 + 2 \sum_{i = 1}^{n} x^{(n + (i - 1))} x^{(n - (i - 1))} \left[ \sum_{k = i}^{n}  (-1)^{k - 1} \binom{k - 1}{i - 1} \cdot \binom{n}{k} \right] = \\
\nonumber  &&  - \left[x^{(n)} \right]^2 + 2 \sum_{i = 1}^{n} x^{(n + (i - 1))} x^{(n - (i - 1))} \left[ \sum_{k = i}^{n}  (-1)^{- k + 1} \binom{k - 1}{i - 1} \cdot \binom{n}{k} \right] = \\
\label{eq:mark_substitution}  &&  - \left[x^{(n)} \right]^2 + 2 \sum_{i = 1}^{n} x^{(n + (i - 1))} x^{(n - (i - 1))} \left[ \sum_{k = i}^{n} (-1) ^{i} \cdot (-1)^{i - k + 1} \binom{k - 1}{i - 1} \cdot \binom{n}{k} \right] = \\
\nonumber  &&  - \left[x^{(n)} \right]^2 + 2 \sum_{i = 1}^{n} (-1)^{i + 1} x^{(n + (i - 1))} x^{(n - (i - 1))} = \left[x^{(n)} \right]^2 + 2 \sum_{i = 2}^{n} (-1)^{i + 1} x^{(n + (i - 1))} x^{(n - (i - 1))} = \\
\nonumber  &&  \left[x^{(n)} \right]^2 + 2 \sum_{j = 1}^{n - 1} (-1)^{j} x^{(n + j )} x^{(n - j)}\; .
\end{eqnarray}
Property \eqref{eq:expression_to_be_proven} was used for the expression in brackets in \eqref{eq:mark_substitution}.
$\;\;\;\;\;\;\;\;\;\Box$

\vspace{1cm}

\bibliographystyle{amsplain}

\bibliography{xbib}

\providecommand{\bysame}{\leavevmode\hbox to3em{\hrulefill}\thinspace}
\providecommand{\MR}{\relax\ifhmode\unskip\space\fi MR }
\providecommand{\MRhref}[2]{%
  \href{http://www.ams.org/mathscinet-getitem?mr=#1}{#2}
}
\providecommand{\href}[2]{#2}
\begin{thebibliography}{100}

\bibitem{Qiu.Adler.etc:2010}
A~Aui, M~Adler, D~Croceti, MI~Miller, and SH~Mostofsky, \emph{Basal ganglia
  shapes predict social, communication, and motor dysfunctions in boys with
  autism spectrum disorder}, J Am Acad Child Adolesc Psychiatry \textbf{49}
  (2010), no.~6.

\bibitem{Averbeck_etc_Georgopoulos1:2003}
Bruno~B. Averbeck, David~A. Crowe, Matthew~V. Chafee, and Apostolos~P.
  Georgopoulos, \emph{Neural activity in prefrontal cortex during copying
  geometrical shapes 1. single cells encode shape, sequence, and metric
  parameters}, Experimental Brain Research \textbf{150} (2003), no.~2,
  127--141.

\bibitem{Averbeck_etc_Georgopoulos2:2003}
\bysame, \emph{Neural activity in prefrontal cortex during copying geometrical
  shapes 2. decoding shape segments from neural ensembles}, Experimental Brain
  Research \textbf{150} (2003), no.~2, 142--153.

\bibitem{Ben_Itzkah.Karniel:2008}
Shay Ben-Itzkah and Amir Karniel, \emph{Minimum acceleration criterion with
  constraints implies bang-bang control as an underlying principle for optimal
  trajectories of arm reaching movements}, Neural Computation \textbf{20}
  (2008), no.~3, 779--812.

\bibitem{Bennequin:2009}
Daniel Bennequin, Ronit Fuchs, Alain Berthoz, and Tamar Flash, \emph{Movement
  timing and invariance arise from several geometries}, PLoS Comput Biol
  \textbf{5} (2009), no.~7.

\bibitem{Berns.Sejnowski:1996}
G.S. Berns and T.~J. Sejnowski, \emph{How the basal ganglia make decisions},
  1996.

\bibitem{Bizzi_Mussa_Ivaldi_Giszter_1991}
Emilio Bizzi, Ferdinando~A. Mussa-Ivaldi, and Simon Giszter, \emph{Computations
  underlying the execution of movement: a biological perspective}, Science
  \textbf{253} (1991), no.~5017, 287--291.

\bibitem{Bright:2006}
Ido Bright, \emph{Motion planning through optimization}, MSc Thesis, Department
  of Computer Science and Applied Mathematics, Weizmann Institute of Science,
  2006.

\bibitem{Burdet.Milner:1998}
E~Burdet and TE~Milner, \emph{Quantization of human motions and learning of
  accurate movements}, Biol Cybern \textbf{78} (1998), 307--318.

\bibitem{Calabi_Olver_Tannenbaum_1996}
Eugeniu Calabi, Peter~J. Olver, and Allen Tannenbaum, \emph{Affine geometry,
  curve flows, and invariant numerical approximations}, Adv. in Math.
  \textbf{124} (1996), 154--196, URL:
  http://www.math.umn.edu/~olver/papers.html.

\bibitem{Casile:2010}
A~Casile, E~Dayan, V~Caggiano, T~Hendler, T~Flash, and MA~Giese, \emph{Neuronal
  encoding of human kinematic invariants during action observation}, Cereb
  Cortex \textbf{20} (2010), no.~7, 1647--55.

\bibitem{Cheng.Anderson:2012}
J.J. Cheng and W.S. Anderson, \emph{The role of the basal ganglia in decision
  making: a new fmri study}, Neurosurgery \textbf{71} (2012), no.~4.

\bibitem{d.Avella.Portone.Fernandez.Bizzi:2006}
A~d'Avella, A~Portone, L~Fernandez, and F~Lacquaniti, \emph{Control of
  fast-reaching movements bymuscle synergy combinations}, J Neurosci
  \textbf{26} (2006), 7791--7810.

\bibitem{d.Avella.Saltiel.Bizzi:2003}
A~d'Avella, P~Saltiel, and E~Bizzi, \emph{Combinations of muscle synergies in
  the construction of a natural motor behavior}, Nat Neurosci \textbf{6}
  (2003), 300--308.

\bibitem{Dayan.etal.Flash:2007}
Eran Dayan, Antonino Casile, Nava Levit-Binnun, Martin Giese, Talma Hendler,
  and Tamar Flash, \emph{Neural representations of kinematic laws of motion:
  evidence for action-perception coupling}, Proc Natl Acad Sci USA \textbf{104}
  (2007), 20582--20587.

\bibitem{Dayan.Inzelberg.Flash:2012}
Eran Dayan, Rivka Inzelberg, and Tamar Flash, \emph{Altered perceptual
  sensitivity to kinematic invariants in parkinson's disease}, PLoS One
  \textbf{7} (2012), no.~2.

\bibitem{Dickey.Amit.Hatsopoulos:2013}
Adam~S. Dickey, Yali Amit, and Nicholas~G. Hatsopoulos, \emph{Heterogeneous
  neural coding of corrective movements in motor cortex}, Frontiers in Neural
  Circuits \textbf{7} (2013), Article 51.

\bibitem{Ding.Gold:2013}
Long Ding and Joshua~I. Gold, \emph{The basal ganglia's contributions to
  perceptual decision-making}, Neuron \textbf{79} (2013), no.~4, 640--649.

\bibitem{Dipietro.Poizner.Krebs:2014}
Laura Dipietro, Howard Poizner, and Hermano~I. Krebs, \emph{Spatiotemporal
  dynamics of online motor correction processing revealed by high-density
  electroencephalography}, Journal of Cognitive Neuroscience \textbf{26}
  (2014), no.~9, 196--1980.

\bibitem{Endres:2013}
Dominik Endres, Yaron Meirovitch, Tamar Flash, and Martin~A. Giese,
  \emph{Segmenting sign language into motor primitives with bayesian binning},
  Front Comput Neurosci. \textbf{7} (2013), no.~68.

\bibitem{Fishbach.etc.Houk:2005}
A.~Fishbach, S.A. Roy, C.~Bastianen, L.E. Miller, and J.C. Houk,
  \emph{Kinematic properties of on-line error corrections in the monkey},
  Experimental Brain Research \textbf{164} (2005), no.~4, 442--457.

\bibitem{Flash_Handzel:2007}
Tamar Flash and Amir~A. Handzel, \emph{Affine differential geometry analysis of
  human arm movements}, Biological Cybernetics \textbf{96} (2007), no.~6,
  577--601.

\bibitem{Flash_Henis_1991}
Tamar Flash and Ealan Henis, \emph{Arm trajectory modification during reaching
  towards visual targets}, Journal of Cognitive Neuroscience \textbf{3} (1991),
  no.~3, 220--230.

\bibitem{Flash.Henis.Inzelberg.Korczyn:1992}
Tamar Flash, Ealan Henis, Rivka Inzelberg, and A.D. Korczyn, \emph{Timing and
  sequencing of human arm trajectories: Normal and abnormal motor behaviour},
  Human Movement Science \textbf{11} (1992), 83--100.

\bibitem{Flash.Hochner:2005}
Tamar Flash and Binyamin Hochner, \emph{Motor primitives in vertebrates and
  invertebrates}, Current Opinion in Neurobiology \textbf{15} (2005), 1--7.

\bibitem{Flash_Hogan_1985}
Tamar Flash and Neville Hogan, \emph{The coordination of arm movements: an
  experimentally confirmed mathematical model}, The Journal of Neuroscience
  \textbf{5} (1985), no.~7, 1688--1703.

\bibitem{Fuchs:2010}
Ronit Fuchs, \emph{Geometry invariants and optimization}, PhD thesis.\\
  http://lib-phds1.weizmann.ac.il/Dissertations/Fuchs\_Ronit\_2011.pdf (Cited
  on April 2015), 2010.

\bibitem{Gelfand_Fomin_2000}
I.~Gelfand and S.~Fomin, \emph{Calculus of variations}, Dover Books on
  Mathematics, 2000.

\bibitem{Georgopoulos_etc_1982}
Apostolos~P. Georgopoulos, John~F. Kalaska, Roberto Caminiti, and Joe~T.
  Massey, \emph{On the relations between the direction of two-dimensional arm
  movements and cell discharge in primate motor cortex}, The Journal of
  Neuroscience \textbf{2} (1982), no.~11, 1527--1537.

\bibitem{Giszter.etal:2007}
S~Giszter, V~Patil, and C~Hart, \emph{Primitives, premotor drives, and
  patterngeneration: a combined computational and neuroethological
  perspective}, Prog Brain Res \textbf{165} (2007), 323--346.

\bibitem{Giszter.etal:2013}
SF~Giszter and CB~Hart, \emph{Motor primitives and synergies in the spinal cord
  and after injury-the current state of play}, Annals of the New York Academy
  of Sciences (2013), 114--126.

\bibitem{Giszter.Mussa-Ivaldi.Bizzi:1993}
Simon~F. Giszter, Ferdinando~A. Mussa-Ivaldi, and Emilio Bizzi,
  \emph{Convergent force fields organized in the frog's spinal cord}, The
  Journal of Neuroscience \textbf{13} (1993), no.~2, 467--491.

\bibitem{Guggenheimer:1977}
Heinrich~W. Guggenheimer, \emph{Differential geometry}, Dover, New York, 1977.

\bibitem{Handzel_Flash_2001_}
Amir Handzel and Tamar Flash, \emph{Affine invariant edge completion with
  affine geodesics}, IEEE Workshop on Variational and Level Set Methods
  (VLSM'01) (2001).

\bibitem{Handzel_Flash_1999}
Amir~A. Handzel and Tamar Flash, \emph{Geometric methods in the study of human
  motor control}, Cognitive Studies, Bulletin of the Japanese Cognitive Science
  Society \textbf{6} (1999), no.~3, 309--321.

\bibitem{Hanuschkin.Herrmann.Morrison.Diesmann:2011}
Alexander Hanuschkin, J.~Michael Herrmann, Abigail Morrison, and Markus
  Diesmann, \emph{Compositionality of arm movements can be realized by
  propagating synchrony}, J Comput Neurosci. \textbf{30} (2011), no.~3,
  675--697.

\bibitem{Kadmon.Flash:2014}
Naama~Kadmon Harpaz, Tamar Flash, and Ilan Dinstein, \emph{Scale-invariant
  movement encoding in the human motor system}, Neuron \textbf{81} (2014),
  452--462.

\bibitem{Harris_Wolpert_1998}
Christopher~M. Harris and Daniel~M. Wolpert, \emph{Signal-dependent noise
  determines motor planning}, Nature \textbf{394} (1998), 780--784.

\bibitem{Hart.Giszter:2004}
CB~Hart and SF~Giszter, \emph{Modular premotor drives and unit bursts as
  primitives for frog motor behaviors}, J Neurosci \textbf{24} (204),
  5269--5282.

\bibitem{Hatsopoulos.Amit:2012}
Nicholas~G. Hatsopoulos and Yali Amit, \emph{Synthesizing complex movement
  fragment representations from motor cortical ensembles}, J Physiol Paris
  \textbf{106} (2012), 112--119.

\bibitem{Hatsopoulos.Xu.Amit:2007}
Nicholas~G. Hatsopoulos, Qingqing Xu, and Yali Amit, \emph{Encoding of movement
  fragents in the motor cortex}, J Neuroscience \textbf{27} (2007), no.~19,
  5105--5114.

\bibitem{HochermanS.WiseSP:1991}
S.~Hocherman and S.~P. Wise, \emph{Effects of hand movement path on motor
  cortical activity in awake, behaving rhesus-monkeys}, Experimental Brain
  Research \textbf{83} (1991), no.~2, 285--302.

\bibitem{Hogan:1984}
N.~Hogan, \emph{An organizing principle for a class of voluntary movements}, J.
  Neurosci. \textbf{83} (1984), no.~2, 2745--2754.

\bibitem{Ivanenko.Grasso.Macellari.Lacquaniti:2002}
YP~Iavnenko, R~Grasso, V~Macellari, and F~Lacquaniti, \emph{Two-thirds power
  law in human locomotion: role of ground contact forces}, Neuroreport
  \textbf{13} (2002), 1171--1174.

\bibitem{Ivanenko.Poppele.Lacquaniti:2004}
YP~Iavnenko, RE~Poppele, and F~Lacquaniti, \emph{Five basic muscle activation
  patterns account for muscle activity during human locomotion}, J. PhysioL
  \textbf{556} (2004), 267--282.

\bibitem{Kamke:1959}
E.~Kamke, \emph{Differentialgleichungen: Losungsmethoden und losungen},
  Akademische Verlagsgesellschaft Geest \& Portig, Leipzig, 1959.

\bibitem{Kargo.Giszter:2000}
William Kargo and Simon Giszter, \emph{Rapid correction of aimed movements by
  summation of force-field primitives}, Journal of Neuroscience \textbf{20}
  (2000), no.~1, 409--426.

\bibitem{Kolmogorov:1988}
Andrey Kolomogorov, \emph{Mathematics -- science and profession}, Nauka,
  Moscow, 1988.

\bibitem{Krebs_et_al_1999}
H.I. Krebs, M.L. Aisen, B.T. Volpe, and N.~Hogan, \emph{Quantization of
  continuous arm movements in humans with brain injury}, Proceedings of the
  National Academy of Science of the USA \textbf{96} (1999), no.~8, 4645--4649.

\bibitem{Lacquaniti_Terzuolo_Viviani_1983}
Francesco Lacquaniti, Carlo Terzuolo, and Paolo Viviani, \emph{The law relating
  the kinematic and figural aspects of drawing movements}, Acta Psychologica
  \textbf{54} (1983), 115--130.

\bibitem{Levit-Binnun.Schechtman.Flash:2006}
Nava Levit-Binnun, Edna Schechtman, and Tamar Flash, \emph{On the similarities
  between the perception and production of elliptical trajectories},
  Experimental Brain Research \textbf{172} (2006), no.~4, 533--55.

\bibitem{Maoz:2007}
Uri Maoz, \emph{Trajectory formation and units of action, from two- to
  three-dimensional motion}, The Hebrew Univercity of Jerusalem, PhD Thesis,
  2007.

\bibitem{Maoz:2009}
Uri Maoz, Alain Berthoz, and Tamar Flash, \emph{Complex unconstrained
  three-dimensional hand movement and constant equi-affine speed}, J
  Neurophysiol \textbf{101} (2009), no.~2, 1002--15.

\bibitem{Maoz:2014}
Uri Maoz and Tamar Flash, \emph{Spatial constant equi-affine speed and motion
  perception}, J Neurophysiol \textbf{111} (2014), no.~2, 336--49.

\bibitem{Meirovitch:2014}
Yaron Meirovitch, \emph{Movement decomposition and compositionality based on
  geometric and kinematic principles}, PhD thesis, Department of Computer
  Science and Applied Mathematics, Weizmann Institute of Science, 20014.

\bibitem{Meirovitch.Flash:2013}
Yaron Meirovitch and Tamar Flash, \emph{Report on segmentation of trajectories
  into the underlying primitives and syntactic rules for their compositionality
  based on differential geometry approaches}, AMARSi ICT-248311, D1.4. URL:
  http://www.amarsi-project.eu/sites/www.amarsi-project.eu/files/D1\_4.pdf,
  2013.

\bibitem{Meirovitch.Harris.Dayan.Arieli.Flash:2015}
Yaron Meirovitch, Hila Harris, Eran Dayan, Amos Arieli, and Tamar Flash,
  \emph{Alpha and beta band event-related desynchronization reflects kinematic
  regularities}, J Neuroscience \textbf{35} (2015), no.~4, 1627 -- 1637.

\bibitem{Moran_Schwartz_spiral_1999}
Daniel~W. Moran and Andrew~B. Schwartz, \emph{Motor cortical activity during
  drawing movements: population representation during spiral tracing}, Journal
  of Neurophysiology \textbf{82} (1999), 2693--2704.

\bibitem{Moran_Schwartz_reaching_1999}
\bysame, \emph{Motor cortical activity representation of speed and direction
  during reaching}, Journal of Neurophysiology \textbf{82} (1999), 2676--2692.

\bibitem{Morasso.MussaIvaldi:1982}
P~Morasso and FA~Mussa Ivaldi, \emph{Trajectory formation and handwriting: a
  computational mode}, Biol Cybern \textbf{45} (1982), 131--142.

\bibitem{Mussa_Ivaldi_Bizzi_2000}
FA~Mussa-Ivaldi and E~Bizzi, \emph{Motor learning through the combination of
  primitives}, Philos Trans R Soc Lond B Biol Sci \textbf{355} (2000),
  1755--1769.

\bibitem{Mussa_Ivaldi_Solla_2000}
FA~Mussa-Ivaldi and SA~Solla, \emph{Neural primitives for motion control}, IEEE
  J Ocean Eng \textbf{29} (2000), 640--650.

\bibitem{Nichols:1994}
Richard Nichols, \emph{A biomechanical perspective on spinal mechanisms of
  coordinated muscular action: an architecture principle}, Acta Anatomia
  \textbf{151} (1994), 1--13.

\bibitem{Petitot:2003}
Jean Petitot, \emph{The neurogeometry of pinwheels as a sub-riemannian contact
  structure}, Journal of Physiology - Paris \textbf{97} (2003), 265--309.

\bibitem{Pham.Bennequin:2012}
Quang-Cuong Pham and Daniel Bennequin, \emph{Affine invariance of human hand
  movements: a direct test}, arXiv (Quantitative Biology) (2012).

\bibitem{Plotnik.Flash.etc:1998}
Meir Plotnik, Tamar Flash, Rivka Inzelberg, Edna Schechtman, and Amos~D
  Korczyn, \emph{Motor switching abilities in parkinson's disease and old age:
  temporal aspects}, Neurol Neurosurg Psychiatry \textbf{65} (1998), 328--337.

\bibitem{Pollick:2009}
Frank~E. Pollick, Uri Maoz, Amir Handzel, Peter Giblin, Guillermo Sapiro, and
  Tamar Flash, \emph{Three-dimensional arm movements at constant equi-affine
  speed}, Cortex \textbf{45} (2009), 325--39.

\bibitem{Pollick_Shapiro_1997}
Frank~E. Pollick and Guillermo Sapiro, \emph{Constant affine velocity predicts
  the 1/3 power law of planar motion perception and generation}, Vision
  Research \textbf{37} (1997), no.~3, 347--353.

\bibitem{Polyakov:2001}
Felix Polyakov, \emph{Analysis of monkey scribbles during learningin the
  framework of models of planar hand motion}, MSc. thesis. Department of
  Computer Science and Applied Mathematics, Weizmann Institute of Science. URL:
  https://dl.dropboxusercontent.com/u/18260609/Texts/PolyakovThesisMSc.pdf
  (Cited on April 2015), 2001.

\bibitem{Polyakov:2006}
\bysame, \emph{Motion primitives and invariants in monkey scribbling movements:
  analysis and mathematical modeling of movement kinematics and neural
  activities}, PhD thesis. Department of Computer Science and Applied
  Mathematics, Weizmann Institute of Science. URL:
  https://dl.dropboxusercontent.com/u/18260609/Texts/PolyakovThesisPhD.pdf
  (Cited on April 2015), 2006.

\bibitem{Polyakov_equation_arXiv:2014}
\bysame, \emph{A class of differential equations for merging movements'
  kinematic optimality with geometric invariance},
  http://arxiv.org/abs/1409.0675 (September 03, 2014).

\bibitem{Polyakov_et_al_PLoS_C_B:2009}
Felix Polyakov, Rotem Drori, Yoram Ben-Shaul, Moshe Abeles, and Tamar Flash,
  \emph{A compact representation of drawing movements with sequences of
  parabolic primitives}, PLoS Computational Biology \textbf{5} (2009), no.~7.

\bibitem{Polyakov_etc_2001}
Felix Polyakov, Tamar Flash, Moshe Abeles, Yoram Ben-Shaul, Rotem Drori, and
  Zoltan Nadasdy, \emph{Analysis of motion planning and learning in monkey
  scribbling movements}, Proceedings of the tenth biennial conference of the
  International Graphonomics Society. The University of Nijmegen, Nijmegen, The
  Netherlands. URL: \\
  https://dl.dropboxusercontent.com/u/18260609/Texts/IGS2001.pdf, 2001.

\bibitem{Polyakov_et_al_B.Cyb:2009}
Felix Polyakov, Eran Stark, Rotem Drori, Moshe Abeles, and Tamar Flash,
  \emph{Parabolic movement primitives and cortical states: merging optimality
  with geometric invariance}, Biological Cybernetics \textbf{100} (2009),
  no.~2, 159--184.

\bibitem{Prat.Stocco:2012}
Chantel~S. Prat and Andrea Stocco, \emph{Information routing in the basal
  ganglia: Highways to abnormalconnectivity in autism? comment on "disrupted
  cortical connectivity theory as anexplanatory model for autism spectrum
  disorders" by kana et al.}, Physics of Life Reviews \textbf{9} (2012).

\bibitem{Richardson_Flash_2003}
Magnus J.~E. Richardson and Tamar Flash, \emph{Comparing smooth arm movements
  with the two-thirds power law and the related segmented-control hypothesis},
  Journal of Neuroscience \textbf{22} (2002), no.~18, 8201--8211.

\bibitem{Rohrer.Hogan:2003}
B~Rohrer and N~Hogan, \emph{Avoiding spurious submovement decompositions: a
  globally optimal algorithm}, Biol Cybern \textbf{89} (2003), 190--199.

\bibitem{Rohrer.Hogan:2006}
\bysame, \emph{Avoiding spurious submovement decompositionsii: a scattershot
  algorithm}, Biol Cybern \textbf{94} (2006), 409--414.

\bibitem{Sanger:2000}
TD~Sanger, \emph{Human arm movements described by a low-dimensional
  superposition of principal components}, J Neursci \textbf{20} (2000),
  1066--1072.

\bibitem{Schrader.Diesmann.Morrison:2011}
Sven Schrader, Markus Diesmann, and Abigail Morrison, \emph{A compositionality
  machine realized by a hierarchic architecture of synfire chains}, Front.
  Comput. Neurosci. \textbf{4} (2011), Article 154.

\bibitem{Schwartz:1994}
Andrew~B. Schwartz, \emph{Direct cortical representation of drawing}, Science
  \textbf{265} (1994), 540--542.

\bibitem{Schwartz_1993}
\bysame, \emph{Motor cortical activity during drawing movements: population
  representation during sinusoid tracing}, Journal of Neurophysiology
  \textbf{70} (July 1992), no.~1, 28--36.

\bibitem{Schwartz_Moran_1999}
Andrew~B. Schwartz and Daniel~B. Moran, \emph{Motor cortical activity during
  drawing movements: population representation during lemniscate tracing},
  Journal of Neurophysiology \textbf{82} (1999), no.~5, 2705--2718.

\bibitem{Seculer.Nash:1972}
Robert Sekuler and David Nash, \emph{Speed of size scaling in human vision},
  Psychonomic Science \textbf{27}, no.~2.

\bibitem{Shanecih.etc.Brown.Williams:2012}
Maryam~M. Shanechi, Rollin~C. Hu, Marissa Powers, Gregory~W. Wornell, Emery~N.
  Brown, and Ziv~M. Williams, \emph{Neural population partitioning and a
  concurrent brain-machine interface for sequential motor function}, Nat
  Neurosci. \textbf{15} (2012), no.~2.

\bibitem{Shirokovy_1959}
P.A. Shirokov and A.P. Shirokov, \emph{Affine differential geometry}, GIFML,
  Moscow, 1959, German edition: Affine differentialgeometrie, Teubner, 1962.
  English translation of relevant parts of \cite{Shirokovy_1959} for
  non-commercial use in research and teaching can be obtained from the author
  of the manuscript (FP) by request.

\bibitem{Shpigelmacher:2006}
Michael Shpigelmacher, \emph{Directional-geometrical approach to via-point
  movement}, MSc Thesis, 2006.

\bibitem{sosnik.chaim.flash:2015}
Ronen Sosnik, Eliyahu Chaim, and Tamar Flash, \emph{Stopping is not an option:
  the evolution of unstoppable motion elements (primitives)}, J Neurophysiol
  \textbf{114} (2015), no.~2, 846--856.

\bibitem{sosnik.flash.sterkin.hauptmann.karni:2014}
Ronen Sosnik, Tamar Flash, Anna Sterkin, Bjoern Hauptmann, and Avi Karni,
  \emph{The activity in the contralateral primary motor cortex, dorsal premotor
  and supplementary motor area is modulated by performance gains}, Front Hum
  Neurosci. \textbf{8} (2014), no.~1, 201.

\bibitem{Sosnik_etc_2004}
Ronen Sosnik, Bjoern Hauptmann, Avi Karni, and Tamar Flash, \emph{When practice
  leads to co-articulation: the evolution of geometrically defined movement
  primitives}, Experimental Brain Research \textbf{156} (2004).

\bibitem{Sosnik.Shemesh.Abeles:2007}
Ronen Sosnik, Moshe Shemesh, and Moshe Abeles, \emph{The point of no return in
  planar hand movements: an indication of the existence of high level motion
  primitives}, Cogn Neurodyn. \textbf{1} (2007), no.~4, 341--358.

\bibitem{Tanaka.Krakauer.Quan:2006}
Hirokazu Tanaka, John~W. Krakauer, and Ning Qian, \emph{An optimization
  principle for determining movement duration}, J Neurophysiol \textbf{95}
  (2006), no.~6, 3875 -- 3886.

\bibitem{Tanaka.Sejnowski:2014}
Hirokazu Tanaka and Terrence~J. Sejnowsi, \emph{Mootr adaptation and
  generalization of reaching movements using motor primitives based on spatial
  coordinates}, J Neurophysiol (2014).

\bibitem{Thoroughman_Shadmehr_2000}
Kurt~A. Thoroughman and Reza Shadmehr, \emph{Learning of action through
  adaptive combination of motor primitives}, Nature \textbf{407} (2000),
  742--747, Interesting.

\bibitem{Todorov_Jordan_1998}
Emanuel Todorov and Michael~I. Jordan, \emph{Smoothness maximization along a
  predefined path accurately predicts the speed profiles of complex arm
  movements}, Journal of Neurophysiology \textbf{80} (1998), no.~2, 696--714.

\bibitem{Torres.Andersen:2006}
Elizabeth Torres and Richard Andersen, \emph{Space-time separation during
  obstacle avoidance learning in monkeys}, J Neurophysiol \textbf{96} (2006),
  162--167.

\bibitem{Torres.Heilman.Poizner:2011}
Elizabeth~B. Torres, \emph{Impaired endogenously evoked automated reaching in
  parkinson's disease}, J Neurosience \textbf{31} (2011), no.~49, 17848--17863.

\bibitem{Torres:2013B}
\bysame, \emph{Atypical signatures of motor variability found in an individual
  with asd}, Neurocase: The Neural Basis of Cognition \textbf{19} (2013),
  no.~2.

\bibitem{Torres.etal:2013}
Elizabeth~B. Torres, Maria Brincker, Robert~W. Isenhower, Polina Yanovich,
  Kimberly~A. Stigler, John~I. Nurnberger, Dimitris~N. Metaxas, and Jorge~V.
  Jose, \emph{Autism: the micro-movement perspective}, Front. Integr. Neurosci.
  \textbf{7} (2013).

\bibitem{Tresch.Saltiel.Bizzi:1999}
MC~Tresch, P~Saltiel, and E~Bizzi, \emph{The construction of movement by the
  spinal cord}, Nat Neurosci \textbf{2} (1999), 162--167.

\bibitem{Uno.Suzuki.Kawato:1989}
Y~Uno, R~Suzuki, and M~Kawato, \emph{Formation and control of optimal
  trajectory in human multijoint arm movement}, Biol Cybern \textbf{61} (1989),
  89--101.

\bibitem{van.Zuelen.Gielen.derGon.Denier:1988}
EJ~van Zuylen, CC~Gielen, and JJ~van~der Gon~Denier, \emph{Coordination and
  inhomogeneous activation of human arm muscles during isometric torques}, J
  Neurophysiol \textbf{60} (1988), 1523--1548.

\bibitem{Vieilledent.Kerlirzin.Dalbera.Berthoz:2001}
Stéphane Vieilledenta, Yves Kerlirzina, Stéphane Dalberab, and Alain Berthoz,
  \emph{Relationship between velocity and curvature of a human locomotor
  trajectory}, Neuroscience Letters \textbf{305} (2001), no.~1, 65--69.

\bibitem{Viviani_Stucchi_1992}
P.~Viviani and N.~Stcucchi, \emph{Biological movements look uniform: evidence
  of motor-perceptual interactions}, Journal of Experimental Psychology: Human
  Perception and Performance \textbf{18} (1992), no.~3, 603--623.

\bibitem{Viviani_Flash_1995}
Paolo Viviani and Tamar Flash, \emph{Minimum-jerk, two-thirds power law, and
  isochrony: converging approaches to movement planning}, Journal of
  Experimental Psychology: Human Perception and Performance \textbf{21} (1995),
  no.~1, 233--242.

\bibitem{viviani.terzuolo:1982}
Paolo Viviani and Carlo Terzuolo, \emph{Trajectory determines movement
  dynamics}, Neuroscience \textbf{7} (1982), 431--437.

\bibitem{Woch.Plamondon:2010}
Anna Woch and Réjean Plamondon, \emph{Characterization of bi-directional
  movement primitives and their agonist-antagonist synergy with the
  delta-lognormal model}, Motor Control \textbf{14} (2010), no.~1, 1--25.

\bibitem{Woch.Plamondon.OReilly:2011}
Anna Woch, Réjean Plamondon, and Christian O'Reilly, \emph{Kinematic
  characteristics of bidirectional delta-lognormal primitives in young and
  older subjects}, Hum Mov Sci. \textbf{30} (2011), no.~1, 1--17.

\bibitem{Wu.Liu.Zhang.Hallett.Zheng.Chan:2014}
T~Wu, J~Liu, H~Zhang, M~Hallett, Z~Zheng, and P~Chan, \emph{Attention to
  automatic movements in parkinson's disease: modified automatic mode in the
  striatum}, Cereb Cortex (204).

\bibitem{Zelman.etc.Flash:2013}
Ido Zelman, Myriam Titon, Yoram Yekutieli, Shlomi Hanassy, Binyamin Hochner,
  and Tamar Flash, \emph{Kinematic decomposition and classification of octopus
  arm movements}, Frontiers in Computational Neuroscience \textbf{7} (2013).

\end{thebibliography}

\end{document}